\documentclass[10pt]{article}
\usepackage{graphicx} 
\usepackage{amsmath}
\usepackage{amssymb}
\usepackage{amsfonts}
\usepackage{amsthm}
\usepackage{enumitem}
\usepackage{array}
\usepackage{tikz-cd}
\usepackage{biblatex}
\usepackage{arydshln}
\newcommand{\Cls}{\mathbf{C}}
\newcommand{\Clsp}{\mathbf{C'}}
\newcommand{\Two}{\mathbf{2}}
\newcommand{\One}{\mathbf{1}}
\newcommand{\Total}{\mathbf{Total}}

\newcommand{\All}{\mathbf{All}}

\newcommand{\CKER}{\mathsf{CKE}}

\newcommand{\Cons}{\mathsf{Cons}}

\newcommand{\CKS}{\mathsf{CK}}

\newcommand{\mdom}[2]{#1^{#2}}
\newcommand{\mdomA}[1]{\mdom{A}{#1}}

\newcommand{\NP}{\mathbf{NP}}
\newcommand{\cNP}{\mathbf{coNP}}
\newcommand{\ER}{\exists \mathbb{R}}
\newcommand{\EC}{\exists \mathbb{C}}
\newcommand{\ECconj}{\exists \overline{\mathbb{C}}}
\newcommand{\PSPACE}{\mathbf{PSPACE}}

\newcommand{\DEXPTIME}{\mathbf{2EXPTIME}}

\newcommand{\weakSAT}[1]{\mathbf{SAT}^{\not=0}(#1)}
\newcommand{\strongSAT}[1]{\mathbf{SAT}^{=1}(#1)}
\newcommand{\weakCNF}[1]{\mathbf{3CNFSAT}^{\not=0}(#1)}
\newcommand{\strongCNF}[1]{\mathbf{3CNFSAT}^{=1}(#1)}
\newcommand{\bothCNF}[1]{\mathbf{3CNFSAT}(#1)}

\newcommand{\strongXOR}[1]{\mathbf{XORSAT}^{=1}(#1)}
\newcommand{\bothXOR}[1]{\mathbf{XORSAT}(#1)}
\newcommand{\SAT}[1]{\mathbf{SAT}(#1)}
\newcommand{\XSAT}{\mathbf{XSAT}}
\newcommand{\QHOM}{\mathbf{QHOM}}

\newcommand{\VARSAT}{\mathbf{VARSAT}}
\newcommand{\ar}{\mathsf{ar}}
\newcommand{\Lrings}{L_{\mathsf{rings}}}

\newcommand{\setn}{[n]}

\newcommand{\setd}{[d]}
\newcommand{\setr}{[r]}

\newcommand{\impl}{\rightarrow}
\newcommand{\cimpl}{\leftarrow}
\newcommand{\bimpl}{\leftrightarrow}
\newcommand{\xor}{\oplus}

\newcommand{\sg}{\sigma}
\newcommand{\Ms}{\mathcal{M}}
\newcommand{\Ns}{\mathcal{N}}
\newcommand{\Ts}{\mathcal{T}}
\newcommand{\Vs}{\mathcal{V}}

\newcommand{\RPTwo}{\mathbb{RP}^2}
\newcommand{\RAThree}{\mathbb{R}^3}
\newcommand{\FK}{\mathbb{K}}
\newcommand{\FZ}{\mathbb{Z}}
\newcommand{\FC}{\mathbb{C}}
\newcommand{\FR}{\mathbb{R}}
\newcommand{\iu}{i}

\newcommand{\cross}{\rightthreetimes}

\newcommand{\ocomp}[1]{\overline{#1}}

\newcommand{\rank}{\mathbf{rank}}
\newcommand{\im}{\mathbf{Im}}
\newcommand{\basis}{\mathsf{basis}}

\newcommand{\res}[2]{#1|_{#2}}

\newcommand{\oext}{\circledcirc}
\newcommand{\qs}{\backslash}

\newcommand{\Var}{\mathsf{Var}}
\newcommand{\Sub}{\mathsf{Sub}}
\newcommand{\Eq}{\mathsf{Eq}}
\newcommand{\CNF}{\mathsf{CNF}}

\newcommand{\Bnd}{\mathbf{B}}
\newcommand{\Pj}{\mathbf{P}}
\newcommand{\Oj}{\mathbf{O}}
\newcommand{\Inv}{\mathbf{I}}
\newcommand{\Hilb}{\mathcal{H}}

\newcommand{\BA}{\mathbf{BA}}
\newcommand{\pBA}{\mathbf{pBA}}

\newcommand{\Qd}{\mathbf{Q}_d}

\makeatletter
\def\namedlabel#1#2{\begingroup
   \def\@currentlabel{#2}%
   \label{#1}\endgroup
}
\makeatother

\theoremstyle{plain}
\newtheorem{theorem}{Theorem}[section]
\newtheorem{proposition}[theorem]{Proposition}
\newtheorem{corollary}[theorem]{Corollary}
\newtheorem{lemma}[theorem]{Lemma}

\theoremstyle{definition}
\newtheorem{definition}[theorem]{Definition}

\theoremstyle{remark}
\newtheorem{remark}[theorem]{Remark}

\title{Complexity of Satisfiability in Kochen-Specker Partial Boolean Algebras}
\author{Anuj Dawar and Nihil Shah}
\date{\today}

\begin{document}
\maketitle
\begin{abstract}
The Kochen-Specker no-go theorem established that hidden-variable theories in quantum mechanics necessarily admit contextuality.
This theorem is formally stated in terms of the partial Boolean algebra structure of projectors on a Hilbert space.
Each partial Boolean algebra provides a semantics for interpreting propositional logic.
In this paper, we examine the complexity of propositional satisfiablity for various classes of partial Boolean algebras.
We first show that the satisfiability problem for the class of non-trivial partial Boolean algebras is $\NP$-complete.
Next, we consider the satisfiability problem for the class of partial Boolean algebras arising from projectors on finite dimensional Hilbert spaces.
For real Hilbert spaces of dimension greater 2 and any complex Hilbert spaces of dimension greater than 3, we demonstrate that the satisfiablity problem is complete for the existential theory of the reals.
Interestingly, the proofs of these results make use of Kochen-Specker sets as gadgets.
As a corollary, we conclude that deciding quantum homomorphism in these fixed dimensions are also complete for the existential theory of the reals.
Finally, we show that the satisfiability problems for the class of all Hilbert spaces and all finite-dimensional Hilbert spaces is undecidable.
\end{abstract}

\section{Introduction}
The topic of this paper lies at the intersection of one notable aspect in each of twentieth century physics, mathematics, and computer science.
In physics, this aspect is the inherent contextuality in dealing with quantum physical systems as exemplified by the Kochen-Specker no-go theorem.
In mathematics, this aspect is the algebraic tradition of logic, where we study the partiality relaxation, arising from Kochen and Specker's work, of the standard Boolean algebra semantics for propositional logic.
In computer science, this aspect is understanding how the statement of Cook-Levin's theorem, demonstrating that propositional satisfiability is $\NP$-complete, changes under this partiality relaxation.

The Kochen-Specker theorem, appearing in~\cite{hiddenVariableQM} proved the impossibility of embedding quantum theory into any classical physical theory possibly involving hidden variables.
This resolved an open question that had plagued quantum theory since its inception.
A key step in formulating a mathematical resolution to the question was to establish a necessary condition on observables that would hold if quantum theory was embeddable into a classical theory.
Observables, in standard quantum theory, are represented as operators on the Hilbert space $\Hilb$ of a quantum system's states.
The necessary condition was that the partial algebra of operators $\Bnd(\Hilb)$, where algebraic operations are only defined on pairs of operators which commute, could be embedded into a single commutative algebra $\mathcal{A}$.
This formalisation addresses the question since only observables which are represented by commuting operators are physically commeasurable, i.e.\ can be performed together.
An immediate observation made by Kochen and Specker was that any embedding from $\Bnd(\Hilb)$ to a commutative algebra would induce a homomorphism of the partial Boolean algebra of idempotent elements $\Pj(\Hilb)$ into the two-element Boolean algebra.
Thus, the Kochen-Specker theorem was proved by showing that for any Hilbert space $\Hilb$ of dimension $\geq 3$, there is no homomorphism from $\Pj(\Hilb)$ into the two-element Boolean algebra.

The idempotent elements of $\Bnd(\Hilb)$ forming the partial Boolean algebra $\Pj(\Hilb)$ are the projectors of the Hilbert space $\Hilb$.
Projectors, or equivalently their associated subspace, can be identified with yes/no questions, i.e.\ propositions, about the quantum system corresponding to $\Hilb$.
The ``and'' and ``or'' operations $\Pj(\Hilb)$ are only defined on projectors which commute.
This respects the postulate that only commuting observables can be jointly measured, i.e.\ are commeasurable.

More generally, a partial Boolean algebra is equipped with a binary, reflexive, symmetric commeasurability relation which determines the domain of definition for the operations.
Since these operations are partial, the usual notion of substitution of variables in a term is adapted to \emph{meaningful substitution}.
Thus, each partial Boolean algebra provides an alternative semantics of propositional logic via this notion of meaningful substitution.

In a paper dedicated to Specker, Kochen~\cite{quantumReconstruction} argues that since these semantics restrict to physically meaningful substitutions, the partial Boolean algebra approach can be used to reconstruct significant features of reasoning in quantum mechanics.
Moreover, recent papers~\cite{noncommutativityAsColimit,logicContextuality} have been reviving research in this partial Boolean algebra semantics for propositional logic in order to provide a fresh perspective on `quantum logic'.
This would contrast with the historical Birkoff-von Neumann approach of propositional quantum semantics in terms of orthomodular posets.
These papers highlight that a key consequence of the Kochen-Specker theorem is that there exist propositional tautologies which are unsatisfiable under this alternative semantics.
A question the authors put forth in~\cite{logicContextuality} is to obtain characterisation of all such Kochen-Specker paradoxes.

This characterisation question is the primary motivation for this paper.
The traditional mathematical logic approach to this question had been partially investigated in~\cite{logicalQuantum}.
There, Kochen and Specker exhibited a sound and complete proof system for determining which propositional formulas are valid in all non-trivial partial Boolean algebras.
By contrast, we take a computer science approach to the question by exhibiting algorithms for recognising Kochen-Specker paradoxes in various interesting classes of partial Boolean algebras.
More precisely, we work with the dual question by exhibiting algorithms for deciding if an input propositional formula is satisfiable in a given class of partial Boolean algebras.
Proving complexity bounds for such an algorithm would in fact be a natural adaptation of a staple result of theoretical computer science: the Cook-Levin theorem.
The Cook-Levin theorem established that the computational complexity of deciding whether a propositional formula is satisfiable in total Boolean algebras is complete for non-deterministic polynomial time.
Thus, the topic of this paper is to establish the complexity bounds for different classes of partial Boolean algebras.

\textbf{Outline} In Section~\ref{sec:prelim}, we review notational preliminaries and background material on partial Boolean algebras which we use throughout the rest of the paper.
Section~\ref{sec:observations} formulates the weak and strong satisfiability decision problems, and their dual validity decision problems, for arbitrary classes $\Cls$ of partial Boolean algebras.
The section also makes some general observations about these problems and concludes with a sufficient condition on $\Cls$ to guarantee $\Cls$-weak and $\Cls$-strong satisfiability problems are at least as hard as the classical satisfiability problem.
We also demonstrate the satisfiability problem in which a pBA is given as additional input is $\NP$-complete.
Section~\ref{sec:all-sat} prove that satisfiability problem for the class of all non-trivial partial Boolean algebras is $\NP$-complete.
Section~\ref{sec:kochen-specker} reviews Kochen-Specker proofs how they can be encoded as graphs and propositional formulas.
We leverage these Kochen-Specker proofs in the next section by using them as gadgets in a hardness reduction.
In Section~\ref{sec:fixed-quantum-sat}, we show that for any fixed dimension $d \geq 3$, the satisfiability problem for $\Pj(\FR^d)$ is complete for the existential theory of the reals.
Similarly, we show that for any fixed dimension $d \geq 4$, the satisfiability problem for $\Pj(\FC^d)$ is complete for the existential theory of the reals.
This section also establishes the for dimension $d = 1 \text{ or } 2$, the satisfiability problem is $\NP$-complete.
Section~\ref{sec:quantum-sat} show that the satisfiablity problems for pBAs arising from the class of finite dimensional and the class of all Hilbert spaces are undecidable.

\textbf{Related past work} Variations of our results have been studied in the previous literature.
In particular, the projectors over real and complex Hilbert space $\Hilb$ can also be equipped with a orthomodular lattice structure $\Oj(\Hilb)$ rather than the partial Boolean algebra structure $\Pj(\Hilb)$.
Orthomodular lattices satisfy a weakening of the distributivity axiom of Boolean algebras.
The orthomodular lattice $\Oj(\Hilb)$ interprets $\vee$ and $\wedge$ as total operations which return the projector onto the union and intersection, respectively, of the input projectors' image subspaces.
In contrast to the partial operations of $\Pj(\Hilb)$, the total operations of $\Oj(\Hilb)$ do not have a simple algebraic definition in terms of operator addition and composition, see e.g.\ \cite{orthoLatticeOperations} for details.
Moreover, these operations arguably lack a physical interpretation~\cite{quantumReconstruction}.
Orthomodular lattices are the algebraic semantics used in the traditional Birkhoff-von Neumann approach to quantum logic.
It is known that  satisfiability of propositional formulas in $\Oj(\Hilb)$ is $\ER$-complete for any $d$-dimensional real or complex Hilbert space where $d \geq 3$~\cite{quantumSatOrtholatticeFixed}.
Theorem~\ref{thm:fixed-d-real-sat} and Theorem~\ref{thm:fixed-d-complex-sat} establish the same $\ER$-completeness result for $\Pj(\Hilb)$.
It has been shown that satisfiability in $\Oj(\Hilb)$ for any finite dimensional real or complex Hilbert space is undecidable~\cite{quantumSatOrtholatticeArbitrary}.
Theorem~\ref{thm:full-dim-sat} establishes the same undecidability result for $\Pj(\Hilb)$.

That there is no algorithm for deciding if a constraint satisfaction problem over a Boolean domain has a satisfying assignment of involutive operators in $\Bnd(\Hilb)$ for some finite-dimensional Hilbert space $\Hilb$ was shown in~\cite{quantumSchaefer}.
Constraint satisfaction problems over a Boolean domain can be identified with propositional formulas in conjunctive normal form and involutive operators in $\Bnd(\Hilb)$ are in bijection with projectors $\Pj(\Hilb)$.
Thus, Theorem~\ref{thm:full-dim-sat} is an algebraic reformulation of one of the undecidability results in~\cite{quantumSchaefer}.
We discuss the precise connection with Theorem~\ref{thm:full-dim-sat} in Remark~\ref{rem:connection-quantum-schaefer}.

Complexity bounds for deciding if a quantifier-free first-order formula is satisfied in a partial structure of a lattice-theoretic algebraic theory are established in~\cite{complexityLattices}.
In particular, it is shown that for the theory of Boolean algebras, this decision problem is $\NP$-complete.
However, the notion of partial Boolean algebra employed in this paper is different from the Kochen-Specker notion of partial Boolean algebra we employ in this paper.
In particular, partial Boolean algebras in the sense of~\cite{complexityLattices} are not equipped with a commeasurability relation and are embeddable into a total Boolean algebra.
In light of the Kochen-Specker theorem, these structures are uninteresting for the purpose of understanding contextuality.

\section{Preliminaries}
\label{sec:prelim}
We use $\mathbb{N}$, $\mathbb{R}$, and $\mathbb{C}$ to denote the set of natural, real and complex numbers, respectively.
We use the notation $\FK$ to denote either $\mathbb{R}$ or $\mathbb{C}$ as a field.
For every $n \in \mathbb{N}$, $\setn = \{1,\dots,n\}$.
An $n$-tuple is a function $\vec{a}\colon \setn \rightarrow A$ whose range is denoted $[\vec{a}]$.
We enumerate $n$-tuples through the notation $(a_1,\dots,a_n)$ where $a_i = \vec{a}(i)$.
Let $A^I$ denote the set of functions of type $I \rightarrow A$.
For $n \in \mathbb{N}$, we use the notation $A^n$ when $I = \setn$ and for $n$-tuples $\vec{a}$, we use the notation $A^{\vec{a}}$ for $I = [\vec{a}]$.
Given a function $f\colon A \rightarrow B$ and a subset $D \subseteq A^I$, we use $\res{f}{D}\colon D \rightarrow B^I$ to denote the function defined as $\res{f}{D}(z) = \lambda i \in I. f(z(i))$ for all $z \in D$.

We assume familiarity with the standard syntax and truth table semantics of propositional logic.
We also assume familiarity with the axioms of Boolean algebras.
Propositional formulas are formed from closing a countably infinite set of propositional variables $\Var$ and constants $\{\top\}$ under the unary symbol $\neg$ and binary symbols $\vee$,$\wedge$.
We use the symbols $\impl$, $\cimpl$, $\bimpl$, and $\xor$ as abbreviations for implication, converse implication, bi-implication, and exclusive-or.
Let $\Var(\phi)$ denote the subset of propositional variables which appear in $\phi$.
The notation $\phi(p_1,\dots,p_n)$ indicates that $\Var(\phi) \subseteq \{p_1,\dots,p_n\}$.
Let $\Sub(\phi)$ denote the set of subformulas of $\phi$.
A formula which is either a variable, a constant, or a negation $\neg$ applied to a variable or constant is a \emph{literal}.
We also assume familiarity with standard syntax and model-theoretic semantics of first-order logic.
To distinguish from propositional formulas, first-order formulas and their variables are denoted using captialised letters, i.e.\ $\Phi(P_1,\dots,P_n)$.
\begin{definition}
A \emph{partial Boolean algebra (pBA)} $A$ is a set equipped with
\begin{itemize}
    \item distinguished elements $0_A,1_A \in A$,
    \item a reflexive, symmetric binary relation $\odot_A \subseteq A \times A$,
    \item a function $\neg_A\colon A \rightarrow A$,
    \item and functions $\vee_A,\wedge_A\colon \odot_A \rightarrow A$
\end{itemize}
satisfying the following condition:
\begin{description}
  \item[\namedlabel{cond:ext}{(Ext)}(Ext)] For every set $S \subseteq A$ of pairwise $\odot$-related elements, there exists a $T \supseteq S \cup \{1_A,0_A\}$ of pairwise $\odot$-related elements such that $(T,0_A,1_A,\res{\neg}{T},\res{\vee}{T},\res{\wedge}{T})$ is a Boolean algebra.
\end{description}
The relation $\odot_A$ is called the \emph{commeasurability relation} and we say that \emph{$a \in A$ and $b \in A$ are commeasurable} if $a \odot_{A} b$.
\end{definition}
A partial Boolean algebra $A$ is \emph{total} if $\odot_A = A^2$.
We sometimes drop the adjective `total' when referring to total Boolean algebras.
The trivial Boolean algebra, denoted $\One$, is the unique Boolean algebra where $1_A = 0_A$.
We use $\Total$ to denote the class of all total non-trivial Boolean algebras.
We use $\Two$ to denote the two-element total Boolean algebra.

A set function $f\colon A \rightarrow B$ between partial Boolean algebras is a \emph{homomorphism or $\pBA$ morphism} if $f$ preserves the interpretation of commeasurability relations, the distinguished elements, and the operations $\neg,\vee,\wedge$ whenever defined.
We use $\pBA$ to denote both the category of partial Boolean algebras and partial Boolean algebra homomorphisms, and the class of all partial Boolean algebras.
We use $\BA$ to denote the category of total Boolean algebras and Boolean algebra homomorphisms.

\begin{definition}
\label{def:extension}
Given a partial Boolean algebra $A$ and relation $\oext \subseteq A \times A$, we can construct the \emph{$\oext$-extension of $A$}, denoted $A[\oext]$, which satisfies the following two properties:
\begin{enumerate}[label=(E\arabic*)]
    \item \label{item:extension-existence} There exists a $\pBA$ morphism $\eta\colon A \rightarrow A[\circledcirc]$ satisfying $a \oext b \Rightarrow \eta(a) \odot_{A[\oext]} \eta(b)$.
    \item \label{item:extension-universal} For every $\pBA$ morphism $h\colon A \rightarrow B$ satisfying $a \oext b \Rightarrow h(a) \odot_{B} h(b)$, there exists a unique morphism $\hat{h}\colon A[\circledcirc] \rightarrow B$ such that $\hat{h} \circ \eta = h$.
\end{enumerate}
\end{definition}
Intuitively, $A[\oext]$ forces elements in $A$ which are $\oext$-related to be commeasurable.
Thus, $A[\oext]$ freely adds to $A$ new terms which witness the output of the operations $\vee_{A[\oext]}$ and $\wedge_{A[\oext]}$.
The explicit construction of $A[\oext]$ is given in Section 2.2 of~\cite{logicContextuality}.
We make use of the properties \ref{item:extension-existence}-\ref{item:extension-universal} which $A[\oext]$ satisfies by Theorem 1 and Proposition 3 of~\cite{logicContextuality}.

The category $\pBA$ has general colimits.
This fact was first proved in~\cite{noncommutativityAsColimit}, but the proof appealed to the Adjoint Functor Theorem and thus did not yield an explicit construction.  An explicit construction of colimits in terms of coproducts and a quotient construction satisfying a universal property similar to the $\oext$-extension of $A$ were given in~\cite{logicContextuality}

\begin{definition}
\label{def:quotient}
Given a partial Boolean algebra $A$ and binary relation $\oext$, the \emph{$\oext$-quotient of $A$}, denoted $A\qs \oext$ is a partial Boolean algebra satisfying the following two properties:
\begin{enumerate}[label=(Q\arabic*)]
    \item \label{item:quotient-existence} There exists a $\pBA$ morphism $\upsilon\colon A \rightarrow A \qs \oext$ satisfying $a \circledcirc b \Rightarrow \upsilon(a) = \upsilon(b)$.
    \item \label{item:quotient-universal} For every $\pBA$ morphism $h\colon A \rightarrow B$ satisfying $a \oext b \Rightarrow h(a) = h(b)$, there exists a unique morphism $\bar{h}\colon A \backslash \oext \rightarrow B$ such that $\bar{h} \circ \upsilon = h$.
\end{enumerate}
\end{definition}
The explicit construction is a simple modification of the construction for $A[\oext]$ and is also detailed in Section 2.2 of~\cite{logicContextuality}.
As with the extension construction $A[\oext]$, we only make use of the properties \ref{item:quotient-existence}-\ref{item:quotient-universal} which $A \qs \oext$ satisfies by Theorem 5 of~\cite{logicContextuality}.

Given a formula $\phi(p_1,\dots,p_n)$ and Boolean algebra $B$, there is a substitution function $\phi^{B}\colon B^{\vec{p}} \rightarrow B$, which gives the value of the formula, given any valuation of the variables.
In this context of partial Boolean algebras, the notion of substitution function must be adapted to meaningful substitution.
\begin{definition}
\label{def:substitution}
Given a propositional formula $\phi(\vec{p})$ with variables among $\vec{p} = (p_1,\dots,p_n)$ and partial Boolean algebra $A$, we define a \emph{meaningful domain} $\mdomA{\phi(\vec{p})} \subseteq A^{\vec{p}}$ and \emph{meaningful substitution function $\phi^{A}\colon \mdomA{\phi(\vec{p})} \rightarrow A$} mutually by structural recursion on propositional formulas $\phi(p_1,\dots,p_n)$.
  For the base case,
  \begin{enumerate}[label=(\arabic*)]
    \item If $\phi(\vec{p}) = \top$, then $\mdomA{\phi(\vec{p})} = A^{\vec{p}}$ and for all $\alpha \in \mdomA{\phi(\vec{p})}$, $\phi^A(\alpha) = 1_A$.
    \item If $\phi(\vec{p}) = p_i$, then $\mdomA{\phi(\vec{p})} = A^{\vec{p}}$ and for all $\alpha \in \mdomA{\phi(\vec{p})}$, $\phi^A(\alpha) = \alpha(p_i)$.
  \end{enumerate}
  For the inductive steps,
  \begin{enumerate}[label=(\arabic*), resume]
    \item If $\phi(\vec{p}) = \neg \psi(\vec{p})$, then $\mdomA{\phi(\vec{p})} = \mdomA{\psi(\vec{p})}$ and for all $\alpha \in \mdomA{\phi(\vec{p})}$, $\phi^{A}(\alpha) = \neg_{A} \psi^{A}(\alpha)$.
    \item \label{item:def-substitution-connective} If $\phi(\vec{p}) = \psi_1(\vec{p}) \bowtie \psi_{2}(\vec{p})$ for connective $\bowtie \in \{\vee,\wedge\}$, then
      \[ \mdomA{\phi(\vec{p})} = \{\alpha \in \mdomA{\psi_1(\vec{p})} \cap \mdomA{\psi_2(\vec{p})} \mid \psi_1^A(\alpha) \odot_A \psi_2^A(\alpha) \} \]
      and for all $\alpha \in \mdomA{\phi(\vec{p})}$, $\phi^{A}(\alpha) = \psi_1^{A}(\alpha) \bowtie_A \psi_2^{A}(\alpha)$.
  \end{enumerate}
\end{definition}
For cleaner notation, for a propositional formula $\phi(p_1,\dots,p_n)$, we sometimes use $\phi^{A}(a_1,\dots,a_n)$ to denote $\phi^A(\alpha)$ where $\alpha \in \mdomA{\phi(\vec{p})}$ and for every $i \in \setn$, $\alpha(p_i) = a_i$.

In the next section, we use the notion of meaningful substitution to generalise propositional satisfiability and validity to any class of partial Boolean algebras.

In ordinary universal algebra, and in particular Boolean algebras, satisfiability of a formula is preserved and reflected by $\BA$ morphisms.
The following proposition from~\cite{hiddenVariableQM} extends this result to meaningful substitutions and $\pBA$ morphisms.
\begin{proposition}
\label{prop:morphism-and-substitution}
  If $h\colon A \rightarrow B$ is a $\pBA$ morphism and $\phi(\vec{p})$ is propositional formula, then for all $\alpha \in \mdomA{\phi(\vec{p})}$,
  \[ \phi^{B}(\res{h}{\mdomA{\phi(\vec{p})}}(\alpha)) = h(\phi^{A}(\alpha)). \]
\end{proposition}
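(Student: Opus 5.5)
The proof is a structural induction on $\phi(\vec{p})$, following the mutual recursion of Definition~\ref{def:substitution}. The inductive claim must be slightly stronger than the statement, since the composite $\phi^{B}\circ\res{h}{\mdomA{\phi(\vec{p})}}$ only makes sense if $h$ maps meaningful domains into meaningful domains. So I will prove, for every formula $\phi(\vec{p})$, the following two facts together:
\begin{enumerate}[label=(\roman*)]
\item if $\alpha \in \mdomA{\phi(\vec{p})}$, then $h\circ\alpha \in \mdom{B}{\phi(\vec{p})}$, where $h\circ\alpha$ is exactly $\res{h}{\mdomA{\phi(\vec{p})}}(\alpha)$;
\item for such $\alpha$, $\phi^{B}(h\circ\alpha) = h(\phi^{A}(\alpha))$.
\end{enumerate}

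\textbf{Base cases.} If $\phi = \top$ or $\phi = p_i$, the meaningful domain is all of $A^{\vec{p}}$ (respectively $B^{\vec{p}}$), so (i) is immediate. For (ii), $\top^{B}(h\circ\alpha) = 1_B = h(1_A)$ because $h$ preserves distinguished elements. Likewise $p_i^{B}(h\circ\alpha) = h(\alpha(p_i))$ holds by definition.

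\textbf{Negation.} Let $\phi = \neg\psi$. The domains coincide with those of $\psi$, so (i) follows from the induction hypothesis for $\psi$. For (ii),
\[ \phi^{B}(h\circ\alpha) = \neg_B\,\psi^{B}(h\circ\alpha) = \neg_B\, h(\psi^A(\alpha)) = h(\neg_A\,\psi^A(\alpha)), \]
using that $h$ preserves $\neg$.

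\textbf{Binary connectives.} Let $\phi = \psi_1 \bowtie \psi_2$ with $\bowtie \in \{\vee,\wedge\}$. This is the only step with real content. Take $\alpha \in \mdomA{\phi(\vec{p})}$. Then $\alpha$ lies in both $\mdomA{\psi_1(\vec{p})}$ and $\mdomA{\psi_2(\vec{p})}$, and $\psi_1^{A}(\alpha) \odot_A \psi_2^{A}(\alpha)$. By the induction hypothesis, $h\circ\alpha$ lies in both meaningful domains over $B$, and $\psi_j^{B}(h\circ\alpha) = h(\psi_j^{A}(\alpha))$ for $j = 1,2$. Since $h$ preserves commeasurability, $h(\psi_1^A(\alpha)) \odot_B h(\psi_2^A(\alpha))$. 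By the two equations just obtained this reads $\psi_1^{B}(h\circ\alpha) \odot_B \psi_2^{B}(h\circ\alpha)$, which gives (i). For (ii), $h$ preserves $\bowtie$ on commeasurable pairs, so
\[ h\bigl(\psi_1^A(\alpha) \bowtie_A \psi_2^A(\alpha)\bigr) = h(\psi_1^A(\alpha)) \bowtie_B h(\psi_2^A(\alpha)) = \psi_1^B(h\circ\alpha) \bowtie_B \psi_2^B(h\circ\alpha). \]

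The only point needing care is the connective case. There one must establish (i) before (ii) can even be stated, and (i) is exactly where the hypothesis that $h$ preserves $\odot$ is used. Everything else is routine unfolding of Definition~\ref{def:substitution}.
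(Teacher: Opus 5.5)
Your proof is correct. The structural induction with the strengthened claim that $h\circ\alpha \in \mdom{B}{\phi(\vec{p})}$, which the statement tacitly presupposes, is the standard argument, and you use preservation of $\odot$ exactly where it is needed, in the binary-connective case. The paper does not prove this proposition itself; it attributes it to Kochen and Specker~\cite{hiddenVariableQM}, so there is no competing route to compare against.
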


We also make use of Proposition~\ref{prop:equality-bi-implication} and Proposition~\ref{prop:order-expressible} which generalise useful facts about Boolean algebras to partial Boolean algebras.
\begin{proposition}
  \label{prop:equality-bi-implication}
  Let $A$ be a partial Boolean algebra, $\phi(\vec{p}),\psi(\vec{p})$ be propositional formulas and $\alpha \in \mdomA{\phi(\vec{p})} \cap \mdomA{\psi(\vec{p})}$.
  \[ \phi^A(\alpha) = \psi^A(\alpha) \text{ if, and only if, } \nu^{A}(\alpha) = 1_A \]
  where $\nu(\vec{p})$ is the bi-implication $\phi(\vec{p}) \bimpl \psi(\vec{p})$.
\end{proposition}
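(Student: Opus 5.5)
The plan is to reduce everything to a computation inside a single total Boolean algebra supplied by \ref{cond:ext}, after fixing how $\bimpl$ unfolds. I will take $\nu(\vec{p})$ to abbreviate $(\neg\phi(\vec{p}) \vee \psi(\vec{p})) \wedge (\neg\psi(\vec{p}) \vee \phi(\vec{p}))$; any other standard unfolding is handled the same way. Write $a = \phi^A(\alpha)$ and $b = \psi^A(\alpha)$. By Definition~\ref{def:substitution}, clause \ref{item:def-substitution-connective}, $\alpha \in \mdomA{\nu(\vec{p})}$ holds exactly when $\neg_A a \odot_A b$, $\neg_A b \odot_A a$, and the two resulting disjunctions are commeasurable. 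In that case $\nu^A(\alpha) = (\neg_A a \vee_A b) \wedge_A (\neg_A b \vee_A a)$.

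One point must be settled first. The statement only assumes $\alpha \in \mdomA{\phi} \cap \mdomA{\psi}$, so the right-hand side ``$\nu^A(\alpha) = 1_A$'' has to be read as also asserting $\alpha \in \mdomA{\nu(\vec{p})}$. I will state this reading explicitly, and in the forward direction I will prove that $\alpha$ lies in this domain.

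The key tool is the following observation. If $T \subseteq A$ is a set of pairwise commeasurable elements which, under the restricted operations, forms a Boolean algebra, then $T$ is closed under $\neg_A$, $\vee_A$ and $\wedge_A$, and these agree with the Boolean operations of $T$. Consequently any term built from elements of $T$ is meaningful in $A$, and its value equals the value computed in the total Boolean algebra $T$.

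($\Rightarrow$) Suppose $a = b$. Apply \ref{cond:ext} to $S = \{a\}$, which is pairwise commeasurable by reflexivity of $\odot_A$, to obtain $T \ni a$. By the key tool, $\neg_A a$, $\neg_A a \vee_A a$ and the final conjunction all lie in $T$ and are pairwise commeasurable, so $\alpha \in \mdomA{\nu(\vec{p})}$. The value is then $(\neg a \vee a)\wedge(\neg a \vee a) = 1_T = 1_A$.

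($\Leftarrow$) Suppose $\alpha \in \mdomA{\nu(\vec{p})}$ and $\nu^A(\alpha) = 1_A$. Commeasurability of $a$ and $b$ does not appear directly as a domain condition, so it must be extracted. From $\neg_A a \odot_A b$, apply \ref{cond:ext} to $\{\neg_A a, b\}$ to get a Boolean algebra $T$. This $T$ contains $\neg_A a$ and is closed under $\neg_A$. Since $\neg_A\neg_A a = a$ holds inside a Boolean algebra containing $\neg_A a$, we get $a \in T$, and hence $a \odot_A b$. Now apply \ref{cond:ext} to $\{a, b\}$ to get $T'$. By the key tool, all subterms of $\nu$ evaluated at $a, b$ agree with their computation in the total Boolean algebra $T'$. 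There, $(a \to b)\wedge(b \to a) = 1$ forces $a \le b$ and $b \le a$, so $a = b$.

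The main obstacle is the bookkeeping in ($\Leftarrow$). One must make sure the values of the partial operations in $A$ genuinely coincide with those of $T'$. This uses functionality of $\vee_A$ and $\wedge_A$ on $\odot_A$: $T'$ uses the restricted operations of $A$, so both computations return the same element. One must also derive $a \odot_A b$ from the domain conditions via the involution argument above; this is the step I expect to need the most care.
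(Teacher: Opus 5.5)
Your proof is correct, but it takes a different route from the paper's, and the difference comes from how you unfold $\bimpl$.

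\textbf{The paper's route.} The paper reads $\phi \bimpl \psi$ as $(\phi\wedge\psi)\vee(\neg\phi\wedge\neg\psi)$. With that reading, $a \odot_A b$ comes for free in the hard direction: it is the domain condition on the subformula $\phi\wedge\psi$. The paper then runs a line-by-line equational chain showing $a = a\wedge_A b = b$. Two things stay largely tacit: that all intermediate terms lie in one common Boolean subalgebra, and, in the easy direction, that $\alpha\in\mdomA{\nu(\vec{p})}$.

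\textbf{Your route.} With your unfolding $(\neg\phi\vee\psi)\wedge(\neg\psi\vee\phi)$, $\phi\wedge\psi$ is no longer a subformula. So you must extract $a\odot_A b$ from $\neg_A a\odot_A b$, an extra step the paper avoids. In exchange, your argument is more explicit in three ways:
\begin{itemize}
\item Your ``key tool'' (by functionality of $\vee_A,\wedge_A$ on $\odot_A$, the operations of $A$ agree with those of any \ref{cond:ext}-witness $T$) makes rigorous what the paper's equational chain uses implicitly.
\item You treat the meaningful-domain question directly.
\item Your method adapts to any standard unfolding of $\bimpl$. This matters in a pBA, since different unfoldings have a priori different meaningful domains.
\end{itemize}

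\textbf{One small fix in the extraction step.} The identity $\neg_A\neg_A a = a$ cannot be justified from the algebra $T$ you obtain from $\{\neg_A a, b\}$. Inside $T$ you only get $\neg_A\neg_A(\neg_A a)=\neg_A a$, and you do not yet know $a\in T$. Justify it instead by applying \ref{cond:ext} to $\{a\}$, which shows $\neg_A$ is an involution on all of $A$. Then $a=\neg_A(\neg_A a)\in T$ by closure of $T$ under $\neg_A$, and $a\odot_A b$ follows.
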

\begin{proof}
The proof of this statement is the same as the $\Total$ case, but we have the additional burden of checking commeasurablity.
Throughout this proof, let $a_{\phi} = \phi^A(\alpha)$ and $a_{\psi} = \psi^A(\alpha)$.

For the $\Leftarrow$ direction, by hypothesis, $a_{\phi} = a_{\psi}$.
Since $\odot_A$ is reflexive, we have that $a_{\phi} \odot_A a_{\psi}$.
Therefore, by \ref{cond:ext}, $a_{\phi}$ and $a_{\psi}$ are contained within a total Boolean subalgebra $B$ of $A$, and we can proceed with the standard Boolean algebra equational proof to demonstrate that $a_{\phi} \bimpl_A a_{\psi} = 1_A$.
For completeness, we reproduce that proof here:
\begin{align*}
  a_{\phi} \leftrightarrow_A a_{\psi} &= (a_{\phi} \wedge_A a_{\psi}) \vee_A (\neg a_{\phi} \wedge_A \neg a_{\psi}) \\
  &= (a_{\phi} \wedge_A a_{\phi}) \vee_A (\neg a_{\phi} \wedge_A \neg a_{\phi}) \\
  &= a_{\phi} \vee_A \neg a_{\phi} \\
  &= 1_A
\end{align*}
For the $\Rightarrow$ direction, we first give a standard equational proof that $a_{\phi} = a_{\phi} \wedge_A a_{\psi}$ if $a_{\phi} \leftrightarrow a_{\psi} = 1_A$.
\begin{align*}
  a_{\phi} &= a_{\phi} \wedge_A 1_A \\
           &= a_{\phi} \wedge_A (a_{\phi} \leftrightarrow_A a_{\psi}) \\
           &= a_{\phi} \wedge_A ((a_{\phi} \wedge_A a_{\psi}) \vee_A (\neg a_{\phi} \wedge_A \neg a_{\psi})) \\
           &= (a_{\phi} \wedge_A (a_{\phi} \wedge_A a_{\psi})) \vee_A (a_{\phi} \wedge_A (\neg a_{\phi} \wedge_A \neg a_{\psi})) \\
           &= ((a_{\phi} \wedge_A a_{\phi}) \wedge_A a_{\psi}) \vee_A ((a_{\phi} \wedge_A \neg a_{\phi}) \wedge_A \neg a_{\psi}) \\
           &= (a_{\phi} \wedge_A a_{\psi}) \vee_A ((a_{\phi} \wedge_A \neg a_{\phi}) \wedge_A \neg a_{\psi}) \\
           &= (a_{\phi} \wedge_A a_{\psi}) \vee_A (0_A \wedge_A \neg a_{\psi}) \\
           &= (a_{\phi} \wedge_A a_{\psi}) \vee_A 0_A \\
           &= a_{\phi} \wedge_A a_{\psi}
\end{align*}
In this proof, we used the fact that $a_{\phi} \odot_A 1_A$ which follows by \ref{cond:ext}, and $a_{\phi} \odot_A a_{\psi}$ which follows by the hypothesis that $\nu^{A}(\alpha) = 1_A$, $\phi(\vec{p}) \wedge \psi(\vec{p})$ being a subformula of $\nu(\vec{p})$, and case~\ref{item:def-substitution-connective} of Definition~\ref{def:substitution}.
A similar proof demonstrates that $a_{\psi} = a_{\phi} \wedge_A a_{\psi}$.
Therefore, $a_{\phi} = a_{\phi} \wedge_A a_{\psi} = a_{\psi}$ in $A$ as desired.
\end{proof}

For every partial Boolean algebra $A$, we define the relation $\leq_{A}$ as $a \leq_A b$ if $a \odot_A b$ and $a \wedge_A b = a$.
For every Boolean subalgebra $B$ of $A$, $\leq_A$ restricted to $B$ coincides with the underlying partial order on the Boolean algebra $B$.
The relation $\leq_{A}$ on a partial Boolean algebra is expressible in propositional logic, since we can define the formula
\begin{equation}
  \label{eq:order-expressible}
  \phi_{\leq}(p,q) := p \wedge q \leftrightarrow p
\end{equation}
which satisfies the following proposition.
\begin{proposition}
  \label{prop:order-expressible}
  Let $A$ be a partial Boolean algebra with $a,b \in A$.
  \[ a \leq_{A} b \text{ if, and only if, } \phi_{\leq}^A(a,b) = 1_A \]
\end{proposition}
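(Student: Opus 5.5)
The plan is to reduce the statement to Proposition~\ref{prop:equality-bi-implication}, applied with $\phi(p,q) := p \wedge q$ and $\psi(p,q) := p$. Then $\phi_{\leq}(p,q)$ is exactly the bi-implication $\phi \bimpl \psi$, and the valuation $\alpha$ with $\alpha(p) = a$ and $\alpha(q) = b$ is the one we evaluate at. The only extra work is to track the meaningful domains, because $\phi_{\leq}^A(a,b)$ is defined only when $\alpha \in \mdomA{\phi_{\leq}(p,q)}$. We read ``$\phi_{\leq}^A(a,b) = 1_A$'' as asserting both that $\alpha$ is in this domain and that the value is $1_A$.

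For the forward direction, assume $a \leq_A b$, so $a \odot_A b$ and $a \wedge_A b = a$. Commeasurability places $\alpha$ in $\mdomA{p \wedge q}$, and trivially $\alpha \in \mdomA{p} = A^{\{p,q\}}$. Hence $\alpha \in \mdomA{\phi} \cap \mdomA{\psi}$, and $\phi^A(\alpha) = a \wedge_A b = a = \psi^A(\alpha)$. The $\Leftarrow$ direction of Proposition~\ref{prop:equality-bi-implication} then gives $\phi_{\leq}^A(\alpha) = 1_A$. That proposition also implicitly gives meaningfulness of the bi-implication: the two sides are equal, hence commeasurable by reflexivity, and \ref{cond:ext} places them in a Boolean subalgebra where the abbreviation of $\bimpl$ is fully defined. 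I will say this explicitly.

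For the converse, assume $\phi_{\leq}^A(a,b)$ is defined and equals $1_A$. Since $p \wedge q$ is a subformula of $\phi_{\leq}$, case~\ref{item:def-substitution-connective} of Definition~\ref{def:substitution} gives that $\alpha$ lies in $\mdomA{p\wedge q}$, which is exactly the statement $a \odot_A b$. So $\alpha \in \mdomA{\phi}\cap\mdomA{\psi}$, and the $\Rightarrow$ direction of Proposition~\ref{prop:equality-bi-implication} yields $a \wedge_A b = a$. Together these give $a \leq_A b$.

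I expect no real obstacle here. The only delicate point is the domain bookkeeping: extracting $a \odot_A b$ from definedness in the converse, and in the forward direction confirming that $\bimpl$, unfolded as $(x\wedge y)\vee(\neg x \wedge \neg y)$, is meaningful. For the latter, \ref{cond:ext} applied to $\{a,b\}$ gives a Boolean subalgebra containing $a$, $b$, $a\wedge_A b$ and all the intermediate terms, so every subterm of the bi-implication is pairwise commeasurable.
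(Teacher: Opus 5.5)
Your proposal is correct and takes the same approach as the paper, whose proof is exactly the one-line application of Proposition~\ref{prop:equality-bi-implication} with $\phi(p,q) = p \wedge q$ and $\psi(p,q) = p$. Your extra domain bookkeeping makes explicit what the paper leaves implicit: you obtain $a \odot_A b$ from $a \leq_A b$ in one direction and from definedness of the subformula $p \wedge q$ in the other, which puts $\alpha$ in $\mdomA{\phi} \cap \mdomA{\psi}$, and you check meaningfulness of the bi-implication via \ref{cond:ext}.
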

\begin{proof}
  Apply Proposition~\ref{prop:equality-bi-implication} where $\phi(p,q) = p \wedge q$ and ${\psi(p,q) = p}$.
\end{proof}

\section{Formulation and observations}
\label{sec:observations}
We start with the formulation of the primary motivation for our paper: satisfiability in classes of partial Boolean algebras.
Using Definition~\ref{def:substitution}, we obtain two sensible notions of satisfiability.
Given a partial Boolean algebra $A$ and propositional formula $\phi(\vec{p})$, we say
\begin{enumerate}
  \item \emph{$\phi$ is weakly satisfied in $A$} if there exists a $\alpha \in \mdomA{\phi(\vec{p})}$ such that $\phi^{A}(\alpha) \not= 0_A$.
  \item \emph{$\phi$ is strongly satisfied in $A$} if there exists a $\alpha \in \mdomA{\phi(\vec{p})}$ such that $\phi^{A}(\alpha) = 1_A$.
\end{enumerate}
Similarly, if we are given a class of partial Boolean algebras $\Cls$,
\begin{enumerate}
  \item \emph{$\phi$ is $\Cls$-weakly-satisfiable} if $\phi$ is weakly satisfied in $A$ for some $A \in \Cls$.
  \item \emph{$\phi$ is $\Cls$-strongly-satisfiable} if $\phi$ is strongly satisfied in $A$ for some $A \in \Cls$.
\end{enumerate}
We can also generalise the dual notion of validity to any class of partial Boolean algebras.
We say that $\phi$ is $\Cls$-weakly-valid or $\Cls$-strongly-valid if for all $A \in \Cls$ and $\alpha \in \mdomA{\phi(\vec{p})}$, $\phi(\alpha) \not= 0_A$ or $\phi(\alpha) = 1_A$, respectively.
Note that if $A,B$ are isomorphic partial Boolean algebras, then $A$ and $B$ satisfy the same class of formulas.
Thus, for the notions of satisfiability and validity we are studying, we assume without loss of generality throughout the rest of the paper that every class of partial Boolean algebras $\Cls$ is isomorphism-closed.

\begin{proposition}
Let $\Cls,\Clsp$ be classes of partial Boolean algebras such that $\Cls \subseteq \Clsp$ and $\phi$ is a propositional formula.
\begin{enumerate}
  \item If $\phi$ is $\Cls$-weakly-satisfiable, then $\phi$ is $\Clsp$-weakly-satisfiable.
  \item If $\phi$ is $\Clsp$-weakly-valid, then $\phi$ is $\Cls$-weakly-valid.
\end{enumerate}
Similar statements hold for the corresponding strong notions.
\end{proposition}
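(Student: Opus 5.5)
This is a direct unfolding of the definitions: satisfiability is an existential statement over the class, and validity a universal one. Enlarging the class can only create new witnesses for the former and new potential counterexamples to the latter.

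For item 1, assume $\phi(\vec{p})$ is $\Cls$-weakly-satisfiable. By definition there are $A \in \Cls$ and $\alpha \in \mdomA{\phi(\vec{p})}$ with $\phi^{A}(\alpha) \not= 0_A$. Since $\Cls \subseteq \Clsp$, the same $A$ lies in $\Clsp$, and the same $\alpha$ witnesses that $\phi$ is weakly satisfied in $A$. Hence $\phi$ is $\Clsp$-weakly-satisfiable. Note that the meaningful domain $\mdomA{\phi(\vec{p})}$ and the value $\phi^A(\alpha)$ from Definition~\ref{def:substitution} depend only on $A$ and $\phi$, not on the ambient class, so nothing needs to be rechecked.

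For item 2, assume $\phi$ is $\Clsp$-weakly-valid. Then for every $A \in \Clsp$ and every $\alpha \in \mdomA{\phi(\vec{p})}$ we have $\phi^A(\alpha) \not= 0_A$. Any $A \in \Cls$ is also in $\Clsp$, so the condition holds for every $A \in \Cls$, and $\phi$ is $\Cls$-weakly-valid. Alternatively, item 2 is the contrapositive-dual of item 1: $\phi$ fails to be weakly valid in a class exactly when some $A$ in the class and some $\alpha \in \mdomA{\phi(\vec{p})}$ give $\phi^A(\alpha) = 0_A$, and such a witness in $\Cls$ is also a witness in $\Clsp$.

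The strong versions follow verbatim by replacing the condition $\phi^A(\alpha) \not= 0_A$ with $\phi^A(\alpha) = 1_A$. There is no real obstacle here. The only point to be careful about is the convention that classes are isomorphism-closed, and the argument is unaffected by it, since membership of the witnessing algebra $A$ in $\Clsp$ follows directly from the inclusion $\Cls \subseteq \Clsp$.
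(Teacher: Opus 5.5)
Your proof is correct. The paper states this proposition without any proof, treating it as immediate from the definitions. Your unfolding — a witness $(A,\alpha)$ with $A \in \Cls$ is also a witness for $\Clsp$, and a universal condition over $\Clsp$ restricts to $\Cls$, with $\not= 0_A$ replaced by $= 1_A$ for the strong versions — is exactly that intended argument.
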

For some classes $\Cls$ of partial Boolean algebras, the collection of formulas $\phi$ which are $\Cls$-weakly-satisfiable and $\Cls$-strongly-satisfiable coincide.
In these cases (which are all cases of interest in the subsequent sections), we can drop the qualifiers and say that a formula $\phi$ is $\Cls$-satisfiable or dually, is $\Cls$-valid.
It is a well-known fact that $\Total$ is a class where $\Total$-weakly-satisfiable and $\Total$-strongly-satisfiable coincide.
This is because $\Total$ is closed under taking a quotient $B \qs \langle b \rangle$ of a Boolean algebra $B \in \Total$ by the filter generated from a non-zero element $b \in B$.
The quotient $B \qs \langle b \rangle$ `forces' $b \in B$ to be equal to one.
This idea generalises to an arbitrary class $\Cls$ which is closed under taking a $\oext$-quotient $A \qs \oext$ of $A \in \Cls$ by $\oext = \{(a,1_A)\}$ for a non-zero element $a \in A$.
The following definition isolates such classes $\Cls$.
\begin{definition}
  \label{def:closed-under-collapse}
  A class $\Cls$ is \emph{closed under collapse} if for all $A \in \Cls$ and $a \in A$ such that $a \not= 0_A$, $A \qs \oext \in \Cls$ where $\oext = \{(a,1_A)\}$.
\end{definition}

\begin{proposition}
\label{prop:strong-weak}
If $\Cls$ is closed under collapse and $\One \not\in \Cls$, then for any propositional formula $\phi$, $\phi$ is $\Cls$-weakly-satisfiable if, and only if, $\phi$ is $\Cls$-strongly-satisfiable.
\end{proposition}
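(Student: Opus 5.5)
The plan is to prove the two directions separately. One direction is immediate; the other uses the collapse construction together with Proposition~\ref{prop:morphism-and-substitution}.

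For the easy direction, suppose $\phi$ is $\Cls$-strongly-satisfiable, witnessed by $A \in \Cls$ and $\alpha \in \mdomA{\phi(\vec{p})}$ with $\phi^A(\alpha) = 1_A$. Since $\One \notin \Cls$ and $\Cls$ is isomorphism-closed, $A$ is not trivial, so $1_A \neq 0_A$. Hence $\phi^A(\alpha) \neq 0_A$, and $\phi$ is weakly satisfied in $A$.

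For the converse, suppose $A \in \Cls$ and $\alpha \in \mdomA{\phi(\vec{p})}$ with $a := \phi^A(\alpha) \neq 0_A$.
\begin{enumerate}
  \item Form the quotient $B := A \qs \oext$ with $\oext = \{(a, 1_A)\}$. Because $a \neq 0_A$ and $\Cls$ is closed under collapse, $B \in \Cls$.
  \item Let $\upsilon\colon A \to B$ be the $\pBA$ morphism given by \ref{item:quotient-existence}. It satisfies $\upsilon(a) = \upsilon(1_A) = 1_B$, the second equality because morphisms preserve distinguished elements.
  \item Apply Proposition~\ref{prop:morphism-and-substitution} to $\upsilon$. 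It gives that $\res{\upsilon}{\mdomA{\phi(\vec{p})}}(\alpha)$ is a valuation in $B$ at which $\phi$ evaluates to $\upsilon(\phi^A(\alpha)) = \upsilon(a) = 1_B$.
\end{enumerate}
So $\phi$ is strongly satisfied in $B \in \Cls$.

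The only subtle point is in step 3: the transported valuation $\upsilon \circ \alpha$ must lie in the meaningful domain $\mdom{B}{\phi(\vec{p})}$. I read Proposition~\ref{prop:morphism-and-substitution} as asserting this implicitly, since the left-hand side must be defined. If not, it follows by a short induction on $\phi$: a $\pBA$ morphism preserves commeasurability and the operations, so every $\odot$-condition checked along the recursion of Definition~\ref{def:substitution} transfers from $A$ to $B$.

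A triviality of $B$, should collapse produce $\One$, is harmless: $B \in \Cls$ and $\One \notin \Cls$ rule it out. That condition is in any case only needed for the easy direction.
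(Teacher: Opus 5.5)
Your proposal is correct and follows the paper's proof: you collapse $\phi^A(\alpha)$ to $1_A$, push $\alpha$ along the quotient morphism $\upsilon$ via Proposition~\ref{prop:morphism-and-substitution} to get the weak-to-strong direction, and use $\One \notin \Cls$ for the converse. Your remark that $\upsilon \circ \alpha$ lies in the meaningful domain of the quotient (by induction, since morphisms preserve $\odot$) makes explicit a point the paper leaves implicit.
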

\begin{proof}
  $\Rightarrow$ If $\phi$ is $\Cls$-weakly-satisfiable, then there exists $\alpha \in \mdomA{\phi(\vec{p})}$ such that $\phi^A(\alpha) \not= 0_A$.
  Since $\Cls$ is closed under collapse, $A \qs \oext \in \Cls$ where $\oext$ identifies $\phi^A(\alpha)$ with $1_A$.
  By~\ref{item:quotient-existence}, there exists a $\pBA$-morphism $\upsilon\colon A \rightarrow A \qs \oext$ such that $\upsilon(\phi^A(\alpha)) = \upsilon(1_A)$.
  Thus, by $\upsilon$ being a $\pBA$-morphism, $\phi^{A \qs \oext}(\res{\upsilon}{\mdomA{\phi(\vec{p})}}(\alpha)) = \upsilon(\phi^A(\alpha)) = \upsilon(1_A) = 1_{A \qs \oext}$.

$\Leftarrow$ Conversely, since $\One \not\in \Cls$, for every $A \in \Cls$, $1_A \not= 0_A$.
\end{proof}

From the notions of $\Cls$-weakly-satisfiable and $\Cls$-strongly-satisfiable, we can formulate the corresponding decision problems.
$\weakSAT{\Cls}$ and $\strongSAT{\Cls}$ denote the classes of $\Cls$-weakly-satisfiable or $\Cls$-strongly-satisfiable formulas, respectively.
For classes $\Cls$ where these notions coincide, we use the notation $\SAT{\Cls}$ for this decision problem.
In the case where $\Cls$ is the singleton class $\{A\}$, we drop the superfluous braces when denoting these decision problems, i.e.\ $\strongSAT{A}$.

Since the trivial Boolean algebra $\One$ satisfies every term, we observe that if $\One \in \Cls$, then $\strongSAT{\Cls}$ is trivial, i.e.\ it contains all formulas.
By contrast, suppose $\Cls$ is such that $\One \not\in \Cls$, then since any propositional formula which is $\Total$-satisfied is satisfied in $\Two$ and any non-trivial partial Boolean algebra contains $\Two$ as a Boolean subalgebra, $\weakSAT{\Cls}$ and $\strongSAT{\Cls}$ should be at least as computationally hard as $\SAT{\Total}$.
Corollary~\ref{cor:csat-np-hard} of Proposition~\ref{prop:sat-reduction} below confirms this intuition.

\begin{proposition}
\label{prop:sat-reduction}
Let $\Cls$ be a class of partial Boolean algebras such that $\One \not\in \Cls$.
For every propositional formula $\phi(\vec{p})$, there exists a propositional formula $\hat{\phi}(\vec{p})$ of length $O(|\phi(\vec{p})|^2)$ such that the following are equivalent:
\begin{enumerate}[label=(\arabic*)]
  \item \label{item:sat-reduction-total} $\phi(\vec{p}) \in \SAT{\Total}$.
  \item \label{item:sat-reduction-strong} $\hat{\phi}(\vec{p}) \in \strongSAT{\Cls}$.
  \item \label{item:sat-reduction-weak} $\hat{\phi}(\vec{p}) \in \weakSAT{\Cls}$.
\end{enumerate}
\end{proposition}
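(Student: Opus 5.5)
The plan is to make $\phi$ meaningful under every assignment by forcing all relevant values to be pairwise commeasurable. Let $\psi_1,\dots,\psi_m$ enumerate $\Sub(\phi)$; there are $O(|\phi|)$ of them. Define
\[ \hat{\phi}(\vec{p}) := \phi(\vec{p}) \wedge \bigwedge_{1\le i<j\le m} \big((\psi_i \wedge \psi_j) \vee \top\big), \]
with the conjunction bracketed left to right. Its length is $O(|\phi|^2)$ provided we bound the total size of the subformula pairs suitably. A naive count gives $\sum_{i,j}(|\psi_i|+|\psi_j|)$, which could be cubic. To stay quadratic, I would use the variables $p_k$ themselves rather than arbitrary subformulas, i.e.\ conjoin $(p_k\wedge p_l)\vee\top$ over all pairs of variables in $\vec{p}$. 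Each term has constant size, so there are $O(|\phi|^2)$ terms of size $O(1)$. The key step is to show that pairwise commeasurability of the variable values already suffices. By \ref{cond:ext}, the set $\{\alpha(p_k)\}$ of pairwise commeasurable elements lies in a total Boolean subalgebra $T$ of $A$. Every subformula value then lies in $T$ by induction, so all needed commeasurabilities hold and $\alpha \in \mdomA{\phi(\vec{p})}$.

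The conjunct $(p_k\wedge p_l)\vee\top$ is meaningful exactly when $\alpha(p_k)\odot\alpha(p_l)$. In that case both disjuncts lie in a common Boolean subalgebra containing $1_A$ (again by \ref{cond:ext}), so its value is $1_A$. The big conjunction of these terms is therefore meaningful iff the variable values are pairwise commeasurable, and then it equals $1_A$. It follows that if $\alpha\in\mdomA{\hat\phi(\vec{p})}$, then every value lives in a Boolean subalgebra $T\ni 1_A$, and $\hat\phi^A(\alpha)=\phi^A(\alpha)\wedge 1_A=\phi^A(\alpha)$. The point needing care is that the outer $\wedge$ between $\phi$ and the big conjunction is meaningful: both sides lie in $T$.

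For \ref{item:sat-reduction-total}$\Rightarrow$\ref{item:sat-reduction-strong}, take a classical satisfying assignment in $\Two$. Pick any $A\in\Cls$, which exists; if $\Cls$ is empty the statement degenerates, and I would note that we assume $\Cls$ nonempty. Since $\One\notin\Cls$, $0_A\ne 1_A$, and $\{0_A,1_A\}$ is a copy of $\Two$ inside $A$ that is pairwise commeasurable by \ref{cond:ext}. The inclusion $\Two\to A$ is a $\pBA$ morphism, so by Proposition~\ref{prop:morphism-and-substitution} the transported assignment gives $\hat\phi^A=1_A$. The implication \ref{item:sat-reduction-strong}$\Rightarrow$\ref{item:sat-reduction-weak} is immediate from $1_A\neq 0_A$.

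For \ref{item:sat-reduction-weak}$\Rightarrow$\ref{item:sat-reduction-total}, suppose $\hat\phi^A(\alpha)\ne 0_A$. By the key step, $\phi^A(\alpha)=\hat\phi^A(\alpha)$ is computed entirely inside a total non-trivial Boolean algebra $T$, and it is nonzero there. By the standard fact (e.g.\ an ultrafilter on $T$ containing $\phi^A(\alpha)$) there is a Boolean homomorphism $h\colon T\to\Two$ with $h(\phi^A(\alpha))=1$. Then $h\circ\alpha$ satisfies $\phi$ classically, again by Proposition~\ref{prop:morphism-and-substitution} applied to $T$. The main obstacle is the bookkeeping showing that meaningfulness of $\hat\phi$ forces all subformula values into a single Boolean subalgebra. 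This relies on \ref{cond:ext} applied to the variable values together with closure of $T$ under the operations.
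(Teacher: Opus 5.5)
Your proposal is correct and takes essentially the same approach as the paper. It conjoins to $\phi$ a classical tautology containing $p_k \wedge p_l$ for every pair of variables, so meaningfulness forces pairwise commeasurability; it uses \ref{cond:ext} to place everything in a total Boolean subalgebra for (3)$\Rightarrow$(1); and it transports a classical solution along $\Two \hookrightarrow A$ via Proposition~\ref{prop:morphism-and-substitution} for (1)$\Rightarrow$(2). The differences are minor: your padding $(p_k\wedge p_l)\vee\top$ replaces the paper's four-way disjunction, and you make explicit two points the paper leaves tacit, namely that $\phi^A(\alpha)$ is computed inside the subalgebra and that $\Cls$ must be nonempty for (1)$\Rightarrow$(2).
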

\begin{proof}
Suppose $\vec{p} = (p_1,\dots,p_n)$, then we use a $\Total$-valid $\psi(\vec{p})$ formula defined as:
\[\psi(\vec{p}) := \bigwedge_{i \not= j \in \setn} (p_i \wedge p_j) \vee (\neg p_i \wedge p_j) \vee (p_i \wedge \neg p_j) \vee (\neg p_i \wedge \neg p_j) \]
to define $\hat{\phi}(\vec{p})$ as the conjunction:
\[ \hat{\phi}(\vec{p}) := \phi(\vec{p}) \wedge \psi(\vec{p}). \]
Intuitively, each subformula $p_i \wedge p_j$ in $\psi(\vec{p})$ forces the elements assigned to the variables $p_i,p_j$ in a partial Boolean algebra to be commeasurable.
In order to ensure the addition of this subformula has no effect on satisfiablity, we take the disjunction of the subformula $p_i \wedge p_j$ with the variants $\neg p_i \wedge p_j$, $p_i \wedge \neg p_j$, $\neg p_i \wedge \neg p_j$ which forces the disjunction of all these variants to be $\Total$-valid, and thus $\psi(\vec{p})$ is $\Total$-valid.
We now verify that $\hat{\phi}(\vec{p})$ satisfies the desired property.

$\ref{item:sat-reduction-total} \Rightarrow \ref{item:sat-reduction-strong}$ Suppose $\phi(\vec{p})$ is $\Total$-satisfiable.
Since $\psi(\vec{p})$ in $\hat{\phi}(\vec{p})$ is $\Total$-valid, $\hat{\phi}(\vec{p})$ is $\Total$-satisfiable.
Since every $B \in \Total$ has an homomorphism $h\colon B \rightarrow \Two$, $\Two$ strongly satisfies $\hat{\phi}(\vec{p})$.
Now, consider $A \in \Cls$, by hypothesis, $A$ is non-trival, so $1_{A} \not= 0_{A}$.
Thus, by \ref{cond:ext}, $\Two$ is a Boolean subalgebra of $A$, so there exists an inclusion $\Two \hookrightarrow A$.
Therefore, by Proposition~\ref{prop:morphism-and-substitution}, $A$ strongly satisfies $\hat{\phi}(\vec{p})$ and $\hat{\phi}(\vec{p})$ is $\Cls$-strongly-satisfiable.

$\ref{item:sat-reduction-strong} \Rightarrow \ref{item:sat-reduction-weak}$ Since $\One \not\in \Cls$, for every $A \in \Cls$, $1_A \not= 0_A$.

$\ref{item:sat-reduction-weak} \Rightarrow \ref{item:sat-reduction-total}$
Suppose $\hat{\phi}(\vec{p})$ is $\Cls$-weakly-satisfiable.
By the definition of $\Cls$-weakly-satisfiable and $\One \not\in \Cls$, there exists a non-trivial partial Boolean algebra $A \in \Cls$ and $\alpha \in \mdomA{\phi}$ such that $\hat{\phi}^{A}(\alpha) \not= 0_A$.
Since for every $i \not= j \in \setn$, $p_i \wedge p_j$ is a subformula of $\hat{\phi}(\vec{p})$, we can conclude from Definition~\ref{def:substitution} that $\alpha(p_i) \odot_A \alpha(p_j)$.
Moreover, as $\odot_A$ is reflexive, it follows that the subset $\{\alpha(p_1),\dots,\alpha(p_n)\}$ is pairwise $\odot_A$-related.
By axiom \ref{cond:ext} of partial Boolean algebras, there exists a total Boolean subalgebra $B$ of $A$ such that $\{\alpha(p_1),\dots,\alpha(p_n)\} \subseteq B$.
Since $B$ is a total Boolean subalgebra $\hat{\phi}^{B}(\alpha) \in B$.
By $\hat{\phi}^{A}(\vec{a}) \not= 0_A$, we have that $\hat{\phi}^{B}(\alpha) \not= 0_B$ and $B$ is non-trivial, so $B \in \Total$.
Therefore, $\hat{\phi}(\vec{p})$ and its conjunct $\phi(\vec{p})$ is $\Total$-satisfiable.
\end{proof}

\begin{corollary}
  \label{cor:csat-np-hard}
  If $\Cls$ is a class of partial Boolean algebras such that $\One \not\in \Cls$, then $\weakSAT{\Cls}$ and $\strongSAT{\Cls}$ is $\NP$-hard.
\end{corollary}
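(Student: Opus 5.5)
The plan is a polynomial-time many-one reduction from the classical satisfiability problem $\SAT{\Total}$, which is $\NP$-complete by the Cook--Levin theorem, to each of $\weakSAT{\Cls}$ and $\strongSAT{\Cls}$. Given an input formula $\phi(\vec{p})$, the reduction outputs the formula $\hat{\phi}(\vec{p}) = \phi(\vec{p}) \wedge \psi(\vec{p})$ constructed in the proof of Proposition~\ref{prop:sat-reduction}. The hypothesis $\One \not\in \Cls$ is exactly what that proposition requires, so no further assumption on $\Cls$ is needed.

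Correctness comes directly from Proposition~\ref{prop:sat-reduction}. The equivalence $\ref{item:sat-reduction-total} \Leftrightarrow \ref{item:sat-reduction-strong}$ gives $\phi \in \SAT{\Total}$ iff $\hat{\phi} \in \strongSAT{\Cls}$. The equivalence $\ref{item:sat-reduction-total} \Leftrightarrow \ref{item:sat-reduction-weak}$ gives $\phi \in \SAT{\Total}$ iff $\hat{\phi} \in \weakSAT{\Cls}$. The same map $\phi \mapsto \hat{\phi}$ therefore serves as a reduction to both problems simultaneously.

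The remaining step is to check that the map is polynomial-time computable. The formula $\psi(\vec{p})$ is a conjunction over the pairs $i \neq j$ in $\setn$, and each conjunct has constant size, so $\psi$ has length $O(n^2)$. Since $n \leq |\phi|$, this gives the $O(|\phi|^2)$ length bound stated in the proposition. The construction itself only requires listing the variables of $\phi$ and writing out these pairwise disjunctions, which takes $O(|\phi|^2)$ time.

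One point needs a little care: $\SAT{\Total}$ should be identified with classical satisfiability in $\Two$. This holds because every non-trivial total Boolean algebra maps homomorphically onto $\Two$ (an observation already used in the proof of Proposition~\ref{prop:sat-reduction}), and $\Two$ itself lies in $\Total$. With that identification, $\NP$-hardness transfers along the reduction and the corollary follows.
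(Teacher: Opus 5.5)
Your proposal is correct and matches the paper's proof: both use $\phi \mapsto \hat{\phi}$ from Proposition~\ref{prop:sat-reduction} as the polynomial-time reduction and then appeal to Cook--Levin, and your size bound and your identification of $\SAT{\Total}$ with satisfiability in $\Two$ only spell out routine details. One technicality, shared with the paper, contradicts your remark that no further assumption on $\Cls$ is needed: the step $\ref{item:sat-reduction-total} \Rightarrow \ref{item:sat-reduction-strong}$ picks some $A \in \Cls$, so $\Cls \neq \varnothing$ is tacitly required (for $\Cls = \varnothing$ both problems are empty and cannot be $\NP$-hard).
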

\begin{proof}
  Proposition~\ref{prop:sat-reduction} gives a polynomial time reduction from $\SAT{\Total}$ to $\weakSAT{\Cls}$ and $\strongSAT{\Cls}$.
  By the Cook-Levin theorem, $\SAT{\Total}$ is $\NP$-hard.
\end{proof}
Many of the classes $\Cls$ of partial Boolean algebras we consider satisfy Corollary~\ref{cor:csat-np-hard}, and so $\SAT{\Cls}$ is $\NP$-hard.
However, it remains to establish upper bounds and, where relevant, tighter lower bounds for the complexity of $\SAT{\Cls}$.

One strengthening of the Cook-Levin theorem, discovered by Tseitin in~\cite{cnfTransform}, showed that the $\SAT{\Total}$ problem remains $\NP$-complete when restricted to formulas $\phi(\vec{p})$ in $3$-literal-per-clause conjunctive normal form (3CNF), i.e.\ $\phi(\vec{p})$ is a conjunction of disjunctive clauses each consisting of only at most 3 literals.
Let $\strongCNF{\Cls}$ and $\weakCNF{\Cls}$ denote the $\Cls$-strong-satisfiability and $\Cls$-weak-satisfiablity decision problems where the inputs are restricted to CNF formulas.

The key insight of Tseitin was that every propositional formula $\phi(\vec{p})$ could be transformed, in polynomial time, into a $\Total$-equisatisfiable, but not necessarily equivalent, CNF formula $\phi_{\CNF}(\vec{p},\vec{q})$.
Proposition~\ref{prop:cnf-pba} shows that the Tseitin transformation preserves $\Cls$-strong-satisfiability for any class $\Cls$ of partial Boolean algebras.
The Tseitin transformation on a propositional formula $\phi(\vec{p})$ produces a CNF formula $\phi_{\CNF}(\vec{p},\vec{q})$ where $\vec{q}$ is such that
\[ [\vec{q}] = \{q_{\psi} \mid \psi \text{ is non-variable subformula of $\phi$ }\}. \]
We define $\phi_{\CNF}$ and a set of clauses $C_{\phi}$ by induction on the structure of $\phi(\vec{p})$.
For the base cases,
\begin{enumerate}
  \item For $\phi(\vec{p}) = \top$, $\phi_{\CNF}(\vec{p},\vec{q}) = \top$ and $C_{\phi} = \varnothing$.
  \item For $\phi(\vec{p}) = p_i$, $\phi_{\CNF}(\vec{p},\vec{q}) = p_i$ and $C_{\phi} = \varnothing$.
\end{enumerate}
For the inductive steps, we define $\phi_{\CNF}(\vec{p},\vec{q})$ as $q_{\phi} \wedge \bigwedge C_{\phi}$ where $C_{\phi}$ depends on the inductive case.
In the following, we assume $v_{p_i} = p_i$ and for non-variable subformulas $\psi$ of $\phi$, $v_{\psi} = q_{\psi}$.
\begin{enumerate}
  \item \label{item:tseitin-negation} For $\phi(\vec{p}) = \neg \psi(\vec{p})$, let $C_{\phi} = C_{\psi} \cup N$ where
    \[  N = \{\neg v_{\phi} \vee \neg v_{\psi},v_{\phi} \vee v_{\psi}\}. \]
    The conjunction of $N$ is equivalent to $v_{\phi} \bimpl \neg v_{\psi}$.
  \item \label{item:tseitin-disjunction} For $\phi = \psi_1 \vee \psi_2$, let $C_{\phi} = C_{\psi_1} \cup C_{\psi_2} \cup O$ where
    \[ O = \{\neg v_{\phi} \vee v_{\psi_1} \vee v_{\psi_2}, v_{\phi} \vee \neg v_{\psi_1}, v_{\phi} \vee \neg v_{\psi_2} \}. \]
    The conjunction of $O$ is equivalent to $v_{\phi} \bimpl v_{\psi_1} \vee v_{\psi_2}$.
  \item \label{item:tseitin-conjunction} For $\phi = \psi_1 \wedge \psi_2$, let $C_{\phi} = C_{\psi_1} \cup C_{\psi_2} \cup W$ where
    \[ W = \{v_{\phi} \vee \neg v_{\psi_1} \vee \neg v_{\psi_2},\neg v_{\phi} \vee v_{\psi_1},\neg v_{\phi} \vee v_{\psi_2} \}. \]
    The conjunction of $W$ is equivalent to $v_{\phi} \bimpl v_{\psi_1} \wedge v_{\psi_2}$.
\end{enumerate}

\begin{proposition}
\label{prop:cnf-pba}
  Let $A$ be a partial Boolean algebra and $\phi(\vec{p})$ a propositional formula.
  \begin{center}
    $\phi$ is strongly satisfied in $A$ iff $\phi_{\CNF}$ is strongly satisfied in $A$
  \end{center}
\end{proposition}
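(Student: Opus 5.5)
The plan is to prove the two directions separately. The forward direction builds an assignment for $\phi_{\CNF}$ out of the intermediate values of $\phi$. The backward direction shows by induction that a strongly satisfying assignment of $\phi_{\CNF}$ must assign to each $q_\psi$ the meaningful value of $\psi$. Throughout I work inside total Boolean subalgebras supplied by \ref{cond:ext}, and I use Proposition~\ref{prop:equality-bi-implication} and Proposition~\ref{prop:order-expressible} to pass between clause values equal to $1_A$ and equalities or orderings of elements. The base cases $\phi=\top$ and $\phi=p_i$ are trivial, since there $\phi_{\CNF}=\phi$.

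($\Rightarrow$) Suppose $\alpha\in\mdomA{\phi(\vec p)}$ and $\phi^A(\alpha)=1_A$. Extend $\alpha$ to $\beta$ by setting $\beta(q_\psi)=\psi^A(\alpha)$ for every non-variable subformula $\psi$. This is well defined, because $\alpha$ lies in the meaningful domain of every subformula.

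Take a clause set $N$, $O$ or $W$ attached to a node $\psi=\psi_1\bowtie\psi_2$ (resp.\ $\psi=\neg\psi_1$). The values $\psi_1^A(\alpha),\psi_2^A(\alpha)$ are commeasurable. By \ref{cond:ext} they lie in a total Boolean subalgebra $T$, and $T$ also contains $\psi^A(\alpha)$, their negations, $0_A$ and $1_A$. Hence every literal in the clauses of that node lies in $T$, whatever the bracketing of the clauses. Each clause is therefore meaningfully defined, and it equals $1_A$ by the ordinary Boolean identity it expresses.

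Finally, $\phi_{\CNF}$ is a conjunction of $q_\phi$ and clauses, each of which is assigned $1_A$, and $q_\phi$ itself is assigned $\phi^A(\alpha)=1_A$. Since $1_A\odot_A 1_A$ and $1_A\wedge_A 1_A=1_A$, an induction over the bracketing of the big conjunction gives $\phi_{\CNF}^A(\beta)=1_A$.

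($\Leftarrow$) Suppose $\phi_{\CNF}^A(\beta)=1_A$. I first show that every conjunct evaluates to $1_A$. If $a\wedge_A b=1_A$ with $a\odot_A b$, then inside a Boolean subalgebra containing $a,b$ we get $a=b=1_A$; iterating over the bracketing gives the claim. In particular $\beta(q_\phi)=1_A$.

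Then I prove, by induction on subformulas $\psi$, that $\res{\beta}{\vec p}\in\mdomA{\psi(\vec p)}$ and $\beta(v_\psi)=\psi^A(\res{\beta}{\vec p})$. Applied to $\psi=\phi$, this gives $\phi^A(\res{\beta}{\vec p})=\beta(q_\phi)=1_A$.

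The negation case is easy. The clause $v_\phi\vee v_\psi$ forces $v_\phi\odot v_\psi$. In a common Boolean subalgebra, the two clauses equal to $1_A$ then give $v_\phi=\neg v_\psi$.

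For $\phi=\psi_1\vee\psi_2$ (and dually for $\wedge$), write $v=\beta(v_\phi)$ and $v_i=\beta(v_{\psi_i})$.
\begin{itemize}
\item The two-literal clauses give $v\odot v_i$ and $v_i\leq_A v$ for $i=1,2$.
\item The three-literal clause $(\neg v\vee v_1)\vee v_2=1_A$ gives $(\neg v\vee v_1)\odot v_2$. Working in the Boolean algebra generated by $\neg v\vee v_1$ and $v_2$, it also gives $v\wedge\neg v_1\leq_A v_2$.
\end{itemize}
Once $v_1\odot_A v_2$ is established, all of $v,v_1,v_2$ lie in one Boolean subalgebra, and Boolean reasoning gives $v=v_1\vee_A v_2$. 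Together with the induction hypothesis, this yields both that $\res{\beta}{\vec p}$ lies in the meaningful domain of $\phi$ and the value equation.

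The main obstacle is precisely deriving $v_1\odot_A v_2$. Commeasurability is not transitive, and the clauses only directly relate each $v_i$ to $v$ and to the compound $\neg v\vee v_1$. What I would try first is to exploit the sandwich $v\wedge\neg v_1\leq v_2\leq v$ together with $v_1\leq v$. In $\Pj(\Hilb)$ this forces $v_2=(v-v_1)+r$ with $r\leq v_1$, and hence forces $v_1$ and $v_2$ to commute. For a general pBA I would look for the analogous argument using the Boolean subalgebras containing $\{v,v_1\}$ and $\{\neg v\vee v_1,v_2\}$ together with \ref{cond:ext}. If that fails, I would instead read the clause bracketing so that $v_1\vee v_2$ is grouped directly, which hands us $v_1\odot_A v_2$ from the meaningful domain of the clause.
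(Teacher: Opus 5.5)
Your outline is the paper's own. For ($\Rightarrow$) you extend $\alpha$ by the subformula values. For ($\Leftarrow$) you show every conjunct is $1_A$ and identify $\beta(v_\psi)$ with $\psi^A$ by induction. You also correctly isolate the one non-trivial step, which the paper passes over: it asserts that the clauses in $O$ and $W$ ``amount to'' the bi-implications $q_\psi\leftrightarrow q_{\psi_1}\bowtie q_{\psi_2}$ and then invokes Proposition~\ref{prop:equality-bi-implication}, which presupposes $v_1\odot_A v_2$. However, your proposal does not establish that step either, so there is a genuine gap. The projector computation only applies to $\Pj(\Hilb)$, and the ``analogous argument'' for an arbitrary pBA is not given. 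The fallback of re-bracketing proves the claim only for one reading of the unbracketed clause $\neg v_\phi\vee v_{\psi_1}\vee v_{\psi_2}$, not for $\phi_{\CNF}$ as defined, and the left-grouped reading is precisely the hard case. So the backward direction is incomplete exactly at its crux.

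The missing idea is to apply (Ext) to a triple rather than to pairs, using that the operations of a pBA are single global functions. Let $w=\neg v\vee_A v_1$, computed in a Boolean subalgebra $T_0$ containing $v$ and $v_1$ (these commeasure by the clause $v\vee\neg v_1$). The set $\{v,w,v_2\}$ is pairwise commeasurable:
\begin{itemize}
\item $v\odot_A w$, since both lie in $T_0$;
\item $w\odot_A v_2$, by the three-literal clause;
\item $v\odot_A v_2$, by the clause $v\vee\neg v_2$.
\end{itemize}
So (Ext) gives a Boolean subalgebra $T$ containing the triple. Since $T$ is closed under $\wedge_A$, it contains $v\wedge_A w$. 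That element can be computed in $T_0$, where $v\wedge(\neg v\vee v_1)=v\wedge v_1=v_1$ because $v_1\le v$. Hence $v_1,v_2\in T$, so $v_1\odot_A v_2$. Inside $T$, your inequalities $v_1,v_2\le v$ and $\neg w=v\wedge\neg v_1\le v_2$ then give $v=v_1\vee_A v_2$. For $\wedge$, argue dually with $w=v\vee_A\neg v_1$ and $v_1=v\vee_A\neg w$. The other bracketing yields $v_1\odot_A v_2$ directly from the meaningful domain, so no fallback is needed.

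One further small point: your backward induction has a base case $\psi=\top$, which requires $\beta(v_\top)=1_A$. This holds only if $v_\top$ is read as $\top$ itself, analogously to $v_{p_i}=p_i$. Under the paper's literal convention $v_\top=q_\top$, no clause constrains $q_\top$, and, for example, $(\neg\top)_{\CNF}$ is strongly satisfied in $\Two$ while $\neg\top$ is not.
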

\begin{proof}
The proof for this statement is similar to the $\Total$ case.
The key insight that allows us to generalise to arbitrary partial Boolean algebra is that for subformulas of $\phi$ with the form $\psi = \psi_1 \bowtie \psi_2$ where $\bowtie \in \{\vee,\wedge\}$ and $\alpha \in \mdomA{\phi(\vec{p})}$, $\{\psi^A(\alpha),\psi^A_1(\alpha),\psi^A_2(\alpha)\}$ is pairwise commeasurable.
Thus, we can construct a $\beta \in A^{\vec{q}\vec{p}}$ such that $\beta \in \mdomA{\phi_{\CNF}}$ is a meaningful substitution.

For the $\Rightarrow$ direction, suppose $\phi(\vec{p})$ is strongly satisfied in $A$.
We proceed by structural induction on $\phi$ to demonstrate that the conjunction $\bigwedge C_{\phi}$ in $\phi_{\CNF}(\vec{p},\vec{q})$ is strongly satisfied in $A$.
For the base cases $\phi(\vec{p}) = \top$ and $\phi(\vec{p}) = p_i$, $C_{\phi} = \varnothing$.
The empty conjunction $\bigwedge C_{\phi}$ is interpreted as $1_{A}$ for every pBA $A$.
For the inductive step, we first spell out the case of $\phi(\vec{p}) = \psi_1(\vec{p}) \wedge \psi_2(\vec{p})$.
By Proposition~\ref{prop:equality-bi-implication}, we can conclude that $\phi(\vec{p}) \leftrightarrow \psi_1(\vec{p}) \wedge \psi_2(\vec{p})$ is strongly satisfied via the meaningful substitution $\alpha \in \mdomA{\phi(\vec{p})}$.
Consider the substitution $\beta \in A^{\vec{p}\vec{q}}$ such that $\beta(q_{\phi}) = \phi^{A}(\alpha)$, $\beta(q_{\psi_i}) = \psi_i^{A}(\alpha)$ for $i \in \{1,2\}$, and $\res{\beta}{[\vec{p}]} = \alpha$.
Observe that $\beta \in \mdomA{\phi_{\CNF}}$ is a meaningful substitution since $\{\phi^{A}(\alpha), \psi_1^{A}(\alpha), \psi_2^{A}(\alpha)\}$ is a pairwise commeasurable set.
Under the substitution $\beta$, since $\phi(\vec{p}) \leftrightarrow \psi_1(\vec{p}) \wedge \psi_2(\vec{p})$ is strongly satisfied via $\alpha$, the clauses $W$ in $C_{\phi}$ expressing $q_{\phi} \leftrightarrow q_{\psi_1} \wedge q_{\psi_2}$ are strongly satisfied via $\beta$.
By the inductive hypothesis, we can conclude the conjunctions $\bigwedge C_{\psi_i}$ for $i \in \{1,2\}$, and thus the clauses $C_{\psi_1} \cup C_{\psi_2} \subseteq C_{\phi}$ are also strongly satisfied in $A$.
Therefore, the entire conjunction $\bigwedge C_{\phi}$ is strongly satisfied in $A$.
The proofs for the other inductive cases, $\neg$ and $\vee$, are similar.
Finally, by hypothesis $\phi(\vec{p})$ is strongly satisfied in $A$, so $\phi_{\CNF}(\vec{p},\vec{q})$ is also strongly satisfied in $A$.

For the $\Leftarrow$ direction, suppose $\phi_{\CNF}(\vec{p},\vec{q})$ is strongly satisfied.
By definition, there exists a $\beta \in \mdomA{\phi_{\CNF}}$ such that $\phi^A_{\CNF}(\beta) = 1_A$.
By standard Boolean algebra, it follows that for each conjunct $\chi$ of $\phi_{\CNF}$, $\chi^A(\beta) = 1_A$.
By construction of $\phi_{\CNF}$, every conjunct being strongly satisfied amounts to asserting the bi-implications $q_{\psi} \leftrightarrow q_{\psi_1} \bowtie q_{\psi_2}$ for $\bowtie \in \{\vee,\wedge\}$ or $\neg q_{\psi} \leftrightarrow q_{\neg \psi}$.
By inductively applying Proposition~\ref{prop:equality-bi-implication}, we obtain that $\beta(q_{\phi}) = \phi^A(\alpha)$ where $\alpha = \res{\beta}{[\vec{p}]}$.
Since $q_{\phi}$ itself is a conjunct of $\phi_{\CNF}(\vec{p},\vec{q})$, $\beta(q_{\phi}) = \phi^A(\alpha) = 1_A$.
\end{proof}
Thus, we can conclude that for all classes $\Cls$, the problems $\strongCNF{\Cls}$ and $\strongSAT{\Cls}$ are in the same complexity class up to polynomial reductions.

Moreover, by modifying the formula used in the Proposition~\ref{prop:sat-reduction}, we can prove a refinement of Corollary~\ref{cor:csat-np-hard} demonstrating that $\strongCNF{\Cls}$ and $\weakCNF{\Cls}$ are also $\NP$-hard.
\begin{theorem}
  \label{thm:cnf-csat-np-hard}
  If $\Cls$ is a class of partial Boolean algebras such that $1 \not\in \Cls$, then $\strongCNF{\Cls}$ and $\weakCNF{\Cls}$ is $\NP$-hard.
\end{theorem}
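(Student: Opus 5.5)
We reduce classical 3CNF satisfiability, which is $\NP$-complete by Tseitin's refinement of the Cook-Levin theorem, to both $\strongCNF{\Cls}$ and $\weakCNF{\Cls}$. The reduction mimics Proposition~\ref{prop:sat-reduction}, except that the padding formula $\psi(\vec{p})$ must itself be in CNF and must still force all variables to be pairwise commeasurable. Given a 3CNF formula $\phi(\vec{p})$ with $\vec{p} = (p_1,\dots,p_n)$, we take
\[ \hat{\phi}(\vec{p}) := \phi(\vec{p}) \wedge \bigwedge_{i \not= j \in \setn} (p_i \vee \neg p_i) \wedge (p_j \vee \neg p_j) \wedge (p_i \vee p_j) \wedge (p_i \vee \neg p_j). \]
The key feature is the clauses $p_i \vee p_j$, which are at most 2-literal clauses. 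Their presence forces $\alpha(p_i) \odot_A \alpha(p_j)$ in any meaningful substitution, by case~\ref{item:def-substitution-connective} of Definition~\ref{def:substitution}. The difficulty is to keep the added conjunction $\Total$-valid. The clause $p_i \vee p_j$ alone is not valid, so the simplest fix is to use clauses that are tautologies yet still contain $p_i$ and $p_j$ under a binary connective. We therefore use the 3-literal clause $p_i \vee p_j \vee \neg p_i$, parsed as $(p_i \vee p_j) \vee \neg p_i$. This clause is $\Total$-valid, has at most three literals, and contains the subformula $p_i \vee p_j$. So we replace the displayed padding by
\[ \psi(\vec{p}) := \bigwedge_{i \not= j \in \setn} ((p_i \vee p_j) \vee \neg p_i), \]
of size $O(n^2)$, and set $\hat{\phi} := \phi \wedge \psi$. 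This is a 3CNF formula, so the reduction is polynomial.

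The verification then follows the three implications of Proposition~\ref{prop:sat-reduction}.
\begin{enumerate}
  \item If $\phi$ is $\Total$-satisfiable, then it is satisfied in $\Two$, and so is $\hat{\phi}$, because $\psi$ is a tautology. Every $A \in \Cls$ is non-trivial, so \ref{cond:ext} gives an inclusion $\Two \hookrightarrow A$. Proposition~\ref{prop:morphism-and-substitution} then transports the satisfying assignment to $A$, so $\hat{\phi}$ is strongly satisfied in $A$.
  \item Strong satisfiability implies weak satisfiability, since $1_A \not= 0_A$ for every $A \in \Cls$.
  \item If $\hat{\phi}^A(\alpha) \not= 0_A$, then $\alpha$ lies in the meaningful domain of each subformula $p_i \vee p_j$. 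Hence the values $\alpha(p_1),\dots,\alpha(p_n)$ are pairwise commeasurable, and by \ref{cond:ext} they lie in a total Boolean subalgebra $B$ of $A$. The value $\hat{\phi}^B(\alpha) = \hat{\phi}^A(\alpha)$ is nonzero, so $B \in \Total$ and $\phi$ is $\Total$-satisfiable.
\end{enumerate}

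The main obstacle is the one resolved above: making the commeasurability-forcing gadget both a genuine CNF clause with at most three literals and classically valid. One must check that the parsing $(p_i \vee p_j) \vee \neg p_i$ is a legitimate 3CNF clause under the paper's conventions. If the definition of CNF instead treats clauses as unordered flat disjunctions, we would fix a bracketing convention so that $p_i \vee p_j$ appears as a subformula. Another subtlety is the case $n = 1$, where no pairs exist. This case is trivial, since a single element is always commeasurable with itself.
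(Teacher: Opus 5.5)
Your argument is correct and is essentially the paper's: you pad $\phi$ with a classically valid 3CNF formula in which each pair $p_i,p_j$ occurs under a binary connective, then rerun the three implications of Proposition~\ref{prop:sat-reduction}, with your clause $(p_i \vee p_j) \vee \neg p_i$ playing the role of the paper's $\neg p_i \vee p_i \vee p_j$ and $\neg p_j \vee p_j \vee p_i$ (the first padding you display contains the non-valid clauses $p_i \vee p_j$ and $p_i \vee \neg p_j$ and should simply be deleted). Your attention to bracketing is well placed: read left-associatively, the paper's clause contains the subformula $\neg p_i \vee p_i$, which always evaluates to $1_A$ and is commeasurable with everything, so only a bracketing that exposes $p_i \vee p_j$ as a subformula actually forces commeasurability.
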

\begin{proof}
  In the proof of Proposition~\ref{prop:sat-reduction}, for any $\Total$-satisfiable formula $\phi(\vec{p})$, we defined a $\Cls$-(weak/strong)-satisfiable formula as $\hat{\phi} = \phi(\vec{p}) \wedge \psi(\vec{p})$ where $\psi(\vec{p})$ was a classical tautology in which every pair of distinct variables $p_i,p_j \in [\vec{p}]$ appeared.
  Similarly, we define a classical CNF tautology $\psi_{\CNF}(\vec{p})$ with the same property on distinct variables:
  \[ \psi_{\CNF}(\vec{p}) = \bigwedge_{i \not= j} (\neg p_i \vee p_i \vee p_j) \wedge (\neg p_j \vee p_j \vee p_i). \]
  Thus, starting with a $\Total$-satisfiable CNF formula $\phi(\vec{p})$, we can produce a $\Cls$-(weak/strong)-satisfiable CNF formula $\hat{\phi}(\vec{p}) = \phi(\vec{p}) \wedge \psi_{\CNF}(\vec{p})$.
  The problems $\strongCNF{\Cls}$ and $\weakCNF{\Cls}$ are $\NP$-hard as $\bothCNF{\Total}$ is $\NP$-complete.
\end{proof}

\section{All satisfiablity}
\label{sec:all-sat}
Before we proceed to investigate the complexity of the problems $\strongSAT{\Cls}$ and $\weakSAT{\Cls}$ for various classes $\Cls$ of pBAs, we first consider the decision problem $\VARSAT$.  This consists of the set of tuples
$(A,a,\phi(\vec{p}))$ where $A$ is a finite pBA, $a \in A$, and $\phi(\vec{p})$ is a propositional formula such that there exists an $\alpha \in A^{\phi(\vec{p})}$ with $\phi^{A}(\alpha) = a$.
The following proposition shows that $\VARSAT$ is $\NP$-complete.
\begin{proposition}
  $\VARSAT$ is $\NP$-complete.
\end{proposition}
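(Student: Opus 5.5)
We prove membership in $\NP$ and $\NP$-hardness separately; both are short. Throughout, a finite pBA $A$ is given explicitly: its elements are listed, the commeasurability relation $\odot_A$ is given as a table, and so are the operation tables for $\neg_A$, $\vee_A$ and $\wedge_A$ (the latter two on $\odot_A$). The input size is therefore polynomial in $|A|$, and every lookup takes polynomial time.

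For membership in $\NP$, the certificate is a valuation $\alpha\colon [\vec{p}] \to A$. It has size $O(|\vec{p}|\log|A|)$, which is polynomial in the input. The verifier evaluates $\phi$ bottom-up over $\Sub(\phi)$, following the clauses of Definition~\ref{def:substitution}:
\begin{enumerate}
\item for a variable it reads off $\alpha(p_i)$, and for $\top$ it returns $1_A$;
\item for $\neg\psi$ it looks up $\neg_A$ applied to the value of $\psi$;
\item for $\psi_1 \bowtie \psi_2$ it first checks in the table that the two values are $\odot_A$-related, rejecting if they are not, and then looks up $\bowtie_A$.
\end{enumerate}
Finally it accepts iff every check passed, so that $\alpha \in \mdomA{\phi(\vec{p})}$, and the resulting value $\phi^A(\alpha)$ equals $a$. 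This makes $|\Sub(\phi)|$ table lookups, hence runs in polynomial time, and by construction it accepts exactly the witnesses required by the definition of $\VARSAT$.

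For $\NP$-hardness, we reduce $\SAT{\Total}$, which is $\NP$-hard by Cook-Levin, by the map $\phi \mapsto (\Two, 1, \phi)$, computable in polynomial time. Since $\Two$ is total, every valuation is meaningful, and $\phi^{\Two}$ is the ordinary truth-table value. So $(\Two,1,\phi) \in \VARSAT$ iff $\phi$ is classically satisfiable in $\Two$. This holds iff $\phi \in \SAT{\Total}$: every non-trivial Boolean algebra maps homomorphically onto $\Two$, and we then apply Proposition~\ref{prop:morphism-and-substitution}, exactly as in the proof of Proposition~\ref{prop:sat-reduction}.

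The only point needing care is the representation convention, that is, that the explicit encoding of the finite pBA makes the evaluation in the membership step polynomial. Given that, neither direction presents a real obstacle.
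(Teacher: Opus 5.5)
Your proposal is correct and takes essentially the same approach as the paper. For membership you guess the valuation $\alpha$ and check both meaningfulness and the value $\phi^A(\alpha)=a$ by bottom-up evaluation; for hardness you reduce classical satisfiability via $\phi \mapsto (\Two,1,\phi)$. You are more explicit than the paper about how $A$ is encoded and about why $\SAT{\Total}$ coincides with satisfiability in $\Two$.
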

\begin{proof}
  To show membership in $\NP$, consider the algorithm which guesses an assignment $\alpha \in A^{\vec{p}}$ non-deterministically.
  Verifiying that $\alpha \in A^{\phi(\vec{p})} \subseteq A^{\vec{p}}$ and $\phi^{A}(\alpha) = a$ can be done in polynomial time.
  The assignment $\alpha$ is linear in the size of $\phi(\vec{p})$.
  To show $\VARSAT$ is $\NP$-hard, we can translate any input instance $\varphi(\vec{p})$ of the classical satisfiablity problem $\SAT{\Total}$ to the instance $(\Two, 1 \in \Two,\varphi(\vec{p}))$ of the $\VARSAT$ problem.
\end{proof}
The first natural class to investigate is the class of all partial Boolean algebras $\pBA$.
However, $\pBA$ includes the trivial partial Boolean algebra $\One$ which satisfies every propositional formula.
Therefore, we instead consider the class $\All$, formally defined as
\[ \All = \{A \in \pBA \mid 0_A \not= 1_A \}, \]
of all non-trivial partial Boolean algebras.

Observe that by construction $\One \not\in \All$ and so, $\All$ is closed under collapse (as in Definition~\ref{def:closed-under-collapse}).
Therefore, by Proposition~\ref{prop:strong-weak}, the notions of $\All$-strong-satisfiablity and $\All$-weak-satisfiablity coincide, and we can consider the $\SAT{\All}$ decision problem without any ambiguity.
Our next observation is that if a $\phi(\vec{p})$ is $\All$-satisfiable, then $\phi(\vec{p})$ is satisfiable in a finite $\pBA$.
Namely, we can freely construct a finite $\pBA$ $F_{\phi}$ for every formula $\phi(\vec{p})$ which always has a term corresponding to $\phi(\vec{p})$.
This would then yield an algorithm for $\SAT{\All}$ by deciding $\VARSAT$ on the input pBA $F_{\phi}$.
By utilising extensions (Definition~\ref{def:extension}) and quotients (Definition~\ref{def:quotient}), we construct $F_{\phi}$ by structural induction on $\phi$.
\begin{enumerate}[label=(\arabic*)]
    \item If $\phi = \top$, then $F_{\phi} = \Two$.
    \item If $\phi = p_i$, then $F_{\phi}$ is the $4$-element Boolean algebra on $\{0,p_i, \neg p_i,1\}$
    \item If $\phi = \neg \psi$, then $F_{\phi} = F_{\psi}$. Note that $[\phi] = \neg [\psi] \in F_{\phi}$
    \item If $\phi = \psi_1 \bowtie \psi_2$ where $\bowtie \in \{\vee,\wedge\}$, then we break down the construction into three steps:
    \begin{itemize}
      \item $H_{\phi} = F_{\psi_1} \uplus F_{\psi_2}$ is the coproduct in $\pBA$ with coprojections $\iota_j\colon F_{\psi_j} \rightarrow F_{\psi_1} \uplus F_{\psi_2}$ for $j \in \{1,2\}$.
      \item $G_{\phi} = H_{\phi}[\circledcirc]$ where $\circledcirc = \{(\iota_{1}([\psi_1]),\iota_2([\psi_2]))\}$.
        By the construction of $\circledcirc$-extensions, this means that there exists a term
        \[ [\phi]_G = [\iota_1([\psi_1]) \bowtie \iota_2([\psi_2])] \in (F_{\psi_1} \uplus F_{\psi_2})[\circledcirc] \]
        By property~\ref{item:extension-existence} of $\circledcirc$-extensions, there exists a morphism $\eta_{\phi}\colon H_{\phi} \rightarrow G_{\phi}$.
      \item $F_{\phi} = G_{\phi} \backslash \circleddash$ where
          \[ \circleddash = \{(\iota_1([\gamma]),\iota_2([\gamma])) \mid \gamma \in \Sub(\psi_1) \cap \Sub(\psi_2)\}. \]
          There is a term $[\phi]_F \in F_{\phi}$ which is the equivalence class with the representative $[\phi]_G \in G_{\phi}$.
        By property~\ref{item:quotient-existence} of $\circleddash$-quotient, there exists a morphism $\nu_{\psi}\colon G_{\phi} \rightarrow F_{\phi}$.
    \end{itemize}
    Explicitly, $F_{\phi} = (F_{\psi_1} \uplus F_{\psi_2})[\circledcirc] \backslash \circleddash$.
\end{enumerate}
Intuitively, $F_{\phi}$ can be viewed as a $\pBA$ analogue to the free Boolean algebra construction in $\BA$.
This intuition is confirmed by the following proposition.
\begin{proposition}
  \label{prop:free-fphi}
  If $A$ is partial Boolean algebra, $\phi(p_1,\dots,p_n)$ a propositional formula, and $\alpha \in \mdomA{\phi(\vec{p})}$, then there exists a $\pBA$ morphism $\hat{\alpha}\colon F_{\phi} \rightarrow A$ such that $\hat{\alpha}([\phi]) = \phi^{A}(\alpha)$.
\end{proposition}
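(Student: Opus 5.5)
We argue by structural induction on $\phi$, constructing $\hat{\alpha}$ directly from the universal properties of the coproduct, of $\oext$-extensions (property~\ref{item:extension-universal}) and of quotients (property~\ref{item:quotient-universal}). It is convenient to strengthen the induction hypothesis. For every subformula $\gamma \in \Sub(\phi)$, the term $[\gamma]$ (transported along the canonical maps into $F_{\phi}$) should satisfy $\hat{\alpha}([\gamma]) = \gamma^{A}(\alpha)$. This holds for every $\alpha \in \mdomA{\phi(\vec{p})}$, noting that such $\alpha$ also lies in $\mdomA{\gamma(\vec{p})}$ for each subformula $\gamma$, by Definition~\ref{def:substitution}. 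The strengthening is needed because the quotient step identifies shared subformulas, and we must know the two copies are sent to the same element.

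For the base cases: if $\phi = \top$, take the unique morphism $\Two \rightarrow A$ (well defined because \ref{cond:ext} puts $0_A,1_A$ in a Boolean subalgebra). If $\phi = p_i$, send $p_i \mapsto \alpha(p_i)$ and $\neg p_i \mapsto \neg_A \alpha(p_i)$. This is a homomorphism from the four-element algebra since $\{\alpha(p_i)\}$ extends by \ref{cond:ext} to a Boolean subalgebra containing $0_A,1_A$. The negation case is immediate: take $\hat{\alpha}$ for $\psi$ and use that morphisms preserve $\neg$.

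For $\phi = \psi_1 \bowtie \psi_2$, restrict $\alpha$ to get $\alpha \in \mdomA{\psi_j}$ and obtain $\hat{\alpha}_j\colon F_{\psi_j} \rightarrow A$ by induction. The coproduct property yields $h\colon H_{\phi} \rightarrow A$ with $h \circ \iota_j = \hat{\alpha}_j$. Since $\alpha \in \mdomA{\phi}$, we have $\psi_1^A(\alpha) \odot_A \psi_2^A(\alpha)$, i.e.\ $h(\iota_1[\psi_1]) \odot_A h(\iota_2[\psi_2])$. So \ref{item:extension-universal} gives $\hat{h}\colon G_{\phi} \rightarrow A$ with $\hat{h}\circ\eta_\phi = h$. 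For the quotient, each pair in $\circleddash$ is $(\iota_1[\gamma],\iota_2[\gamma])$ with $\gamma$ a common subformula. By the strengthened hypothesis both are sent to $\gamma^A(\alpha)$, so \ref{item:quotient-universal} gives $\hat{\alpha}\colon F_\phi \rightarrow A$.

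Finally we check $\hat{\alpha}([\phi]) = \phi^A(\alpha)$. Since $[\phi]_G$ is the term $\eta(\iota_1[\psi_1]) \bowtie \eta(\iota_2[\psi_2])$ and $\hat{h}$ preserves $\bowtie$ on commeasurable pairs, it equals $\psi_1^A(\alpha)\bowtie_A\psi_2^A(\alpha) = \phi^A(\alpha)$. The strengthened hypothesis for the other subformulas of $\phi$ follows because they are subformulas of $\psi_1$ or $\psi_2$, and the maps commute with $\iota_j,\eta,\upsilon$.

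The main obstacle is bookkeeping rather than substance. We have to make precise what ``$[\gamma]$ in $F_\phi$'' means for a subformula $\gamma$, namely its image under the composite canonical maps. We also have to confirm that $[\phi]_G$ really is $\eta$ applied to the pair under the extension's $\bowtie$, which relies on \ref{item:extension-existence} making that pair commeasurable in $G_\phi$. Once this is fixed, each step is a direct invocation of a universal property.
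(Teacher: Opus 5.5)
Your proposal is correct and follows the same route as the paper: structural induction, with the binary case assembled from the coproduct $H_{\phi}$, the $\oext$-extension $G_{\phi}$ and the $\circleddash$-quotient $F_{\phi}$. Strengthening the induction hypothesis to all of $\Sub(\phi)$ is what actually justifies the quotient step, and you also correctly invoke the universal properties \ref{item:extension-universal} and \ref{item:quotient-universal}. The paper's proof only gestures at that step: it cites \ref{item:extension-existence} and \ref{item:quotient-existence} together with the bare fact $\alpha \in \mdomA{\psi_1(\vec{p})} \cap \mdomA{\psi_2(\vec{p})}$.
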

\begin{proof}
By structural induction on $\phi(\vec{p})$, we construct $\hat{\alpha}$ and prove that $\hat{\alpha}([\phi]) = \phi^{A}(\alpha)$.
For the base cases,
  \begin{enumerate}[label=(\arabic*)]
    \item Suppose $\phi = \top$, then $F_{\phi} = \Two$. By \ref{cond:ext}, every pBA contains $\Two$ as a Boolean subalgebra.
      Let $\hat{\alpha}\colon F_{\phi} \rightarrow A$ be the inclusion from the $\Two = F_{\phi}$ subalgebra into $A$.
      By Definition~\ref{def:substitution}, for all $\alpha \in \mdomA{\phi(\vec{p})} = A^{\vec{p}}$, $\phi^{A}(\alpha) = 1_A$.
      Since $\pBA$ morphisms preserve units, $\phi^{A}(\alpha) = 1_A = \hat{\alpha}(1_{F_{\phi}}) = \hat{\alpha}([\top])$.
    \item Suppose $\phi = p_i$, then $F_{\phi}$ is the four element Boolean algebra on the set $\{0,p_i,\neg p_i,1\}$.
      For all $\alpha \in \mdomA{\phi(\vec{p})}$, we obviously have a $\pBA$-morphism $\hat{\alpha}\colon F_{\phi} \rightarrow A$ generated by $\hat{\alpha}(p_i) = \alpha(p_i) = \phi^{A}(\alpha)$ since the image of the other elements in $F_{\phi}$ follow from preserving negation and units.
  \end{enumerate}
  For the inductive steps,
  \begin{enumerate}[resume,label=(\arabic*)]
    \item Suppose $\phi = \neg \psi$. By the inductive hypothesis, for every $\alpha \in \mdomA{\psi(\vec{p})} = \mdomA{\phi(\vec{p})}$, there is a $\pBA$ morphism $\hat{\alpha_{\psi}}\colon F_{\psi} \rightarrow A$ such that $\hat{\alpha}_{\psi}([\psi]) = \psi^{A}(\alpha)$.
      Since $F_{\phi} = F_{\psi}$, we can take $\hat{\alpha} = \hat{\alpha_{\psi}}$.
      By $\hat{\alpha}$ preserving negations, we obtain that $\hat{\alpha}([\phi]) = \hat{\alpha}([\neg \psi]) = \neg_{A} \hat{\alpha}([\psi]) = \neg_{A} \psi^{A}(\alpha) = \phi^{A}(\alpha)$.
    \item Suppose $\phi = \psi_1 \bowtie \psi_2$ for $\bowtie \in \{\wedge,\vee\}$.
      By hypothesis, $\alpha \in \mdomA{\phi(\vec{p})}$, so in particular $\alpha \in \mdomA{\psi_1(\vec{p})} \cap \mdomA{\psi_2(\vec{p})}$ and $\psi^A_1(\alpha) \odot_A \psi^A_2(\alpha)$.
      By the inductive hypothesis, for $i \in \{1,2\}$ there exist $\pBA$-morphisms $\hat{\alpha}_i \colon F_{\psi_i} \rightarrow A$ such that $\hat{\alpha}_i([\psi_i]) = \psi^A_i(\alpha)$.
      The construction of $F_{\phi}$ is in three steps: $H_{\phi} = F_{\psi_1} \uplus F_{\psi_2}$, $G_{\phi} = H_{\phi}[\circledcirc]$, and finally $F_{\phi} = G_{\phi}\backslash \circleddash$.
      These structures are equipped with morphisms $\eta_{\phi}\colon H_{\phi} \rightarrow G_{\phi}$, $\nu_{\phi}\colon G_{\phi} \rightarrow F_{\phi}$.
      From the universal property of the coproduct $H_{\phi}$, there exists a unique morphism $\hat{\alpha}_1 \uplus \hat{\alpha}_2 \colon H_{\phi} \rightarrow A$.
      By property~\ref{item:extension-existence} of $\circledcirc$-extension $G_{\phi}$ and $\psi^A_1(\alpha) \odot_A \psi^A_2(\alpha)$, there exists a morphism $h\colon G_{\phi} \rightarrow A$.
      By property~\ref{item:quotient-existence} of $\circleddash$-quotient $F_{\psi}$ and $\alpha \in \mdomA{\psi_1(\vec{p})} \cap \mdomA{\psi_2(\vec{p})}$, there exists a morphism $\bar{h}\colon G_{\phi} \rightarrow F_{\phi}$.
      Moreover, the collection of these morphisms are such that the following diagram commutes:
      \[
        \begin{tikzcd}
          H_{\phi} \ar[d,"{\eta_\phi}"'] \ar[r,"{\hat{\alpha}_1 \uplus \hat{\alpha}_2}"] & A \\
          G_{\phi} \ar[ur,"h"] \ar[d,"{\nu_\phi}"'] \\
          F_{\phi} \ar[uur,"\bar{h}"']
        \end{tikzcd}
      \]
    \end{enumerate}
    Thus, we can set $\hat{\alpha} = \bar{h}$.
    From chasing the diagram and $\phi = \psi_1 \bowtie \psi_2$, we can conclude that $\hat{\alpha}([\phi]) = \phi^{A}(\alpha)$.
  \end{proof}
  The pBA $F_{\phi}$ is the minimal pBA which has an element $[\phi] \in F_{\phi}$ corresponding to the formula $\phi$ such that if $\phi$ has variables amongst $\vec{p} = (p_1,\dots,p_n)$ there is a corresponding meaningful subsitution $\gamma$ where $\gamma(p_i) = [p_i] \in F_{\phi}$ and $\phi^{F_{\phi}}(\gamma) = [\phi]$.
  Thus, we can reduce $\All$-satisfiablity to weak satisfiablity in $F_{\phi}$ via $\gamma$.
  Similarly, we can consider $M_{\phi}$ where $M_{\phi}$ is the quotient of $F_{\phi}$ by the relation $\{([\phi],1_{F_{\phi}})\}$, and the corresponding meaningful subsitution $\chi$ where $\chi(p_i) = [[p_i]] \in M_{\phi}$ and $\phi^{M_{\phi}}(\chi) = 1 \in M_{\phi}$.
  In this case, we can reduce $\All$-satisfiablity to non-triviality and strong satisfiablity in $M_{\phi}$ via $\chi$.
\begin{proposition}
  \label{prop:all-sat-to-free}
  Let $\phi(\vec{p})$ be a propositional formula, then the following are equivalent:
  \begin{enumerate}[label=(\arabic*)]
    \item \label{item:all-sat-free-all} $\phi$ is $\All$-satisfiable
    \item \label{item:all-sat-free-weak} $\phi$ is weakly satisfied in $F_{\phi}$ via $\gamma$.
    \item \label{item:all-sat-free-strong} $\phi$ is strongly satisfied in $M_{\phi}$ via $\chi$ and $M_{\phi}$ is non-trivial.
  \end{enumerate}
\end{proposition}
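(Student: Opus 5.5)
I would prove the equivalences cyclically as \ref{item:all-sat-free-all} $\Rightarrow$ \ref{item:all-sat-free-weak} $\Rightarrow$ \ref{item:all-sat-free-strong} $\Rightarrow$ \ref{item:all-sat-free-all}. Two ingredients do most of the work: the freeness property of $F_{\phi}$ (Proposition~\ref{prop:free-fphi}), and the fact that $\pBA$ morphisms commute with meaningful substitution (Proposition~\ref{prop:morphism-and-substitution}).

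For \ref{item:all-sat-free-all} $\Rightarrow$ \ref{item:all-sat-free-weak}, suppose $\phi$ is $\All$-satisfiable. By Proposition~\ref{prop:strong-weak}, take $A \in \All$ and $\alpha \in \mdomA{\phi(\vec{p})}$ with $\phi^A(\alpha) = 1_A \neq 0_A$. Proposition~\ref{prop:free-fphi} gives a morphism $\hat{\alpha}\colon F_{\phi} \to A$ with $\hat{\alpha}([\phi]) = \phi^A(\alpha)$. By the remark preceding the statement, $\gamma$ is a meaningful substitution with $\phi^{F_\phi}(\gamma) = [\phi]$. If $[\phi] = 0_{F_\phi}$, then, since morphisms preserve units, $\phi^A(\alpha) = \hat{\alpha}(0_{F_\phi}) = 0_A$, a contradiction. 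Hence $\phi^{F_\phi}(\gamma) \neq 0$, so $\phi$ is weakly satisfied in $F_\phi$ via $\gamma$.

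For \ref{item:all-sat-free-weak} $\Rightarrow$ \ref{item:all-sat-free-strong}, let $\upsilon\colon F_\phi \to M_\phi$ be the quotient morphism from~\ref{item:quotient-existence}. Set $\chi = \res{\upsilon}{}\circ\gamma$. Proposition~\ref{prop:morphism-and-substitution} shows that $\chi$ is a meaningful substitution and that $\phi^{M_\phi}(\chi) = \upsilon([\phi]) = \upsilon(1_{F_\phi}) = 1_{M_\phi}$.

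The real content is non-triviality of $M_\phi$. The main obstacle is that we know only $[\phi] \neq 0$ in $F_\phi$, and quotients in $\pBA$ may collapse more than in the total case. I would apply Proposition~\ref{prop:strong-weak} in the following way. The class $\All$ is closed under collapse, and $F_\phi \in \All$ because $[\phi] \neq 0$ forces $0 \neq 1$. Hence the collapse $F_\phi \qs \{([\phi],1)\} = M_\phi$ lies in $\All$, i.e.\ is non-trivial. This relies on the paper's assertion that $\All$ is closed under collapse. If that assertion needed justifying, I would pass to a total Boolean subalgebra $B$ of $F_\phi$ containing $[\phi]$ (by \ref{cond:ext}) and argue that collapsing the nonzero element $[\phi]$ only identifies elements below the filter generated by $[\phi]$. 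That step is where I would expect the difficulty to sit.

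For \ref{item:all-sat-free-strong} $\Rightarrow$ \ref{item:all-sat-free-all}, $M_\phi$ is a non-trivial pBA, so $M_\phi \in \All$, and $\phi^{M_\phi}(\chi) = 1_{M_\phi}$ witnesses that $\phi$ is strongly, and hence $\All$-, satisfiable. This closes the cycle.
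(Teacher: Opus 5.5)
Your cycle is the paper's: (1)$\Rightarrow$(2) goes through $\hat{\alpha}\colon F_{\phi}\to A$ from Proposition~\ref{prop:free-fphi}, and (3)$\Rightarrow$(1) is immediate. In (1)$\Rightarrow$(2), drop the appeal to Proposition~\ref{prop:strong-weak}. You only use $\phi^A(\alpha)\neq 0_A$, and that proposition, applied to $\All$, rests on the very closure property at issue.

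The genuine gap is the non-triviality of $M_{\phi}$ in (2)$\Rightarrow$(3), which is the only non-formal step. Since (2) is equivalent to weak $\All$-satisfiability and (3) to strong $\All$-satisfiability, (2)$\Rightarrow$(3) says precisely that weak implies strong. Invoking ``$\All$ is closed under collapse'' is circular: closure under collapse applied to $A=F_{\phi}$ and $a=[\phi]$ \emph{is} the assertion that $F_{\phi}\qs\{([\phi],1)\}=M_{\phi}$ is non-trivial. The paper does not fill this in either. Its proof simply asserts that $[\phi]\neq 0$ makes $M_{\phi}$ non-trivial, and its earlier claim that $\All$ is closed under collapse ``since $\One\notin\All$'' does not follow. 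So you have located the crux correctly, but neither your argument nor the paper's establishes it.

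Your fallback does not work either. By \ref{item:quotient-existence}--\ref{item:quotient-universal}, $M_{\phi}$ is non-trivial iff some morphism $h\colon F_{\phi}\to C$ into a non-trivial pBA has $h([\phi])=1_C$. Such an $h$ must preserve the operations on every commeasurable pair, not only inside one Boolean subalgebra $B\ni[\phi]$. If $x\leq\neg[\phi]$ and $x\odot y$ with $y$ not commeasurable with $[\phi]$, then $h(x\vee y)=h(y)$. So elements outside $B$ and incomparable with $[\phi]$ get identified. These identifications create new commeasurabilities, which force further identifications in other contexts.

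Equivalently, $M_{\phi}$ is non-trivial iff $[\phi]\neq 0$ in the extension $F_{\phi}[\{[\phi]\}\times F_{\phi}]$ that makes $[\phi]$ commeasurable with everything. Freely adding commeasurability is exactly what can trivialise a pBA: $A[A\times A]$ is trivial whenever $A$ has no morphism to $\Two$.

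To see how global the question is, collapse a rank-$1$ projector $E_{\vec{v}}$ of $\Pj(\RAThree)$ to $1$. This forces $E_{\vec{u}'}=\neg E_{\vec{u}}$ for every orthogonal pair $\vec{u},\vec{u}'$ in a plane through $\vec{v}$, generically none of them commeasurable with $E_{\vec{v}}$. Moreover, by Gleason's theorem no non-trivial $\Pj(\Hilb)$ receives a morphism $h$ from $\Pj(\RAThree)$ with $h(E_{\vec{v}})=1$. Every state would pull back to the vector state of $\vec{v}$, forcing $h(E_{\vec{u}})=(\vec{u}\cdot\vec{v})^2 I$, which is not a projector for generic $\vec{u}$. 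This holds even though the filter quotient of every Boolean subalgebra containing $E_{\vec{v}}$ is non-trivial.

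What is missing is a genuine construction that turns a non-trivial pBA weakly satisfying $\phi$ into a non-trivial pBA strongly satisfying it. A single-context filter argument cannot supply this.
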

\begin{proof}
  \ref{item:all-sat-free-all} $\Rightarrow$ \ref{item:all-sat-free-weak} Suppose for contradiction $\phi$ is $\All$-satisfiable and $\phi^{F_{\phi}}(\gamma) = 0_{F_{\phi}}$.

  By definition of $\All$-weak satisfiablity, there exists a non-trivial pBA $A$ and $\alpha \in \mdomA{\phi(\vec{p})}$ such that $\phi^{A}(\alpha) \not= 0_A$.
  By Proposition~\ref{prop:free-fphi}, there exists a $\pBA$-morphism such that $\hat{\alpha} \colon F_{\phi} \rightarrow A$ where $\hat{\alpha}([p_i]) = \alpha(p_i)$ and $\hat{\alpha}([\phi]) = \phi^A(\alpha) \not= 0_A$.
  On the other hand, since $[\phi] = \phi^{F_{\phi}}(\gamma) = 0_{F_{\phi}}$, then by $\hat{\alpha}$ preserving units, $\hat{\alpha}([\phi]) = \hat{\alpha}(0_{F_{\phi}})= 0_A$. Contradiction.

  \ref{item:all-sat-free-weak} $\Rightarrow$ \ref{item:all-sat-free-strong} If $[\phi]  = \phi^{F_{\phi}}(\gamma) \not= 0_{F_{\phi}}$ with $\gamma \in {F_{\phi}}^{\phi(\vec{p})}$ defined as $\gamma(p_i) = [p_i] \in F_{\phi}$, then $M_{\phi}$ is non-trivial.
  By construction of the quotient $M_{\phi}$, $\phi^{M_{\phi}}(\chi) = 1_{M_{\phi}}$ with $\chi \in {M_{\phi}}^{\phi(\vec{p})}$ defined as $\chi(p_i) = [[p_i]] \in M_{\phi}$.

  \ref{item:all-sat-free-all} $\Leftarrow$ \ref{item:all-sat-free-weak} If $M_{\phi}$ is non-trivial, then $M_{\phi} \in \All$ and $\phi$ is $\All$-satisfiable.
\end{proof}
In light of Proposition~\ref{prop:all-sat-to-free}, and $F_{\phi}$ and $M_{\phi}$ being finite, there is an algorithm for $\All$-satisfiability which constructs $M_{\phi}$ from $\phi(\vec{p})$, checks its non-triviality, then decides $\VARSAT$ on input $(M_{\phi},1_{M_{\phi}},\phi(\vec{p}))$.
This algorithm demonstrates that $\All$-satisfiability and its dual problem is decidable resolving an open question posed in~\cite{logicalQuantum}.
However, in the worst case, $F_{\phi}$ and $M_{\phi}$ are doubly-exponential in the size of the input $\phi(\vec{p})$.
Thus, such an algorithm demonstrates that $\SAT{\All} \in \DEXPTIME$.

We can improve upon this result by observing that it is not necessary to construct the full $\pBA$ $A$ to check that it is non-trivial and that it satisfies $\phi(\vec{p})$.
Instead, we can augment the witness of $\VARSAT$, i.e.\ the meaningful substitution $\alpha \in A^{\vec{p}}$ with some additional data in order to demonstrate that $A$ is non-trivial.
To this end, suppose $A$ is a pBA, $\phi(\vec{p})$ is a propositional formula, and $\alpha \in A^{\phi(\vec{p})}$ is a meaningful substitution, we consider the induced subgraph $G_{\alpha}$ of $(A,\odot_A)$ by the subformula witnesses arising from $\alpha$.
Explicitly, $G_{\alpha}$ is the induced subgraph of $(A,\odot_A)$ on the set:
\[ V(G_{\alpha}) := \{\alpha_{\psi} \mid \psi \in \Sub(\phi(\vec{p})) \}. \]
where we define $\alpha_{\psi} := \psi^{A}(\alpha)$ for a cleaner notation.
The graph $G_{\alpha}$ has a root $\alpha_{\phi}$ corresponding to the full formula $\phi(\vec{p})$.

Let $K$ be family of cliques of $G_{\alpha}$ which cover every edge of $G_{\alpha}$.
For a clique $C \in K$, let $\Eq(C)$ be all the height-$1$ Boolean equations satisfied in $\langle C \rangle_A$ involving the elements in $C$.
Note that the equation $\alpha_{\phi} = 1$ in $\Eq(C)$ if the root $\alpha_{\phi}$ of $G_{\alpha}$ is in $C$.
A function $\nu \colon C \rightarrow \{0,1\}$ respects a set $E$ of height-$1$ Boolean equations if for every equation $e \in E$, $\nu$ restricted to the variables of $e$ is the interpretation of $e$ in the two-element Boolean algebra $\Two$, e.g.\ if $e$ is $\alpha_{\psi} = \alpha_{\psi_1} \wedge \alpha_{\psi_2}$, then $\nu(\alpha_{\psi}) = \nu(\alpha_{\psi_1}) \wedge_{\Two} \nu(\alpha_{\psi_2})$.
These notions are connected together in the following lemma and allow us to provide a polynomial size witness to the non-triviality of $A$.
\begin{lemma}
  \label{lem:ntriv}
  Let $A$ be a $\pBA$, $\phi(\vec{p})$ a propositional formula, $\alpha \in A^{\phi(\vec{p})}$, and $K$ be a non-empty clique-edge cover of $G_{\alpha}$.
  The following are equivalent:
\begin{enumerate}[label=(\arabic*)]
  \item \label{item:whole-ntriv} $A$ is non-trivial and $\phi^{A}(\alpha) = 1_A$.
  \item \label{item:subalgebra-ntriv} For every clique $C \in K$, $\langle C \rangle_A$ is non-trivial and satisfies $\Eq(C)$.
  \item \label{item:filter-ntriv} For every clique $C \in K$, for every $a \not\leq_A b \in C$, there exists a function $\nu_{a,b}\colon C \rightarrow \{0,1\}$ which respects $\Eq(C)$, $\nu_{a,b}(a) = 1$ and $\nu_{a,b}(b) = 0$.
\end{enumerate}
\end{lemma}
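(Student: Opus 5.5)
The plan is to prove the cycle \ref{item:whole-ntriv}$\Rightarrow$\ref{item:subalgebra-ntriv}$\Rightarrow$\ref{item:filter-ntriv}$\Rightarrow$\ref{item:whole-ntriv}. The recurring tool is \ref{cond:ext}: each clique $C \in K$ is a pairwise $\odot_A$-related set, so it lies in a total Boolean subalgebra of $A$. Hence $\langle C \rangle_A$ is a genuine finite Boolean algebra, and its order coincides with $\leq_A$ on $C$. I will also use that $0_A,1_A$ are commeasurable with every element. So we may assume, without changing $\langle C\rangle_A$, that $0_A,1_A \in C$, with $\Eq(C)$ containing the equations fixing them as $0$ and $1$. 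This convention is needed for the non-triviality bookkeeping below. Finally, I must check that the root $\alpha_\phi$ lies in some clique of $K$. If $\phi$ is compound, $\alpha_\phi$ is commeasurable with its immediate subformula witnesses, so it is incident to an edge of $G_\alpha$, which some clique covers. If $\phi$ is a variable or $\top$, $G_\alpha$ is a single vertex and the non-empty $K$ must contain it.

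\ref{item:whole-ntriv}$\Rightarrow$\ref{item:subalgebra-ntriv}: If $0_A \neq 1_A$, then each $\langle C \rangle_A$ contains both constants and is therefore non-trivial. It satisfies $\Eq(C)$ by the very definition of $\Eq(C)$. The one equation that is not automatic, $\alpha_\phi = 1$ for the clique containing the root, holds because $\phi^A(\alpha) = 1_A$ by hypothesis.

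\ref{item:subalgebra-ntriv}$\Rightarrow$\ref{item:filter-ntriv}: Fix $C$ and $a \not\leq_A b$ in $C$. In the finite Boolean algebra $B = \langle C \rangle_A$ we have $a \wedge \neg b \neq 0_B$. Choose an atom (equivalently an ultrafilter, or a homomorphism $h\colon B \to \Two$) below $a \wedge \neg b$. Then $h(a) = 1$ and $h(b) = 0$. Take $\nu_{a,b} = \res{h}{C}$. Since $h$ is a Boolean homomorphism, it sends every equation true in $B$, in particular every height-$1$ equation in $\Eq(C)$, to a true equation in $\Two$; so $\nu_{a,b}$ respects $\Eq(C)$.

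\ref{item:filter-ntriv}$\Rightarrow$\ref{item:whole-ntriv}: This is where I expect the only subtlety, namely extracting global non-triviality of $A$ from local data. Using the convention that $0_A,1_A \in C$, suppose $A$ were trivial. Then $\Eq(C)$ contains both $1 = 0$-type constraints and the constants' defining equations, and no $\nu$ can respect them. Conversely, suppose $A$ is non-trivial. Then $1_A \not\leq_A 0_A$, and \ref{item:filter-ntriv} supplies $\nu_{1,0}$, a respecting valuation with $\nu(1_A)=1$ and $\nu(0_A)=0$. Such a valuation forces $1_A \neq 0_A$, since equal elements must receive equal values under any respecting valuation, as the equation $1_A = 0_A$ would lie in $\Eq(C)$. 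So $A$ is non-trivial. For $\phi^A(\alpha)=1_A$, take the clique $C_0$ containing the root. Suppose $\alpha_\phi \neq 1_A$. Then $1_A \not\leq_A \alpha_\phi$, and $\nu_{1,\alpha_\phi}$ would respect $\Eq(C_0) \ni (\alpha_\phi = 1)$ while sending $\alpha_\phi$ to $0$, a contradiction.

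The main obstacle is precisely this last direction, and specifically making sure that condition \ref{item:filter-ntriv} is not vacuously satisfied by a collapsed algebra, where no pair $a \not\leq_A b$ exists. The constants-in-every-clique convention, justified by \ref{cond:ext}, is what I would try first. If the paper's definition of $\Eq(C)$ treats the equation $\alpha_\phi = 1$ differently, I would instead read it as a genuine constraint on respecting valuations and run the same contradiction argument.
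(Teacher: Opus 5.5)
The gap is in (3)$\Rightarrow$(1), and it cannot be closed without changing the statement. Your steps (1)$\Rightarrow$(2) and (2)$\Rightarrow$(3) are fine. So is getting $\phi^A(\alpha)=1_A$ from a valuation $\nu_{1_A,\alpha_\phi}$ in the clique containing the root; that is cleaner than the paper's route through (2).

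\textbf{The adjunction is not without loss of generality.} Adjoining $0_A,1_A$ to every clique leaves (2) unchanged, since $\langle C\rangle_A$ does not change. But (3) depends on $C$ and $\Eq(C)$ themselves, so the adjunction strictly strengthens (3). Without it, (3) does not even yield $\phi^A(\alpha)=1_A$. Take the four-element Boolean algebra with $\phi=p$, $\alpha(p)=a\notin\{0_A,1_A\}$ and $K=\{\{a\}\}$. There is no pair $x\not\leq_A y$ in $\{a\}$, so (3) holds vacuously, yet $\phi^A(\alpha)\neq 1_A$.

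\textbf{Non-triviality still does not follow with the constants adjoined.} Your argument for it does not work, for two reasons. In the case ``$A$ trivial'' you note that no valuation respects $\Eq(C)$. But (3) only asks for a $\nu_{a,b}$ when some pair $a\not\leq_A b$ exists, so nothing is contradicted. In the other case you assume $A$ is non-trivial in order to conclude that it is. In fact the one-element pBA $\One$ with $\phi=p$ is a counterexample. There $\leq_A$ is the full relation, and $C=\{0_A,1_A\}$ is a single point. So (3) holds vacuously, while (1) and (2) both fail.

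\textbf{What is needed.} Condition (3) has to be strengthened in two ways. First, require $0_A,1_A\in C$ (equivalently, let the pairs range over $C\cup\{0_A,1_A\}$). Second, require for each $C\in K$ at least one valuation $C\to\{0,1\}$ respecting $\Eq(C)$; in certificate terms, the guessed order must satisfy $1\not\leq 0$, so that a $\nu_{1,0}$ is supplied. With these additions your argument closes immediately. If $0_A=1_A$, then $\Eq(C)$ would contain both $1_A=1$ and $1_A=0$, and no valuation respects both.

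The paper's own proof has the same hole. In its (3)$\Rightarrow$(2) step, the ``atom'' $z$ assembled from the $\nu_{s_i,s_{i+1}}$ is never shown to be non-zero, and the one-element example shows this cannot be done in general.
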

\begin{proof}
  \ref{item:whole-ntriv} $\Rightarrow$ \ref{item:subalgebra-ntriv} If $A$ is non-trivial, then for every $C \in \mathcal{C}$, the smallest Boolean subalgebra $\langle C \rangle_A$ of $A$ containing $C$ must contain $\{0_A,1_A\}$.
  Since $A$ is non-trivial by hypothesis, $\{0_A,1_A\}$ is two-element set and $\langle C \rangle_A$ is non-trivial.
  By construction, the Boolean algebra $\langle C \rangle_A$ satisfies the equations in $\Eq(C)$.

  \ref{item:subalgebra-ntriv} $\Rightarrow$ \ref{item:filter-ntriv} Suppose $C \in \mathcal{C}$ and $a \not\leq_A b \in C$.
  By hypothesis, $\langle C \rangle$ is non-trivial.
  Thus, for $a \not\leq_A b \in C$, there exists a boolean algebra homomorphism $\mu_{a,b}\colon \langle C \rangle_A \rightarrow 2$ such that $\mu_{a,b}(a) = 1$ and $\mu_{a,b}(b) = 0$.
  We can then define $\nu_{a,b} = \mu_{a,b} \circ i$ where $i\colon C \rightarrow \langle C \rangle_A$ is the inclusion.
  Each of the function $\nu_{a,b}$ respect $\Eq(C)$ since $\mu_{a,b}\colon \langle C \rangle_A \rightarrow $ is a Boolean algebra homomorphism.

  \ref{item:subalgebra-ntriv} $\Leftarrow$ \ref{item:filter-ntriv} Suppose for every $C \in K$, there exists functions $\{\nu_{a,b}\}$ satisfying the conditions in \ref{item:filter-ntriv}.
  To show that $\langle C \rangle_A$ is non-trivial, it suffices to show that there exists an atom $z \in \langle C \rangle_A$ since atoms are by definition non-zero elements.
  To construct $z \in \langle C \rangle_A$, we note that every atom in $\langle C \rangle_{A}$ is equal to the meet of elements $d \in \langle C \rangle_{A}$ which are either generators $c \in C$ or complements $\overline{c}$ of generators.
  Let $S \subset C$ be the set of generators appearing (with or without negation) in the expression of an atom $z$.
  Without loss of generality, we can assume the generators which appear in $S$ form an anti-chain with respect to $\leq_A$ in $C$.
  Namely, if $c \leq_A d$ and both $c,d$ are generators in $S$, then because $c \wedge d = c$, any instance of $d$ in $S$ can be eliminated and any instance of $\overline{d} \in S$ would imply
  \[ z \leq c \wedge \overline{d} = (c \wedge d) \wedge \overline{d} = c \wedge (d \wedge \overline{d}) = 0 \]
  which would constradict the supposition that $z$ is an atom.
  We arbitrarily enumerate the anti-chain $s_1 \not\leq_{A} \dots \not\leq_{A} s_n$.
  Since every pair $s,t \in S$ is such that $s \not\leq_A t$, by hypothesis there is a function $\nu_{s,t}\colon C \rightarrow \{0,1\}$ which sends either $s$ or $t$ to $1$ and the other to $0$.
  Let $I \subset n$ be the subset of indices such that $\nu_{s_{i},s_{i+1}}(s_i) = 1$.
  Thus, we can then define $z = \bigwedge_{i \in I} s_i \wedge \bigwedge_{i \not\in I} \overline{s_i}$.
  Therefore $\langle C \rangle_A$ is non-trivial.
  From each function $\nu_{a,b}\colon C \rightarrow \{0,1\}$, we obtain a $\BA$-morphism $\hat{\nu_{a,b}}\colon F(C) \rightarrow \Two$ where $F(C)$ is the free Boolean algebra on elements.
  Since each function $\nu_{a,b}\colon C \rightarrow \{0,1\}$ respects $\Eq(C)$, through quotienting we obtain a $\BA$-morphism $\mu_{a,b}\colon \langle C \rangle_A \rightarrow \Two$.
  It follows that $\langle C \rangle_A$ satisfies $\Eq(C)$.

  \ref{item:whole-ntriv} $\Leftarrow$ \ref{item:subalgebra-ntriv} Since $K$ is non-empty, there exists some clique $C \in K$.
  By hypothesis, the subalgebra $\langle C \rangle_A$ is non-trivial, so $A$ must be non-trivial.
  It follows from every $\langle C \rangle_A$ satisfying  $\Eq(C)$ that $\phi^{A}(\alpha) = 1_A$
  Namely, if we recursively unpack the definition of $\phi^{A}\colon A^{\phi(\vec{p})} \rightarrow A$, we must verify at each subformula $\psi$ the corresponding height-$1$ equation $e_{\psi}$ is satisfied in $A$.
  Since $e_{\psi} \in \Eq(C)$ for some clique $C \in K$, then by hypothesis $e_{\psi}$ is satisfied in $\langle C \rangle_A$, and thus $A$.
\end{proof}

\begin{theorem}\label{thm:allsat}
$\SAT{\All}$ is $\NP$-complete.
\end{theorem}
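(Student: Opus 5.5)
Hardness is immediate from Corollary~\ref{cor:csat-np-hard}, since $\One \not\in \All$. So the work lies in membership in $\NP$. The plan is to design a polynomial-size certificate, guided by Lemma~\ref{lem:ntriv}, and to show that such a certificate exists exactly when $\phi$ is $\All$-satisfiable. By Proposition~\ref{prop:strong-weak} it suffices to treat strong satisfiability.

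\textbf{The certificate.} It is a finite abstract description of the graph $G_\alpha$. It consists of:
\begin{itemize}
\item a partition of $\Sub(\phi)$ into classes, recording which subformula witnesses $\alpha_\psi$ coincide, and which of them equal $0$ or $1$;
\item a commeasurability graph on the classes, which must contain an edge for every pair $(\psi_1,\psi_2)$ with $\psi_1\bowtie\psi_2\in\Sub(\phi)$;
\item a clique-edge cover $K$ of that graph, with each $C\in K$ carrying a set $\Eq(C)$ of height-$1$ equations among its elements, including $\alpha_\phi=1$ if the root is in $C$ and the defining equation $\psi=\psi_1\bowtie\psi_2$ or $\psi=\neg\psi_1$ for each subformula whose arguments all lie in $C$;
\item for each $C$ and each pair $a\not\leq b$ in $C$, a separating valuation $\nu_{a,b}\colon C\to\{0,1\}$ respecting $\Eq(C)$.
\end{itemize}
Taking $K$ to be the set of edges together with the cliques generated by each connective node keeps everything polynomial: there are $O(|\phi|^2)$ valuations, each of size $O(|\phi|)$. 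The verifier checks these conditions in polynomial time. The order $\leq$ on a clique is itself part of the certificate, and the verifier checks it against the valuations: $a\leq b$ forces $\nu(a)\leq\nu(b)$ for every listed $\nu$.

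\textbf{Completeness.} If $\phi$ is strongly satisfied in some non-trivial $A$ via $\alpha$, we read off the certificate from $G_\alpha$ using the true equalities and equations in each $\langle C\rangle_A$. Lemma~\ref{lem:ntriv}, direction \ref{item:whole-ntriv}$\Rightarrow$\ref{item:filter-ntriv}, then supplies the valuations.

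\textbf{Soundness (main obstacle).} Here we must build a non-trivial pBA $A$ from the certificate alone. The plan is to glue the finite Boolean algebras $B_C$, each being the quotient of the free Boolean algebra on $C$ by $\Eq(C)$, along their shared elements. We would use the colimit machinery: the coproduct, then the extension $[\oext]$, then the quotient $\qs\oext$, as in the construction of $F_\phi$. The danger is that this colimit collapses to $\One$.

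To block the collapse, we would use the valuations. For each $C$ they give homomorphisms $B_C\to\Two$, which show $B_C$ is non-trivial with the prescribed order. The key is that the cliques overlap only in subformula witnesses, whose equalities are fixed globally by the partition. So we can instead realise the gluing concretely as a pBA whose underlying set is the disjoint union of the $B_C$ modulo identification of shared generators and of $0,1$. Commeasurability is defined within each $B_C$. One must then verify \ref{cond:ext}. This is the step most likely to need care: a set that is pairwise commeasurable across different blocks could fail to lie in a common Boolean subalgebra. The fix is to close $K$ under the triangles of the graph beforehand, so that every pairwise-commeasurable set lies inside one clique.

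Given a valid pBA, it is non-trivial because $0\neq1$ in each $B_C$. It satisfies $\phi^{A}(\alpha)=1_A$ via direction \ref{item:filter-ntriv}$\Rightarrow$\ref{item:whole-ntriv} of Lemma~\ref{lem:ntriv}, which yields membership in $\NP$.
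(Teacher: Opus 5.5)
Your route is the paper's. Hardness comes from Corollary~\ref{cor:csat-np-hard}, and membership uses a certificate made of an abstract copy of $G_\alpha$, a clique cover $K$, the sets $\Eq(C)$ and separating valuations, justified through Lemma~\ref{lem:ntriv}. You are right that soundness is the crux. The paper only cites Lemma~\ref{lem:ntriv}, whose implication \ref{item:filter-ntriv}$\Rightarrow$\ref{item:whole-ntriv} presupposes that $A$ and $\alpha$ already exist.

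However, your gluing does not supply the missing pBA. It breaks at the claim that the cliques ``overlap only in subformula witnesses, whose equalities are fixed globally by the partition''. Blocks sharing commeasurable generators share every Boolean combination of them. If you identify only the generators, $\wedge$ and $\vee$ become multi-valued. If you identify the generated subalgebras, then a witness equal to such a combination in one block becomes, by \ref{cond:ext}, commeasurable with the \emph{whole} other block, and its defining equations must hold there as well. Such forced edges are not in your graph, so closing $K$ under its triangles cannot recover them. Nothing in the certificate checks that overlapping blocks induce the same relations on what they share. So local consistency of each $B_C$ plus a global partition of witnesses is not sufficient.

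A repair would need at least four things:
\begin{itemize}
\item a closure condition: a witness lying in $\langle C\cap C'\rangle$ inside $B_C$ must also belong to $C'$;
\item agreement of the relations induced on overlaps;
\item a proof that such a family glues to a structure satisfying \ref{cond:ext};
\item a polynomial-time way to check all of this.
\end{itemize}
Membership of a witness in $\langle C\cap C'\rangle$ is a statement about \emph{all} valuations respecting $\Eq(C)$, which a few listed $\nu_{a,b}$ do not certify. Separately, covering all maximal cliques can make $K$ exponential.

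Here is a formula your verifier wrongly accepts. Let $\psi(a,b,c)$ be the formula from the proof of Proposition~\ref{prop:sat-reduction}. For each of the eight sign patterns $\epsilon=(\ell_a,\ell_b,\ell_c)$ with $\ell_v\in\{v,\neg v\}$:
\begin{itemize}
\item let $x_\epsilon$ be $\ell_a\wedge\ell_b$ if $\ell_c=c$, and $\ell_b\wedge\ell_a$ if $\ell_c=\neg c$;
\item let $\bar\ell_c$ be the literal opposite to $\ell_c$.
\end{itemize}
Set
\[ \Phi:=\psi(a,b,c)\wedge\bigwedge_{\epsilon}\bigl(\neg x_\epsilon\vee\bar\ell_c\bigr). \]
Since $a\wedge b$, $a\wedge c$ and $b\wedge c$ occur in $\psi$, any meaningful substitution makes $a,b,c$ pairwise commeasurable. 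By \ref{cond:ext} they lie in one Boolean subalgebra $T$, and all of $\Phi$ is evaluated inside $T$. As $\Phi$ is a classical contradiction, its value is $0$, so $\Phi$ is not $\All$-satisfiable.

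Now take the following certificate:
\begin{itemize}
\item the class of $1$ consists of the subformulas forced to $1$ by the root: the top-level conjunction nodes, the conjuncts of $\psi$, and the eight clauses;
\item all other subformulas are pairwise distinct and different from $0,1$;
\item the graph has only the required edges;
\item $K$ is the set of all maximal cliques.
\end{itemize}
These cliques are the clique $L$ of the six literals, whose block is the free Boolean algebra on $a,b,c$, together with cliques such as $\{\ell_a,\ell_b,x_\epsilon\}$, $\{x_\epsilon,\neg x_\epsilon\}$, $\{\neg x_\epsilon,\bar\ell_c,1\}$ and small ones inside $\psi$. Each block is free modulo negation equations and at most one other non-trivial defining equation. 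So all separating valuations exist and every check passes.

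Yet no gluing exists. $B_L$ and the block of $\{\ell_a,\ell_b,x_\epsilon\}$ share $\ell_a,\ell_b$, so $x_\epsilon$ must be identified with $\ell_a\wedge\ell_b\in B_L$. Then $\neg x_\epsilon$ and $\bar\ell_c$ both lie in $B_L$, where their join misses the atom $\ell_a\wedge\ell_b\wedge\ell_c$ and so is not $1$. This contradicts the block of the clause, which forces $\neg x_\epsilon\vee\bar\ell_c=1$. The same certificate also passes the checks of Lemma~\ref{lem:ntriv}\ref{item:filter-ntriv} on which the paper's own proof rests.
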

\begin{proof}
  Hardness follows from Corollary~\ref{cor:csat-np-hard}.
  To show membership in $\NP$, we need to produce a certificate of non-triviality of some pBA $A$ and strong-satisfiablity of the input $\phi$ via a meaningful substitution $\alpha \in A^{\phi(\vec{p})}$.
  The problem is the pair $(A,\alpha)$ is unlikely to be polynomial size.
  Instead, we use a certificate consisting of the following data which gives a partial view of the objects $A$ and $\alpha$:
  \begin{itemize}
    \item A compatiblity graph $G$ which view as isomorphic to $G_{\alpha}$ of size $\leq |\phi(\vec{p})|$
    \item A binary relation $\leq$ on $V(G)$ which view as a restriction of $\leq_A$ to $V(G_{\alpha})$ and pulled back along the isomorphism $G \cong G_{\alpha}$.
    \item A non-empty clique edge cover $K$ of $G$ which encode as at most $|E(G)|$ bitstrings of length $|V(G)|$.
    \item For every $C \in K$, a set of equations $\Eq(C)$ of size $\leq |\Sub(\phi(\vec{p}))|$.
    \item For every $C \in K$, a set of at most $|C|^2$ functions $\nu_{a,b}\colon C \rightarrow \{0,1\}$ (one for every pair $a \not\leq b$) encoded as bitstrings of length $|C|$.
  \end{itemize}
  The size bounds demonstrate that our certificate is polynomial in the length of $\phi(\vec{p})$.
  By Lemma~\ref{lem:ntriv}, these data are sufficient to prove non-triviality of $A$ and satisfiability of $\phi$ via $\alpha$.
  Checking the conditions in Lemma~\ref{lem:ntriv}\ref{item:filter-ntriv} can be done in polynomial time.
  $\SAT{\All}$ is $\NP$-hard by $\One \not\in \All$ and Proposition~\ref{prop:sat-reduction}.
\end{proof}
\section{Kochen-Specker Proofs}
\label{sec:kochen-specker}
Partial Boolean algebras were defined by Kochen and Specker in order to formalise a necessary condition for non-contextual hidden-variable theories of quantum theory to exist.
This condition involved the existence of a $\pBA$ morphism from pBAs arising in quantum theory to the two-element Boolean algebra $\Two$.
Thus, to begin we define these motivating pBAs arising in quantum theory.

Recall that in the standard textbook formulation of quantum theory the pure states of a quantum system live in a vector space $\Hilb$ over the field of complex numbers $\mathbb{C}$.
Measurements of a quantum system with pure states in $\Hilb$ are identified with the bounded self-adjoint, also called Hermitian, linear maps $M\colon \Hilb \rightarrow \Hilb$.
The operation of measuring $M$ on a state yields one of the eigenvalues $\lambda$ of $M$ and as $M$ is self-adjoint, $\lambda \in \mathbb{R}$.
After measurement, the state of the system collapses to one of the eigenvectors corresponding to the eigenvalue $\lambda$.
We denote the set of bounded self-adjoint linear maps over $\Hilb$ as $\Bnd(\Hilb)$.
By the spectral theorem, each outcome $\lambda$ of the measurement $M \in \Bnd(\Hilb)$ can be identified with the idempotent element $E_{\lambda} \in \Bnd(\Hilb)$ which projects onto the eigenspace of the eigenvalue $\lambda$ of $M$.
We denote the set of idempotent self-adjoint linear maps over $\Hilb$ as $\Pj(\Hilb)$.
Projectors in $\Pj(\Hilb)$ have eigenvalues $0,1$, i.e.\ answer yes/no questions, and are in bijection with subspaces of $\Hilb$.
Thus, `measuring' the yes/no question $E_{\lambda}$ yields an answer to the question: Did measurement $M$ have outcome $\lambda$?
Given a projector $E \in \Pj(\Hilb)$, let $\im(E)$ denote its image subspace and $\rank(E)$ denote the dimension of $\im(E)$, i.e.\ the \emph{rank} of $E$.
\begin{definition}
\label{def:projector-pba}
Given a real or complex Hilbert space $\Hilb$, the \emph{$\Hilb$-projector partial Boolean algebra} is the set $Q = \Pj(\Hilb)$ where:
\begin{itemize}
  \item $1_Q$ is the projector $I_{\Hilb}$ onto the whole space $\Hilb$, $0_Q$ is the projector $0_{\Hilb}$ onto the $0$-dimensional subspace of $\Hilb$,
  \item $E \odot_Q F$ if $[E,F] = EF - FE = 0_{H}$, i.e.\ two projectors are commeasurable if they commute,
  \item $\neg_{Q} E = \ocomp{E} = I_{\Hilb} - E$,
  \item $E \vee_{Q} F = E + F - EF$,
  \item $E \wedge_{Q} F = EF$.
\end{itemize}
\end{definition}
Pure states of a quantum system are vectors (up to normalisation) and thus can be recovered from $\Pj(\Hilb)$ as the set of rank-$1$ projectors, denoted $\Pj_1(\Hilb)$.

For a non-contextual hidden variable theory of the quantum system described $\Hilb$ to exist, there would necessarily be a $\pBA$-morphism $f\colon \Pj(\Hilb) \rightarrow \Two$.
Thus, the Kochen-Specker theorem demonstrates that there cannot exist such a $\pBA$-morphism for real or complex Hilbert space $\Hilb$ of dimension $\geq 3$.
Proofs of Kochen-Specker's theorem can be formulated as exhibiting an orthogonality graph $G$ of vectors in $\Hilb$ which lacks a certain two colouring of the vertices.

We make the notion of Kochen-Specker proof precise using a formalisation which differs slightly from the usual graph-theoretic formalisations, see e.g.\ \cite{gadgetKS} for details, by allowing vertices to be assigned to non-unital projectors and not necessarily injectively.
In the following definitions, a graph $G$ is simple, undirected, without self-loops.
A graph $G$ has vertex set $V(G)$, edge set $E(G)$, and set of maximum cliques $\Omega(G)$.
\begin{definition}
  Given a graph $G$ and real or complex Hilbert space $\Hilb$, an \emph{orthogonal assignment $f\colon G \rightarrow \Pj(\Hilb)$} is a function $f\colon V(G) \rightarrow \Pj(\Hilb)$ if the following two conditions hold:
  \begin{enumerate}[label=(O\arabic*)]
    \item \label{item:oa-edge-condition} For every $(v,w) \in E(G)$, $f(v)f(w) = 0_{\Hilb}$
    \item \label{item:oa-clique-condition} For every maximum clique $C \in \Omega(G)$ of $G$, $\sum_{v \in C} f(v) = I_{\Hilb}$.
  \end{enumerate}
  We say $f$ is a \emph{rank-$1$ orthogonal assignment} if $f \colon V(G) \rightarrow \Pj_1(\Hilb)$ and use the notation $f\colon G \rightarrow \Pj_1(\Hilb)$.
\end{definition}
An orthogonal assignment is a generalisation of the notion of non-contextual colouring.
\begin{definition}
  Given a graph $G$, a function $f\colon V(G) \rightarrow \{0,1\}$  is a \emph{non-contextual colouring of $G$} if the following two conditions hold:
  \begin{enumerate}[label=(C\arabic*)]
    \item \label{item:nc-edge-condition} For every edge $(v,w) \in E$, $f(v) + f(w) \leq 1$
    \item \label{item:nc-clique-condition} For every maximum clique $C \in \Omega(G)$ of $G$, $\sum_{v \in C} f(v) = 1$.
  \end{enumerate}
\end{definition}
\begin{definition}
  \label{def:kochen-specker-proof}
  For a (possibly infinite) cardinal $d$, a graph $G$ is a \emph{$d$-dimensional Kochen-Specker proof} if $G$ has an orthogonal assignment $f\colon G\rightarrow \Pj(\Hilb)$ to a real or complex $d$-dimensional Hilbert space $\Hilb$ and $G$ does \textbf{not} have a non-contextual colouring.
\end{definition}

If $S \subseteq \Hilb$ is a set vectors, we can form its orthogonality graph $G_S$ with vertices $S$ and $\vec{v},\vec{v'} \in S$ are adjacent if $\vec{v}^{\dagger}\vec{v'} = 0$, i.e.\ are orthogonal.
Obviously, $G_S$ has a rank-$1$ orthogonal assignment $f_{S}(\vec{v}) = E_{\vec{v}}$.
We say $S$ is basis-complete if it is equal to the union of a family of orthonormal bases $\mathcal{B}_{S}$.
For any basis-complete set $S \subseteq \FK^d$, the orthogonality graph $\omega(G_S) = d$ and every maximal clique is a maximum clique.
Appropriating terminology from topology, we will say a graph $G$ is a \emph{$d$-facet graph} if $\omega(G) = d$ and every maximal clique is a maximum clique.
If $S$ is basis-complete, then $G_S$ is a facet graph, so condition~\ref{item:nc-edge-condition} is redundant and we obtain the following equivalence:
\begin{proposition}
\label{prop:bc-colouring}
If $S$ is basis-complete, then the following are equivalent:
\begin{enumerate}[label=(\arabic*)]
  \item \label{item:bc-nc-colouring} $G_S$ does not have a non-contextual colouring.
  \item \label{item:bc-2c-colouring} $G_S$ has no colouring $f\colon S \rightarrow \{0,1\}$ satisfying condition \ref{item:nc-clique-condition}.
  \item \label{item:bc-mc-clique} For every function $f\colon S \rightarrow \{0,1\}$, there exists a maximum (equiv. maximal) clique $C_{f}$ of $G_{S}$ such that $\sum_{c \in C_f} f(u) \not= 1$.
\end{enumerate}
\end{proposition}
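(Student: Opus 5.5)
The plan is to prove the cycle \ref{item:bc-nc-colouring} $\Leftrightarrow$ \ref{item:bc-2c-colouring} $\Leftrightarrow$ \ref{item:bc-mc-clique}. The substantive step is the first equivalence, which rests on the observation, noted just before the statement, that $G_S$ is a $d$-facet graph. The second equivalence is a purely logical restatement.

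For \ref{item:bc-nc-colouring} $\Leftrightarrow$ \ref{item:bc-2c-colouring}, first observe that every non-contextual colouring satisfies \ref{item:nc-clique-condition}. Hence if no colouring satisfies \ref{item:nc-clique-condition}, there is no non-contextual colouring, which gives \ref{item:bc-2c-colouring} $\Rightarrow$ \ref{item:bc-nc-colouring}.

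For the converse, I show that any $f\colon S\to\{0,1\}$ satisfying \ref{item:nc-clique-condition} automatically satisfies \ref{item:nc-edge-condition}, i.e.\ that \ref{item:nc-edge-condition} is redundant.
\begin{itemize}
  \item Take an edge $(v,w)$. Because $S$ is basis-complete, $G_S$ is a facet graph, so the clique $\{v,w\}$ extends to some maximal clique $C$.
  \item By the facet property, $C$ is in fact a maximum clique, so $C\in\Omega(G_S)$.
  \item Condition \ref{item:nc-clique-condition} then gives $\sum_{u\in C}f(u)=1$. All terms are non-negative and $v,w\in C$, so $f(v)+f(w)\le 1$.
\end{itemize}
Thus $f$ is a non-contextual colouring. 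Contrapositively, if $G_S$ has no non-contextual colouring, no $f$ satisfies \ref{item:nc-clique-condition}, which is \ref{item:bc-nc-colouring} $\Rightarrow$ \ref{item:bc-2c-colouring}.

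For \ref{item:bc-2c-colouring} $\Leftrightarrow$ \ref{item:bc-mc-clique}, negate ``$f$ satisfies \ref{item:nc-clique-condition}'': a function fails \ref{item:nc-clique-condition} exactly when some maximum clique $C_f$ has $\sum_{u\in C_f}f(u)\ne 1$. Since maximal and maximum cliques coincide in a facet graph, the parenthetical ``equiv.\ maximal'' is justified.

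The only point needing care is that $G_S$ really is a facet graph, which the paper asserts rather than proves. The clique number is $\omega(G_S)=d$, since pairwise orthogonal nonzero vectors in $\FK^d$ number at most $d$, and $S$ contains an orthonormal basis. The delicate part is that every maximal clique has size $d$. This uses the basis-complete hypothesis: we need every orthogonal family in $S$ to extend within $S$ to a full basis. If I find this is not implied by the paper's definition of basis-completeness (a union of orthonormal bases), I would state the facet-graph property as the operative hypothesis. Either way, the argument above uses only that the graph is a facet graph and that there exists some maximal clique containing any given edge.
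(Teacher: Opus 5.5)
Your argument is the paper's: (2)$\Rightarrow$(1) and (2)$\Leftrightarrow$(3) are unpacking, and (1)$\Rightarrow$(2) rests on \ref{item:nc-edge-condition} being redundant because every edge of $G_S$ lies in an orthonormal basis contained in $S$. The step you flagged is exactly where this breaks. The paper's proof has the same hole: it simply asserts that an orthogonal pair in a basis-complete $S$ lies in a common basis inside $S$. Being a union of orthonormal bases does not give this, and without it the proposition is false.

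Let $S$ be Peres' $33$ rays in $\mathbb{R}^3$, i.e.\ all permutations and sign changes of $(0,0,1)$, $(0,1,1)$, $(0,1,\sqrt{2})$, $(1,1,\sqrt{2})$. Every ray lies in one of its $16$ orthogonal triads, so $S$ is basis-complete in the paper's sense. By Peres' Kochen--Specker argument, $G_S$ has no non-contextual colouring, so (1) holds. Yet (2) fails, as the following colouring shows.
\begin{itemize}
\item Put $1$ on $(0,0,1)$, $(0,1,1)$, $(1,0,1)$, and $0$ on the other six rays without a $\sqrt{2}$ entry.
\item Each triad either consists only of rays without a $\sqrt{2}$ entry, and then already contains exactly one $1$, or it consists of one such ray together with two rays that have a $\sqrt{2}$ entry and lie in no other triad.
\item In each triad of the second kind that has no $1$ yet, put $1$ on one of its two $\sqrt{2}$-rays, and put $0$ everywhere else.
\end{itemize}
This satisfies \ref{item:nc-clique-condition}. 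The facet property also fails for this $S$: $(0,1,-\sqrt{2})\perp(1,\sqrt{2},1)$, but no ray of $S$ is parallel to their cross product $(3,-\sqrt{2},-1)$. So these two rays form a maximal clique of size $2$.

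Your fallback is therefore necessary, not a matter of taste. The direction (1)$\Rightarrow$(2), and the parenthetical ``equiv.\ maximal'' in (3), need an extra hypothesis: every orthogonal pair in $S$ extends to an orthonormal basis contained in $S$. In dimension $3$ this means $S$ contains a unit vector parallel to $v\times w$ whenever $v\perp w$ in $S$. The basis completion of the Conway--Kochen set must satisfy this too, because Proposition~\ref{prop:peres-choice} uses exactly (1)$\Rightarrow$(3). Under that hypothesis your proof is complete. It in fact only needs each edge to lie in some maximum clique, which is weaker than the full facet property in dimension $d>3$. The direction (2)$\Rightarrow$(1), and (2)$\Leftrightarrow$(3) read with ``maximum'' cliques, need no extra hypothesis.
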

\begin{proof}
  $\ref{item:bc-nc-colouring} \Rightarrow \ref{item:bc-2c-colouring}$ By contrapositive.
  Suppose $f$ satisfies \ref{item:nc-clique-condition}.
  It remains to show that $S$ being basis complete implies $f$ satisfies \ref{item:nc-edge-condition}.
  By $S$ being basis-complete, for any orthogonal pair of vectors $v \sim w$, there is a basis $B$ where $\{v,w\} \subseteq B$.
  Basis $B$ induces a maximum clique in $G_S$. Therefore,
  \[ f(v) + f(w) \leq \sum_{u \in B} f(u) = 1. \]
  $\ref{item:bc-nc-colouring} \Leftarrow \ref{item:bc-2c-colouring}$ By contrapositive. Any non-contextual colouring of $G$ satisfies \ref{item:nc-clique-condition}.
  $\ref{item:bc-2c-colouring} \Leftrightarrow \ref{item:bc-mc-clique}$ Item~\ref{item:bc-mc-clique} is simply unpacking the negation of \ref{item:nc-clique-condition}.
\end{proof}
The original proof of the Kochen-Specker theorem~\cite{hiddenVariableQM} was a proof in the sense of Definition~\ref{def:kochen-specker-proof} with dimension 3.
This proof was constructed as the orthogonality graph $G_S$ of a set of 117 vectors $S \subseteq \mathbb{R}^3$ (or 120 vectors, if we take the basis completion of $S$).
Subsequent work has resulted in Kochen-Specker proofs of lower cardinality.
The next section involves a reduction which uses a basis-complete $3$-dimensional Kochen-Specker proof $G$ as a gadget.
This reduction is entirely independent in the choice of basis-complete Kochen-Specker proof.
Therefore, we opt to consider the smallest known example in dimension $3$, the basis-completion of the Conway-Kochen set from~\cite{conwayKochenSet} consisting of only 40 vectors which we denote as $\CKS$.

The contrapositive of \cite[Theorem 4]{hiddenVariableQM} states that the non-existence of $\pBA$ morphism $h\colon A \rightarrow \Two$ is equivalent to exhibiting a classical propositional contradiction $\varphi$ which is satisfied in the pBA $A$.
It follows that from a finite $d$-dimensional Kochen-Specker proof $G$, we should be able to construct a propositional contradiction $\varphi_G$ which is satisifed in $\Pj(\Hilb)$.
For any finite graph $G$, we can associate a propositional formula $\varphi_{G}$:
\[ \varphi_{G}(\vec{p}) := \bigwedge_{(v,v') \in E(G)} \neg (p_{v} \wedge p_{v'}) \wedge \bigwedge_{C \in \Omega(G)} \bigvee_{v \in V(C)} p_{v}.  \]
where variables $[\vec{p}]$ are indexed by $V(G)$.
The desired connection of $\varphi_G$ to a Kochen-Specker proof $G$ is verified in the following proposition.
\begin{proposition}
\label{prop:ortho-assignments}
Let $Q = \Pj(\Hilb)$ for some real or complex Hilbert space.
$G$ has an orthogonal assignment $f\colon G \rightarrow Q$ iff $\varphi_G(\vec{p})$ is strongly satisifed in $Q$.
In paticular, $G$ has non-contextual colouring iff $\varphi_G(\vec{p})$ is classically satisfiable.
\begin{proof}
  The conjunction $\bigwedge_{(v,v')} \neg (p_v \wedge p_{v'})$ asserts that $f(v)f(v') = 0$, i.e.\ condition~\ref{item:oa-edge-condition}.
  Using $f(v)f(v') = 0$, the clause $\bigvee_{v \in V(C)} p_{v}$ requires that $\sum_{i \in \setd} f(v) = 1$, i.e\ condition~\ref{item:oa-clique-condition}.
  In this case of $Q = \Pj(\mathbb{K}) \cong \Two$, conditions~\ref{item:oa-edge-condition}-\ref{item:oa-clique-condition} are simply conditions~\ref{item:nc-edge-condition}-\ref{item:nc-clique-condition} of a non-contextual colouring.
\end{proof}
\end{proposition}

\section{Fixed dimension quantum satisfiability}
\label{sec:fixed-quantum-sat}
In this and the next section, we establish complexity bounds for the satisfiability problem for partial Boolean algebras which arise from the study of quantum contextuality.
This section focuses on the case for quantum system whose states reside in a $d$-dimensional Hilbert space for some $d \in \mathbb{N}$.

Since we restrict ourselves to the finite-dimensional case in this section, we take $\Hilb = \FK^d$.
In this case $\Bnd(\FK^d)$ is the set of $d \times d$ matrices $M \in M_{\FK}(d,d)$ which satisfy the self-adjoint condition: $M$ is equal to its conjugate transpose $M^{\dagger}$.
The partial Boolean algebra $Q = \Pj(\FK^d)$ consists of $d \times d$ projection matrices, $I_{\Hilb}$ is the identity matrix, $0_{\Hilb}$ is the all-zeros matrix, and the operations used in Definition~\ref{def:projector-pba} for defining $\neg_{Q}$, $\vee_{Q}$ and $\wedge_{Q}$ are matrix addition and multiplication.

Recall that each measurement has an associated set of orthogonal projectors called a PVM.
In the maximal case, this PVM is associated to a basis of the Hilbert space $\Hilb$.
We define the formula $\basis_d(p_1,\dots,p_d) := \varphi_{K_d}$ where $K_d$ is the $d$-vertex complete graph to `capture' PVMs in $\Hilb$.
\begin{proposition}
\label{prop:basis-capture}
Let $Q = \Pj(\FK^d)$, if $E_1,\dots,E_d \in Q$ are non-zero projectors, then $\basis_{d}^Q(E_1,\dots,E_d) = 1_Q$ if, and only if,
\begin{enumerate}[label=(\arabic*)]
  \item for every $i \in \setd$, there exists unit vectors $\vec{v_i} \in \FK^d$ where $E_i = \vec{v_i} \vec{v_i}^{\dagger}$, and
  \item $\{\vec{v_1},\dots,\vec{v_d}\}$ is an orthonormal basis of $\FK^d$.
\end{enumerate}
In particular, $(p_1,\dots,p_d) \mapsto (E_1,\dots,E_d)$ is a rank-$1$ assignment.
\end{proposition}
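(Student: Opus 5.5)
The plan is to unpack $\basis_d = \varphi_{K_d}$ using Proposition~\ref{prop:ortho-assignments} and then finish with a rank/trace count. The complete graph $K_d$ has edges between all pairs $i \neq j$ and a single maximum clique, namely $V(K_d)$ itself. So by Proposition~\ref{prop:ortho-assignments}, $\basis_d^Q(E_1,\dots,E_d) = 1_Q$ holds iff (O1) $E_iE_j = 0_{\Hilb}$ for all $i \neq j$ and (O2) $\sum_{i \in \setd} E_i = I$.

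The only subtle point in invoking that proposition is meaningfulness of the substitution. Pairwise orthogonal projectors commute, since $E_jE_i = (E_iE_j)^{\dagger} = 0$. Hence all the conjunctions and disjunctions in $\varphi_{K_d}$ are defined. The iterated disjunction of pairwise orthogonal projectors is their sum, by induction using $E \vee F = E + F - EF$. Conversely, if the formula evaluates to $1_Q$, each conjunct $\neg(p_i \wedge p_j)$ equals $1_Q$ (since $1_Q$ is a meet of commeasurable elements in a Boolean subalgebra, each conjunct is $1_Q$). This gives $E_iE_j = 0$, and then the big disjunction gives $\sum E_i = I$.

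For the forward direction, assume (O1) and (O2) with every $E_i \neq 0$. Taking traces in (O2) gives $\sum_i \rank(E_i) = \mathrm{tr}(I) = d$. Each rank is at least $1$ because $E_i \neq 0$, and there are $d$ summands, so every $\rank(E_i) = 1$. A rank-$1$ self-adjoint idempotent has the form $\vec{v_i}\vec{v_i}^{\dagger}$ for a unit vector $\vec{v_i}$ spanning its image. Then $E_iE_j = 0$ gives $\vec{v_i}(\vec{v_i}^{\dagger}\vec{v_j})\vec{v_j}^{\dagger} = 0$, so $\vec{v_i}^{\dagger}\vec{v_j} = 0$. Thus we have $d$ orthonormal vectors in $\FK^d$, which form an orthonormal basis.

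For the converse, given an orthonormal basis, the projectors $E_i = \vec{v_i}\vec{v_i}^{\dagger}$ are pairwise orthogonal and sum to $I$ by completeness of the basis ($\sum_i \vec{v_i}\vec{v_i}^{\dagger} = I$). By the equivalence above, $\basis_d^Q(E_1,\dots,E_d) = 1_Q$.

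The "in particular" clause follows immediately: (O1) and (O2) together with every image being rank $1$ is exactly the definition of a rank-$1$ orthogonal assignment of $K_d$. The step requiring most care is the commeasurability bookkeeping in the reduction to (O1) and (O2); the main mathematical content is the trace/rank-counting step.
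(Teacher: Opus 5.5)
Your proposal is correct and takes the same route as the paper. Like the paper, you reduce $\basis_d^Q(E_1,\dots,E_d)=1_Q$ to pairwise orthogonality plus $\sum_i E_i = I$, then use non-zeroness to force each $E_i$ to have rank $1$ and read off an orthonormal basis. Your trace count $\sum_i \rank(E_i) = d$ makes rigorous the rank step the paper only asserts; it needs only $\sum_i E_i = I$ and $E_i \neq 0$, not the distinctness the paper invokes. You also supply the meaningfulness bookkeeping and the converse direction, which the paper's proof leaves implicit.
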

\begin{proof}
  The conjunction $\bigwedge_{i \not= j} \neg (p_i \wedge p_j)$ asserts that $E_iE_j = 0$.
  Thus, $\{E_1,\dots,E_d\}$ is an orthonormal set.
  Using $E_iE_j = 0$, the clause $p_1 \vee \dots \vee p_d$ asserts that $\sum_{i \in \setd} E_i = 1$.
  As each $E_i$ is distinct and non-zero, since the $\sum E_i = 1$ is $\Pj(\FK^d)$, we can conclude that every $E_i$ is a rank-$1$ projector.
  Thus, there exists vectors $\vec{v_i}$ such that $E_i = \vec{v_i}\vec{v_i}^{\dagger}$.
  From the orthonormality of $\{E_1,\dots,E_d\}$, we can conclude that $\{\vec{v_1},\dots,\vec{v_d}\}$ is also an orthonormal set.
  From the sum $\sum_{i \in \setd} E_i = 1$, we can conclude that $\{\vec{v_1},\dots,\vec{v_d}\}$ spans $\FK^d$.
\end{proof}

Our first observation is that for $d = 1$ or $d = 2$, the weak and strong satisfiability problems are $\NP$-complete.
\begin{theorem}
  If $d \in \{1,2\}$ and $\FK \in \{\mathbb{R},\mathbb{C}\}$, then problems $\strongSAT{\Pj(\FK^d)}$ and $\weakSAT{\Pj(\FK^d)}$ are $\NP$-complete.
\end{theorem}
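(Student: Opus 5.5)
Hardness is immediate from Corollary~\ref{cor:csat-np-hard}: for every $d \geq 1$ the pBA $\Pj(\FK^d)$ is non-trivial ($I \neq 0$), so applying the corollary to the singleton class $\{\Pj(\FK^d)\}$ shows that both $\strongSAT{\Pj(\FK^d)}$ and $\weakSAT{\Pj(\FK^d)}$ are $\NP$-hard. The work is therefore in the upper bound. The plan is to show that for $d \le 2$ the pBA $\Pj(\FK^d)$ has a very simple combinatorial structure. Any meaningful substitution can then be replaced by one living in a small finite sub-pBA of polynomial size, which can be guessed and evaluated in polynomial time.

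For $d = 1$, the only projectors on $\FK$ are $0$ and $1$, so $\Pj(\FK) \cong \Two$. Both problems then coincide with classical $\SAT{\Total}$, which is in $\NP$. For $d = 2$, the projectors are $0$, $I$, and the rank-$1$ projectors $E_{\vec v}$ for lines $\vec v$, with $\ocomp{E_{\vec v}} = E_{\vec v^\perp}$. The first step is to check the commutation structure: two rank-$1$ projectors $E, F$ commute iff $F \in \{E, \ocomp{E}\}$. This is a short matrix computation, since commuting self-adjoint operators share an eigenbasis, which is unique up to order in dimension $2$ for rank-$1$ projectors. Moreover $0$ and $I$ commute with everything. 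Hence $\Pj(\FK^2)$ is the \emph{horizontal sum} of infinitely many copies of the four-element Boolean algebra $\{0, E, \ocomp{E}, I\}$, glued along $0$ and $I$. Each block is closed under the operations, and every defined operation is computed inside a single block.

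Next, I will argue a finite-model property. Given $\phi(p_1,\dots,p_n)$ and a meaningful $\alpha$, the values $\alpha(p_i)$ lie in $\{0, I\}$ together with at most $n$ blocks. Let $B_n$ denote the horizontal sum of $n$ four-element blocks. The union of $\{0,I\}$ and the blocks met by $\alpha$ is a sub-pBA isomorphic to a sub-pBA of $B_n$. Conversely, $B_n$ embeds into $\Pj(\FK^2)$ by choosing $n$ distinct lines no two of which are equal or orthogonal; infinitely many such lines exist. Since the embeddings and their inverses on the image are $\pBA$ morphisms reflecting commeasurability, Proposition~\ref{prop:morphism-and-substitution} shows that meaningful domains and values (and in particular whether the value is $1$, or is $\neq 0$) transfer in both directions. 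Thus $\phi$ is strongly (resp.\ weakly) satisfied in $\Pj(\FK^2)$ iff it is strongly (resp.\ weakly) satisfied in $B_n$.

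The $\NP$ algorithm guesses, for each variable, either $0$, $1$, or a pair (block index $j \le n$, sign $\pm$); this is a certificate of size $O(n \log n)$. It then evaluates $\phi$ bottom-up in $B_n$, checking commeasurability at each binary connective, and accepts if the root value is $1$ (strong) or nonzero (weak). This is the $\VARSAT$-style check on a polynomial-size pBA. The main obstacle is the careful justification that the commutation structure is exactly the horizontal sum, in both the real and complex cases. The one subtle point in the transfer argument is that reflecting the meaningful domain needs the embedding to reflect $\odot$ and not merely preserve it, and this holds by the choice of pairwise non-equal, non-orthogonal lines.
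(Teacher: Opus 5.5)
Your proof is correct, but it reaches membership in $\NP$ by a different route from the paper's.

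\textbf{The paper's route.} It does not build a small model. It invokes Theorem~4 of Kochen--Specker: a pBA has a morphism to $\Two$ iff every formula strongly satisfiable in it is classically satisfiable. It then observes that $\Pj(\FK^2)$ has many such morphisms, obtained by choosing independently, in each orthogonal pair of rank-$1$ projectors, which one is sent to $1$. Hence $\strongSAT{\Pj(\FK^2)}$ is literally the set of classically satisfiable formulas. For the weak problem, a nonzero value is either $I$ or a rank-$1$ projector, and in the latter case one again lands in a $\Two$-valued (classical) instance. So both problems coincide with $\SAT{\Total}$, and the identity map gives membership and hardness at once.

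\textbf{Your route and what it buys.} You prove a polynomial finite-model property, comparing $\Pj(\FK^2)$ with the horizontal sum $B_n$, and give a $\VARSAT$-style guess-and-evaluate certificate. This is self-contained, since it needs no appeal to Kochen--Specker's Theorem~4. It treats the weak and strong problems uniformly. It also proves explicitly the horizontal-sum structure of $\Pj(\FK^2)$, which the paper uses only implicitly when it picks morphisms to $\Two$.

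\textbf{What it misses.} Your argument does not show the sharper fact that these problems are \emph{equal} to classical satisfiability. You could get this in one more line: your sub-pBA $S$ (or $B_n$) is a horizontal sum of Boolean blocks, so it has a morphism to $\Two$ sending any prescribed nonzero element to $1$. That collapses your certificate to a classical assignment.

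\textbf{A minor point on reflection.} The reflection of $\odot$ that you flag as the subtle step is not actually needed. The direction from $\Pj(\FK^2)$ to $B_n$ goes entirely through the sub-pBA $S$, which carries the induced commeasurability automatically. For the converse, any pBA morphism $B_n \to \Pj(\FK^2)$ that maps nonzero elements to nonzero ones suffices, by Proposition~\ref{prop:morphism-and-substitution}. For example, one sending every block onto the same block $\{0,E,I-E,I\}$ works. So your choice of pairwise non-equal, non-orthogonal lines is harmless but unnecessary.
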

\begin{proof}
  In all cases, the problems are equivalent to the classical satisfiability problem $\SAT{\Total}$.
  For the case where $d = 1$, we note that $\Pj(\mathbb{R}) = \Pj(\mathbb{C}) = \Two$ is the two-element Boolean algebra.

  For the case where $d = 2$. By Theorem 4 of~\cite{hiddenVariableQM}, a $\pBA$ $M$ has a $\pBA$-morphism $f\colon M \rightarrow \Two$ if, and only if, every formula which is strongly satisfiable in $M$ is classically satisfiable.
  In particular, since $\Pj(\FK^2)$ has (uncountably many) morphisms to $\Two$, i.e.\ by choosing which element in a pair of orthogonal rank-1 projectors is assigned to $1$, every input to $\strongSAT{\Pj(\FK^2)}$ is classically satisfiable.
For $\weakSAT{\Pj(\FK^2)}$, every yes-instance $\phi(\vec{p})$ of $\weakSAT{\Pj(\FK^2)}$ is either a yes-instance of $\strongSAT{\Pj(\FK^d)}$ or the witness $\alpha$ is such that $\phi^{\Pj(\FK^2)}(\alpha)$ is a rank-$1$ projector -- and therefore a yes-instance to $\strongSAT{\Pj(\FK)} \cong \Two$.
  Thus, all cases reduce to and from the classical satisfiablity problem $\SAT{\Total}$.
\end{proof}

\subsection{Real case}
Here, we prove that for every $d \geq 3$, the complexity of satisfiability in $\Pj(\mathbb{R}^d)$ is $\ER$-complete.

Thus, we began by defining the complexity class called the existential theory of the reals which is denoted $\ER$.
Instead of defining $\ER$ directly, we give a more general definition that is helpful in subsequent sections.

We associate a complexity class $\exists (T,M)$ to any first-order signature $L$ which contains the signature of rings $\Lrings = \{0,1,+,*\}$ and $L$-structure $M$.
For every such pair $(L,M)$, we define
\begin{itemize}
  \item $E(L)$ to be the set of first-order sentences $\Phi$ in $L$ of the form
    \[ \exists X_1, \dots, \exists X_n \Psi(X_1,\dots,X_n) \]
    where $\Psi(X_1,\dots,X_n)$ is a quantifier-free formula; and
  \item $E(L,M)$ to the set of sentences $\Phi \in E(L)$ which are true in $M$.
\end{itemize}
We can then define the complexity class $\exists (L,M)$ as the class of decision problems which are many-one polynomial-time reducible to  $E(T,M)$.
For example, $\NP$ can be defined as $\exists(L,M)$ where $L = \Lrings$ and $M$ is the two-element field $(\mathbb{Z}_2,0,1,\xor,\wedge)$.
The class $\ER$ is $\exists (L,M)$ where $L = \Lrings \cup \{\leq\}$ is the signature of \emph{ordered} rings and $M$ is the ordered field of real numbers ${(\mathbb{R},0,1,+,*,\leq)}$.
As usual, a decision problem $D$ is \emph{$\ER$-hard} if every problem in $\ER$ polynomial-time many-one reduces to $D$ and $D$ is \emph{$\ER$-complete} if $D$ is in $\ER$ and $\ER$-hard.
Many natural computational problems, usually related to discrete objects embedded in continuous spaces, are known to be $\ER$-complete, e.g.\ see the compendium~\cite{erCompendiumArxiv} for details.
The class $\ER$ can also be defined as the Boolean fragment of non-deterministic polynomial time on real Blum-Shub-Smale machines~\cite{bssRealMachines}.
In relation to classic complexity classes, it is known that $\NP \subseteq \ER \subseteq \PSPACE$.
The inclusion $\NP \subseteq \ER$ follows from the Cook-Levin theorem and the polynomial-time mapping, e.g.\ $\neg p \vee q \mapsto \exists P \exists Q(P = 0 \vee Q = 1)$, from a propositional formula to an existential sentence in the language of fields.
The inclusion of $\ER \subseteq \PSPACE$ was a seminal result proved by Canny~\cite{erPSPACE}.

Proposition~\ref{prop:realdsat-er} below demonstrates that the weak and strong satisfiability problems for $\Pj(\mathbb{R}^d)$ are in $\ER$.
This is because, using Definition~\ref{def:projector-pba}, a propositional formula $\phi(p_1,\dots,p_n)$ being satisfied in $\Pj(\mathbb{R}^d)$ translates into the existence of self-adjoint solutions to a set $S$ of (in)equations in the ring of matrices $M_{\mathbb{R}}(d,d)$.
Hence, by unfolding the definitions of matrix addition and multiplication in $M_{\mathbb{R}}(d,d)$ in terms of their $d^2$ entries, these equations translate to $O(d^2|S|)$ equations over the field $\mathbb{R}$.
\begin{proposition}
  \label{prop:realdsat-er}
  For every $d \geq 1$, $\strongSAT{\Pj(\mathbb{R}^d)}$ and $\weakSAT{\Pj(\mathbb{R}^d)}$ are in $\ER$.
\end{proposition}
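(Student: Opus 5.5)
We give a polynomial-time many-one reduction from each problem to $E(L,\mathbb{R})$, where $L = \Lrings \cup \{\leq\}$. Fix $d$ and let $\phi(p_1,\dots,p_n)$ be the input. For every subformula $\psi \in \Sub(\phi)$ we introduce a block of $d^2$ existentially quantified real variables $X^{\psi}_{k\ell}$, $k,\ell \in \setd$, intended as the entries of the matrix $\psi^{Q}(\alpha)$ with $Q = \Pj(\mathbb{R}^d)$. Since $d$ is fixed, this is $O(d^2|\phi|)$ variables. The quantifier-free matrix $\Psi$ is a conjunction of polynomial equations, each written out entrywise over the $d^2$ coordinates.

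The conjuncts of $\Psi$ are chosen so that $\Psi$ holds exactly when each block is the matrix of a projector and the blocks follow the meaningful substitution of Definition~\ref{def:substitution}:
\begin{enumerate}[label=(\roman*)]
\item For each variable $p_i$, we require $X^{p_i}$ to be symmetric and idempotent: $X^{p_i}_{k\ell} = X^{p_i}_{\ell k}$ and $X^{p_i}X^{p_i} = X^{p_i}$ entrywise.
\item For $\top$, we require $X^{\top} = I$.
\item For $\psi = \neg\psi_1$, we require $X^{\psi} = I - X^{\psi_1}$.
\item For $\psi = \psi_1 \wedge \psi_2$, we require the commutation equation $X^{\psi_1}X^{\psi_2} = X^{\psi_2}X^{\psi_1}$ together with $X^{\psi} = X^{\psi_1}X^{\psi_2}$.
\item For $\psi = \psi_1 \vee \psi_2$, we require the same commutation equation together with $X^{\psi} = X^{\psi_1} + X^{\psi_2} - X^{\psi_1}X^{\psi_2}$.
\end{enumerate}
Each matrix identity unfolds into $d^2$ polynomial equations of degree at most $2$, with $O(d)$ terms each. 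The commutation equations encode exactly the condition $\psi_1^Q(\alpha) \odot_Q \psi_2^Q(\alpha)$ in case~\ref{item:def-substitution-connective} of Definition~\ref{def:substitution}.

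For $\strongSAT{\Pj(\mathbb{R}^d)}$ we add the conjunct $X^{\phi} = I$. For $\weakSAT{\Pj(\mathbb{R}^d)}$ we add $\bigvee_{k,\ell} \neg(X^{\phi}_{k\ell} = 0)$, which is a legitimate quantifier-free formula. Alternatively, one could use $\neg(\sum_{k,\ell}(X^{\phi}_{k\ell})^2 = 0)$, or the ordered-field version $\sum_{k,\ell}(X^{\phi}_{k\ell})^2 > 0$ written via $\leq$ and $\neg$. The constants $0$ and $1$ in $I$ are available in $\Lrings$. The whole sentence is produced in time polynomial in $|\phi|$.

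\textbf{Correctness.} We argue by structural induction that if $\alpha \in \mdom{Q}{\phi(\vec{p})}$, then setting $X^{\psi} := \psi^Q(\alpha)$ satisfies all conjuncts. The commutation equations hold by definition of the meaningful domain, and the operation equations hold by Definition~\ref{def:projector-pba}. Conversely, given a real solution, set $\alpha(p_i) := X^{p_i}$; each of these is a projector by (i). By induction on $\psi$ one shows $\alpha \in \mdom{Q}{\psi(\vec{p})}$ and $X^{\psi} = \psi^Q(\alpha)$.

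The only point requiring care in this induction is that the blocks for compound subformulas must actually be projectors, since only the leaves are constrained explicitly. This is where the main (minor) obstacle lies, and it is handled inductively. $I - E$ is a symmetric idempotent whenever $E$ is. If $E,F$ are symmetric idempotents that commute, then $EF$ and $E+F-EF$ are again symmetric idempotents. Hence the explicit projector constraint is needed only at the leaves, and the values at the root are precisely $\phi^Q(\alpha)$. The final conjunct then encodes $\phi^Q(\alpha) = 1_Q$ or $\phi^Q(\alpha) \neq 0_Q$, respectively.
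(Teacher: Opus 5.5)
Your proposal is correct and takes essentially the same route as the paper. Both use one block of $d^2$ real unknowns per subformula, impose symmetry and idempotence only at the variables, encode the meaningful-domain condition by commutator equations, and end with a final conjunct $X^{\phi}=I$ (respectively $X^{\phi}\neq 0$). The differences are cosmetic: the paper first writes the system as an existential sentence over the matrix ring $M_{\mathbb{R}}(d,d)$ with a dagger and then unfolds it entrywise, whereas you unfold directly. You also make explicit the closure check that $I-E$, $EF$ and $E+F-EF$ are projectors when $E,F$ are commuting projectors, which is correct and which the paper leaves implicit.
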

\begin{proof}
  For every propositional formula $\phi(p_1,\dots,p_n)$, we can construct in polynomial time a existential first-order sentence $\Psi^{s}$ in the language $\Lrings \cup \{\dagger\}$ of rings with a unary dagger function symbol $\dagger$, and an existential first-order sentence $\Phi^{s}$ in the language of rings $\Lrings$, such that the following are equivalent:
\begin{enumerate}[label=(\arabic*)]
  \item \label{item:realdsat-er-pba} $\phi(p_1,\dots,p_n)$ is strongly satisfied in $\Pj(\mathbb{R}^d)$
  \item \label{item:realdsat-er-matrix}$\Psi^{s}$ is satisfied in $M_{\mathbb{R}}(d,d)$
  \item \label{item:realdsat-er-field} $\Phi^{s}$ is true in the ordered field $\mathbb{R}$.
\end{enumerate}
For the construction of $\Psi$, the variables are:
\begin{align*}
  [\vec{P}] &= \{P_1,\dots,P_n\}, \\
  [\vec{Z}] &= \{Z_{\psi} \mid \psi \text{ is a non-variable subformula of $\phi$}\}.
\end{align*}
We define $\Psi^{s}$ as the existential closure of the conjunction of these equations:
\begin{align*}
  P_l^2 &= P_l & \text{for $l \in \setn$} \\
  P_l &= P_l^{\dagger} & \text{for $l \in \setn$} \\
  Z_{\psi} &= 1 + (0 - V_{\psi'}) & \text{if } \psi = \neg \psi' \\
  Z_{\psi} &= V_{\psi_1} V_{\psi_2} \text{ and } [V_{\psi_1},V_{\psi_2}] = 0 & \text{if } \psi = \psi_1 \wedge \psi_2 \\
  Z_{\psi} &= V_{\psi_1} + V_{\psi_2} - V_{\psi_1}V_{\psi_2} \text{ and } [V_{\psi_1},V_{\psi_2}] = 0 & \text{if } \psi = \psi_1 \vee \psi_2 \\
  Z_{\phi} &= 1
\end{align*}
where $V_{p_l} = P_l$, $V_{\psi} = Z_{\psi}$ for every non-variable subformula $\psi$ of $\phi$, and $[E,F] = EF - FE$ is the commutator operation.
The first equation expresses each element $P_l$ is a self-adjoint projector.
The middle equations express Definition~\ref{def:projector-pba} of the partial operations $\neg_Q,\vee_Q,\wedge_Q$ in $Q = \Pj(\mathbb{R}^d)$.
In particular, the commutator equations ensure that the partial operations are well-defined.
Thus, if $E_1,\dots,E_n \in Q$ are such that $\phi^Q(E_1,\dots,E_n) = 1_A$, we can conclude that $\hat{\Psi}$ is satisfied in $M_{\mathbb{R}}(d,d)$ via the assignment $P_l \mapsto E_l$ and $Z_{\psi} \mapsto \psi^{Q}(E_1,\dots,E_n)$.
Conversely, if there exists an assignment of $\Gamma \colon [\vec{Z}] \cup [\vec{P}] \rightarrow M_{\mathbb{R}}(d,d)$ of variables to matrices in $M_{\mathbb{R}}(d,d)$ which satisfy $\Psi^{s}$, we can construct a meaningful substitution $\gamma \in \mdom{Q}{\phi}$ where $\gamma(p_l) = \Gamma(P_l)$ and $\psi^{Q}(\gamma) = \Gamma(Z_{\psi})$.
The last equation $Z_{\phi} = 1$ implies that $\phi^Q(\gamma) = 1_A$.
Thus, we have shown the equivalence $\ref{item:realdsat-er-pba} \Leftrightarrow \ref{item:realdsat-er-matrix}$.

We construct $\Phi^{s}$ from $\Psi^{s}$ by unfolding the definition of matrix addition multiplication, and transpose as they appear in the equations of $\Psi^{s}$.
For every variable $Z_{\psi}$ and $P_{l}$ of $\Psi^{s}$, $\Phi^{s}$ has $d^2$ many existentially-quantified variables $(Z_{\psi}^{ij})$ and $(P^{ij}_l)$.
Every equation (without the dagger operation) in $\Psi^{s}$ corresponds to a conjunction of $d^2$ equations in the quantifier-free part of $\Phi^{s}$ expressing the $ij$-th entry of the resulting matrix variable.
Additionally, for every variable $P_{l}$ of $\Psi^{s}$, we have an $d^2$-many equations $P^{ij}_{l} = P^{ji}_{l}$ in the quantifier-free part of $\Psi^{s}$ to express that the matrix assigned to $P_l$ is self-adjoint.
Thus, by construction of $\Phi^{s}$, the equivalence $\ref{item:realdsat-er-matrix} \Leftrightarrow \ref{item:realdsat-er-field}$ holds.

The size of $\Phi^{s}$ is linear in the size of $\phi(p_1,\dots,p_n)$.
Thus, the construction of $\Phi^{s}$ and the equivalence $\ref{item:realdsat-er-pba} \Leftrightarrow \ref{item:realdsat-er-field}$ demonstrates that $\strongSAT{\Pj(\mathbb{R}^d)}$ is in $\ER$.
We can similarly construct a sentence $\Phi^{w}$, where the equation $Z_{\phi} = 1$ in $\Psi^{s}$ is replaced with the inequality $Z_{\phi} \not= 0$, to demonstrate that $\weakSAT{\Pj(\mathbb{R}^d)}$ is in $\ER$.
\end{proof}

Now we proceed with the proof that $\strongSAT{\Pj(\mathbb{R}^d)}$ and $\weakSAT{\Pj(\mathbb{R}^d)}$ are $\ER$-hard for any fixed $d \geq 3$.
To accomplish this, we work by induction on dimension $d$.
Hence, for the base case, we first prove that $\SAT{\Pj(\mathbb{R}^3)}$ is $\ER$-hard.
We construct reductions from the $\ER$-complete problem from~\cite{crossProductER} called cross-product term satisfiability over the real projective plane $\RPTwo$.
We denote this decision problem $\XSAT(\RPTwo)$.
In order to describe~$\XSAT(\RPTwo)$, recall that the right-handed cross product is a binary operation $\cross : \RAThree \times \RAThree \rightarrow \RAThree$ defined coordinate-wise as:
\begin{equation}
  \label{eq:cross-product}
  \cross (\vec{v},\vec{w}) = (v_2w_3 - v_3w_2,v_3w_1 - v_1w_3,v_1w_2 - v_2w_1) .
\end{equation}
for every $\vec{v} = (v_1,v_2,v_3),\vec{w} = (w_1,w_2,w_3) \in \RAThree$.
The operation is defined so that if $w = u \cross v$, then $w$ is orthogonal to both $u$ and $v$.
This is a key fact which we exploit in our reduction.
The operation $\cross$ extends to a partial operation, which we also denote as $\cross$, on the real projective plane $\RPTwo$.
Elements of $\RPTwo$ are the lines $\mathbb{R}\vec{v} = \{\lambda \vec{v} \mid \lambda \not= 0\}$ for every $\vec{v} \in \RAThree \backslash \{\vec{0}\}$ in $\RAThree$, and $\cross$ is defined on representatives as $\mathbb{R}\vec{v} \cross \mathbb{R}\vec{w} = \mathbb{R}(\vec{v} \cross \vec{w})$ for distinct lines $\mathbb{R}v \not= \mathbb{R}w$.
A \emph{cross product term $t(x_1,\dots,x_n)$} with variables amongst $X = \{x_1,\dots,x_n\}$ is recursively defined as either one of the variables $x_i$ or $s \cross s'$ for cross product terms $s,s'$ with variables also amongst $X$.
Given a cross product term $t(x_1,\dots,x_n)$ and $\mathbb{R}\vec{v}_1,\dots,\mathbb{R}\vec{v}_n \in \RPTwo$, the value $\langle t(\mathbb{R}\vec{v}_1,\dots,\mathbb{R} \vec{v}_n) \rangle \in \RPTwo$ is defined by inductively applying the definition of $\cross$.
A cross product term $t(x_1,\dots,x_n)$ is \emph{satisfied in $\RPTwo$} if there exists $\mathbb{R}\vec{v}_1,\dots,\mathbb{R}\vec{v}_n \in \RPTwo$ such that $\langle t(\mathbb{R}\vec{v}_1,\dots,\mathbb{R}\vec{v}_n) \rangle$ is defined and $\langle t(\mathbb{R}\vec{v}_1,\dots,\mathbb{R}\vec{v}_n) \rangle = \mathbb{R}\vec{v}_1 \not= 0$.
The $\XSAT(\RPTwo)$ problem takes as input a cross product term $t(x_1,\dots,x_n)$ and decides if $t$ is satisfied in $\RPTwo$.
In order to reduce $\XSAT(\RPTwo)$ to $\SAT{\Pj(\RAThree)}$, we give a translation from a cross product term $t(x_1,\dots,x_n)$, to a propositional formula $\theta^{\mathbb{R}}_{t}$.

Consider a non-variable subterm $s = r \cross o$ of $t$.
The lines $\mathbb{R}\langle s \rangle$, $\mathbb{R}\langle r \rangle$, and $\mathbb{R}\langle o \rangle$ are in bijection with the rank-$1$ projectors $E_{\langle s \rangle}, E_{\langle r \rangle}$, and $E_{\langle o \rangle}$ onto these lines.
Since the line $\mathbb{R}\langle s \rangle$ is orthogonal to both $\mathbb{R}\langle r \rangle$ and $\mathbb{R}\langle o \rangle$, we need to express that the pairs $(E_{\langle s \rangle}, E_{\langle r \rangle})$ and $(E_{\langle s \rangle}, E_{\langle o \rangle})$ extend to triples of projectors, e.g.\ $(E_{\langle s \rangle}, E_{\langle r \rangle}) \mapsto (E_{\langle s \rangle}, E_{\langle r \rangle}, \ocomp{E_{\langle r \rangle} + E_{\langle s \rangle}})$ which correspond to orthonormal bases of $\RAThree$.

From these observations, the naive construction would be to define for every non-variable subterm $s = r \cross o$, a formula $\vartheta_{s}$ which is a conjunction of two $\basis_3$ formulas.
A witness that the term $t(x_1,\dots,x_n)$ is satisfied in $\RPTwo$ would then yield a meaningful substitution demonstrating that the conjunction of $\vartheta_s$ is in $\strongSAT{\Pj(\RAThree)}$.
However, such a meaningful substitution is not sufficient to produce a witness to $t$.
We need to show that any meaningful substitution of the conjunction produces a substitution where every variable is assigned to a rank-$1$ projector.
We can show that for each $\basis_3$ conjunct of $\vartheta_s$, there exists uncountably many rank-$1$ assignments (one for each orthonormal basis of $\RAThree$).
The problem arises when taking a conjunction of $\basis_3$ formulas with shared variables.
Namely, though for each conjunct $c = \basis_3(p_s,p_r,p_{s,r})$ we can break down an arbitrary assignment of variables in $c$ into a rank-$1$ assignment, there are many inconsistent ways of collecting all these assignments to produce a rank-$1$-assignment for the entire conjunction.
In order to overcome this issue, we must acknowledge the inherent contextuality in the $\strongSAT{\Pj(\RAThree)}$ problem.
Thus, as part of our reduction, we use the orthogonality graph $G_{\CKS}$ of the basis-complete $40$-vector Kochen-Specker set $\CKS$ mentioned in section~\ref{sec:kochen-specker} as a gadget.
The primary property of $G_{\CKS}$ we use is that any orthogonal assignment is guaranteed to have a triangle which is assigned to only rank-$1$ projectors.
\begin{proposition}
  \label{prop:peres-choice}
  If $f\colon G_{\CKS} \rightarrow \Pj(\RAThree)$ is an orthogonal assignment, then there exist orthonormal bases $\{\vec{x},\vec{y},\vec{z}\} \in \mathcal{B}_{\CKS}$ and $\{\vec{u}_x,\vec{u}_y,\vec{u}_z\}$ such that
  $f(\vec{x}) = E_{\vec{u_x}}$, $f(\vec{y}) = E_{\vec{u_y}}$, and $f(\vec{z}) = E_{\vec{u_z}}$ are rank-$1$ projectors.
\end{proposition}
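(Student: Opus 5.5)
Fix an orthogonal assignment $f\colon G_{\CKS} \to \Pj(\RAThree)$. For each basis $B=\{\vec{x},\vec{y},\vec{z}\}\in\mathcal{B}_{\CKS}$, conditions \ref{item:oa-edge-condition} and \ref{item:oa-clique-condition} say that $f(\vec{x}),f(\vec{y}),f(\vec{z})$ are pairwise orthogonal projectors with sum $I$. Hence their ranks sum to $3$. There are only two rank profiles: $(1,1,1)$, or a permutation of $(3,0,0)$ or $(2,1,0)$. In every profile other than $(1,1,1)$ at least one vertex of the triangle is assigned $0_{\Hilb}$.

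Suppose no basis has profile $(1,1,1)$. I define a colouring $c\colon \CKS \to \{0,1\}$ from the ranks and derive a non-contextual colouring of $G_{\CKS}$. The natural first try is $c(v)=1$ iff $\rank f(v)\ge 1$; this fails because a $(2,1,0)$ basis gets two ones. The fix is to pass through the $\pBA$-morphisms $\Pj(\RAThree)\to\dots$ that the ranks force.

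Concretely, in this case every triangle contains a zero-projector vertex. Choose $c(v)=1$ iff $f(v)=0_{\Hilb}$? That also fails, since $(3,0,0)$ gives two ones. The cleaner route is to ignore the rank structure and restrict to the image. The image $f(V)$ is a set of projectors, and within each basis it forms a Boolean subalgebra generated by a partition of unity. If no partition has three rank-$1$ parts, then every part is $0$, $I$, or a rank-$1$ or rank-$2$ projector paired with its complement. I then need a single consistent choice. Pick a rank-$1$ projector $E$ in general position relative to all the (finitely many) rank-$1$ and rank-$2$ values of $f$; in particular $E$ lies inside no image rank-$2$ subspace and is orthogonal to no image rank-$1$ line. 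Set $c(v)=1$ iff $f(v)=I$, or $f(v)$ has rank $2$, or ($f(v)$ has rank $1$ and the other nonzero part of its basis is $0$). This is the awkward point: a partition $(2,1,0)$ needs exactly one of the two nonzero parts coloured, but consistently across bases sharing a rank-$1$ or rank-$2$ value. I resolve it by the rule ``colour the rank-$2$ part''. A vertex of rank $1$ then gets $0$ in every basis containing it, since each such basis is of type $(2,1,0)$. A rank-$2$ vertex gets $1$, and a rank-$3$ vertex gets $1$ with the other two $0$. The colouring is well defined because it depends only on $f(v)$. Every basis then has exactly one vertex coloured $1$, so \ref{item:nc-clique-condition} holds, and by Proposition~\ref{prop:bc-colouring} this is a non-contextual colouring of $G_{\CKS}$. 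That contradicts $\CKS$ being a Kochen--Specker set, so the general-position $E$ is in fact unnecessary.

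So some basis $\{\vec{x},\vec{y},\vec{z}\}$ has profile $(1,1,1)$. Writing $f(\vec{x})=E_{\vec{u}_x}$ and so on for unit vectors, orthogonality of the projectors gives that $\{\vec{u}_x,\vec{u}_y,\vec{u}_z\}$ is orthonormal, as in Proposition~\ref{prop:basis-capture}. The main obstacle is checking that the rank-based rule is the right one: the colour must be a function of $\rank f(v)$ alone and must produce exactly one $1$ per triangle. This holds because each rank profile other than $(1,1,1)$ contains exactly one part of rank $\ge 2$.
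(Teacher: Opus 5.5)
Your proof is correct and is essentially the paper's argument. Both colour each vertex by a function of $\rank f(v)$ that puts exactly one $1$ on every basis of rank profile $(3,0,0)$ or $(2,1,0)$, so the Kochen--Specker property (via Proposition~\ref{prop:bc-colouring}) forces some basis to have profile $(1,1,1)$. The only difference is the colouring: the paper uses $\rank f(v) \bmod 2$ (sum $3$ on the bad basis), while you use the threshold $\rank f(v)\ge 2$ (sum $0$ on the bad basis), which works equally well. You should strip the false starts from the write-up: the general-position $E$, the vacuous and basis-dependent clause about a rank-$1$ part whose partner is $0$, and the miscount of ``two'' profiles. Only the final rule $c(v)=1 \iff \rank f(v)\ge 2$ is needed.
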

\begin{proof}
  Consider the function $\rank \circ f\colon \CKS \rightarrow \{0,1,2,3\}$.
  The function $\rank \circ f$ induces a `colouring' function $g\colon \CKS \rightarrow \{0,1\}$ where $g(\vec{v}) = \rank(f(\vec{v})) \bmod 2$.
  From $f$ being an orthogonal assignment, for every basis $\{\vec{x'},\vec{y'},\vec{z'}\} \in \mathcal{B}_{\CKS}$,
  \begin{itemize}
    \item $\rank(f(\vec{x'})) + \rank(f(\vec{y'})) + \rank(f(\vec{z'})) = 3$.
    \item $g(\vec{x'}) + g(\vec{y'}) + g(\vec{z'}) = 1$ or $g(\vec{x'}) + g(\vec{y'}) + g(\vec{z'}) = 3$.
  \end{itemize}
  Since $\CKS$ is a complete $3$-dimensional Kochen-Specker proof, by Proposition~\ref{prop:bc-colouring}, there exists a maximum clique $C_{g}$ of $G_{S}$, or equivalently a basis $B_{W} = \{\vec{x},\vec{y},\vec{z}\} \in \mathcal{B}_{\CKS}$, such that $g(\vec{x}) + g(\vec{y}) + g(\vec{z}) \not= 1$.
  Thus, $g(\vec{x}) + g(\vec{y}) + g(\vec{z}) = 3$.
  From this constraint and $\rank(f(\vec{x})) + \rank(f({\vec{y}})) + \rank(f(\vec{z})) = 3$, we can deduce that $\rank(f({\vec{x}})) = \rank(f(\vec{y})) = \rank(f({\vec{z}})) = 1$.
  Therefore, $E_{\vec{u}_x} = f(\vec{x})$, $E_{\vec{u}_y} = f({\vec{y}})$, and $E_{\vec{u}_z} = f({\vec{z}})$ are rank-$1$ projectors for some orthonormal basis $B_{U} = (\vec{u_x},\vec{u_y},\vec{u_z})$ of $\mathbb{R}^3$.
\end{proof}

We use two copies of $K_3$ to form the `consistency' gadget $\Cons$ depicted in Figure~\ref{fig:cons} to ensure that different vertices are assigned the same projector by any orthogonal assignment.
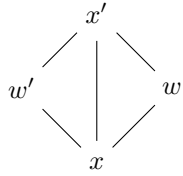
\begin{figure}[h]
  \caption{The gadget $(\Cons,w,w')$}
  \centering
  \label{fig:cons}
  \begin{tikzpicture}
    \begin{scope}[every node/.style={circle}]

    \node (x) at (1,-1) {$x$};
    \node (x') at (1,1) {$x'$};
    \node (w') at (0,0) {$w'$};
    \node (w) at (2,0) {$w$};
    \end{scope}

    \begin{scope}
      \path (x) edge (x');
      \path (x) edge (w);
      \path (x') edge (w);
      \path (x') edge (w');
      \path (x) edge (w');
    \end{scope}
  \end{tikzpicture}
\end{figure}
The following proposition expresses the how $\Cons$ is used in the reduction.
\begin{proposition}
  \label{prop:cons}
    If $f\colon \Cons \rightarrow \Pj(\RAThree)$ is an orthogonal assignment such that $f(w) = E_{\vec{w}}$, $f(w') = E_{\vec{w'}}$ are rank-$1$ projectors corresponding to unit vectors $\vec{w},\vec{w'} \in \RAThree$, then $f(w') = f(w) = E_{\vec{w}}$.
\end{proposition}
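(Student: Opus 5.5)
The plan is to read off the structure of $\Cons$ from Figure~\ref{fig:cons} and apply the orthogonal-assignment conditions directly. The graph $\Cons$ has vertices $w,w',x,x'$ and edges $xx'$, $xw$, $x'w$, $xw'$, $x'w'$; there is no edge $ww'$. Hence $\omega(\Cons)=3$, and the maximum cliques are exactly the two triangles $\{x,x',w\}$ and $\{x,x',w'\}$. Condition~\ref{item:oa-clique-condition} then gives
\[ f(x)+f(x')+f(w) = I = f(x)+f(x')+f(w'), \]
and subtracting the two equations yields $f(w)=f(w')$ immediately. So the core of the proof is just this cancellation. The rank-$1$ hypothesis is only needed to phrase the conclusion as $f(w')=E_{\vec{w}}$, which also gives $\vec{w'}=\pm\vec{w}$.

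I will also record the consistency data this yields, since it will presumably be used later in the reduction. From condition~\ref{item:oa-edge-condition}, $f(x)$, $f(x')$ and $f(w)$ are pairwise orthogonal projectors summing to $I$. Their ranks therefore add to $3$. Because $f(w)$ has rank $1$, $f(x)+f(x')$ is the rank-$2$ projector $I-E_{\vec{w}}$ onto $\vec{w}^{\perp}$.

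The only step needing care is confirming that the maximum cliques are exactly those two triangles. In particular, I must check that $\{x,x',w,w'\}$ is not a clique (it is not, because $ww'$ is not an edge) and that no triangle avoids the edge $xx'$. Once that is settled, the argument is immediate, and I do not expect any real obstacle.
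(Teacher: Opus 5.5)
Your proof is correct and is essentially the paper's argument. The paper writes $f(w) = \neg_Q(\neg_Q f(w)) = \neg_Q(f(x) \vee_Q f(x')) = f(w')$, which is the same cancellation of the two triangle sums expressed through the pBA operations (using the edge condition to identify $f(x)\vee_Q f(x')$ with $f(x)+f(x')$), and you are right that the rank-$1$ hypothesis is used only to name the common value $E_{\vec{w}}$.
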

\begin{proof}
Let $Q = \Pj(\RAThree)$. Follows straightforwardly from $f$ being an orthogonal assignment:
\begin{align*}
  f(w) &= \neg_Q(\neg_Q f(w)) \\
       &= \neg_Q(f(x) \vee_{Q} f(x')) \\
       &= f(w')
\end{align*}
\end{proof}

We now can convert every cross-product term $t(x_1,\dots,x_n)$ into a graph $G_t$.
The construction of $G_t$ is through the following steps:
\begin{enumerate}
  \item View $t(x_1,\dots,x_n)$ as its rooted directed parse tree $T_t$.
  \item Produce a directed-acyclic graph $D_t$ by identifying all leaf nodes in $T_t$ that are occurrences of the same variable $x_i$.
  \item \label{item:root-var-1} Produce a directed graph $D'_t$ by identifying the root node $t$ of $D_t$ with the sink node representing variable $x_1$.
  \item Produce an undirected graph $G'_t$ by forgetting the orientations of edges in $D'_t$.
  \item Produce $G_t$ by taking the triangle completion of $G'_t$, i.e.\ for every edge $e = (u,v)$ add an additional vertex $v_e$ incident to both $u$ and $v$ ensuring $\{u,v,v_e\}$ is a triangle.
\end{enumerate}

For every graph $G$, we construct $\CKER(G)$ through the following steps:
\begin{enumerate}
  \item Take the categorical product $G \times G_{\CKS}$. Recall that the categorical product $G \times H$ of two graphs $G$ and $H$ has vertex set $V(G \times H) = V(G) \times V(H)$ and edge set defined as:
\[ E(G \times H) = \{((v,v'),(w,w')) \mid (v,w) \in E(G) \text{ and } (v',w') \in E(H) \} \]
\item \label{item:synthetic-equality} To construct $\CKER(G)$, for every $w \in V(G)$ and pair of distinct vectors $\vec{v},\vec{u} \in \CKS$ we add a consistency gadget $(\Cons,(w,\vec{v}),(w,\vec{u}))$ to $G \times G_{\CKS}$.
\end{enumerate}
The primary property of the $\CKER(\cdot)$ construction we use is stated in the following lemma.
\begin{figure}[h]
  \label{fig:pe-cons}
\end{figure}
\begin{lemma}
  \label{lem:force-rank-1}
  Let $G$ be a $3$-facet graph.
  For every orthogonal assignment $f\colon \CKER(G) \rightarrow \Pj(\FK^3)$, there exists a rank-$1$ orthogonal assignment $\nu_{f}\colon G \rightarrow \Pj_1(\FK^3)$.
\end{lemma}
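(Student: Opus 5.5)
The plan is to restrict $f$ to each ``fibre'' $\{w\} \times \CKS$ and use the Kochen-Specker gadget there to extract a single rank-$1$ projector per vertex $w \in V(G)$. Then check that these projectors form an orthogonal assignment of $G$.

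\textbf{Step 1: maximal cliques of the product.} First I would describe the maximal cliques of $G \times G_{\CKS}$ inside $\CKER(G)$. If $\{a,b,c\}$ is a triangle of $G$ and $\{\vec{x},\vec{y},\vec{z}\} \in \mathcal{B}_{\CKS}$ is a basis, then $\{(a,\vec{x}),(b,\vec{y}),(c,\vec{z})\}$ is a triangle of the product. Since both factors are $3$-facet graphs, these triangles are maximum cliques. I need to verify that the added $\Cons$ gadgets do not create $4$-cliques and do not turn such triangles into non-maximum ones. This holds because each gadget is two triangles sharing the edge $xx'$, attached through fresh vertices $x,x'$. Hence $f$ satisfies \ref{item:oa-edge-condition} and \ref{item:oa-clique-condition} on all these product triangles, and also on the gadget triangles.

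\textbf{Step 2: extract one projector per vertex.} Fix $w \in V(G)$, and use that $G$ is a $3$-facet graph to choose a triangle $\{w,b,c\}$ containing $w$ (assuming $G$ has no isolated vertices). For each choice of basis completions, the map $\vec{v} \mapsto f(w,\vec{v})$ need not be an orthogonal assignment of $G_{\CKS}$ on its own, since cliques mix three fibres. The trick is to use the $\Cons$ gadgets. By Proposition~\ref{prop:cons} (whose proof works over $\FK^3$), any two vertices $(w,\vec{v})$ and $(w,\vec{u})$ that are both assigned rank-$1$ projectors get the same projector. More directly, the gadget gives $f(w,\vec{v}) = \neg(f(x)\vee f(x')) = f(w,\vec{u})$ whenever the gadget triangles are maximum cliques, which they are. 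So $f(w,\cdot)$ is constant on the fibre, with value some projector $P_w$. Then I would define $g\colon G_{\CKS}\to\Pj(\FK^3)$ using the triangle $\{w,b,c\}$: send the three vectors of each basis to $P_w,P_b,P_c$. This needs care, because a vector lies in several bases. The cleaner route is the parity argument of Proposition~\ref{prop:peres-choice}, applied to ranks: for every basis $\{\vec{x},\vec{y},\vec{z}\}$, \[\rank P_w+\rank P_b+\rank P_c=3.\] Combining this with constancy on fibres and orienting bases in all permutations (each permutation of a basis gives another product triangle) forces $\rank P_w=\rank P_b=\rank P_c$. Hence each rank equals $1$.

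\textbf{Step 3: define $\nu_f$ and check it.} Set $\nu_f(w) = P_w = f(w,\vec{v})$ for any $\vec{v}$. For an edge $(a,b)$ of $G$, pick orthogonal $\vec{x},\vec{y}$ in $\CKS$. The product edge gives $P_aP_b=0$. For a maximum clique $\{a,b,c\}$ of $G$, the product triangle over any basis gives $P_a+P_b+P_c=I$. So $\nu_f$ is a rank-$1$ orthogonal assignment.

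\textbf{Main obstacle.} The delicate step is Step 1/2: justifying that the $\Cons$ gadgets' triangles and the product triangles really are \emph{maximum} cliques of $\CKER(G)$, so that \ref{item:oa-clique-condition} applies to them. If constancy on fibres only holds for rank-$1$ values, I would fall back on the route through Proposition~\ref{prop:peres-choice}. For that I would first show that the restriction to a fibre triangle pattern yields an orthogonal assignment of $G_{\CKS}$, which gives a rank-$1$ basis. I would then propagate rank-$1$-ness across the fibre via the gadgets.
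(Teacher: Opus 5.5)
Your Steps 1 and 3 are fine. The observation in Step 2 is correct, and in fact stronger than you need. The gadget vertices $x,x'$ have degree $3$ and $(w,\vec v)\not\sim(w,\vec u)$, so $\{x,x',(w,\vec v)\}$ and $\{x,x',(w,\vec u)\}$ are maximum cliques of $\CKER(G)$, and \ref{item:oa-clique-condition} gives $f(w,\vec v)=I-f(x)-f(x')=f(w,\vec u)$ with no rank hypothesis at all.

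The gap is the next step. Once $f$ is constant on each fibre with value $P_w$, every product triangle $\{(a,\vec x),(b,\vec y),(c,\vec z)\}$ over a triangle $\{a,b,c\}$ of $G$ yields the \emph{same} identity $P_a+P_b+P_c=I$, whatever basis and ordering you pick. Permuting bases adds nothing, and only $\rank P_a+\rank P_b+\rank P_c=3$ survives. Already for $G=K_3$ this fails: take $P_a=I$, $P_b=P_c=0$, and put $I-P_w$ and $0$ on the two extra vertices of every gadget in fibre $w$. This is an orthogonal assignment of $\CKER(K_3)$ in which no $P_w$ has rank $1$.

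The parity argument of Proposition~\ref{prop:peres-choice} needs an orthogonal assignment of $G_{\CKS}$ itself, and your fallback cannot supply one. Every subgraph of $\{a,b,c\}\times G_{\CKS}$ is properly $3$-coloured by projection onto $\{a,b,c\}$. But $G_{\CKS}$ is not $3$-colourable: each colour class of a proper $3$-colouring meets every basis exactly once, so it would be a non-contextual colouring.

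The gap cannot be closed, because the statement itself fails. The same obstruction breaks the paper's proof, which applies Proposition~\ref{prop:peres-choice} to a copy $G^{C}_{\CKS}\cong G_{\CKS}$ that it places inside $C\times G_{\CKS}$, and then glues with Proposition~\ref{prop:cons}. Your fibre-constancy shows that $\CKER(G)$ has an orthogonal assignment iff $G$ has one of arbitrary rank. Indeed, given $g\colon G\to\Pj(\FK^3)$, set $f(w,\vec v)=g(w)$ and put $I-g(w)$ and $0$ on each gadget pair in fibre $w$. So the lemma would force every $3$-facet graph with an orthogonal assignment to have a rank-$1$ one.

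That is false. Take the graph whose triangles are $\{a,b,c\}$, $\{a,b,c_1\}$, $\{c_1,e,e'\}$, $\{c_2,e,e'\}$, $\{c,c_2,h\}$; it is a $3$-facet graph. Sending $a,e,h$ to $1$ and all other vertices to $0$ is a non-contextual colouring, hence a $\{0,I\}$-valued orthogonal assignment. But any rank-$1$ assignment gives $E_c=I-E_a-E_b=E_{c_1}=I-E_e-E_{e'}=E_{c_2}$ together with $E_cE_{c_2}=0$, so $E_c=0$. What is needed is a different gadget construction, not a sharper Step 2.
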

\begin{proof}
  For every triangle $C = \{a,b,c\} \in \Omega(G)$ of $G$, there is subgraph $G^{C}_{\CKS}$ in $\CKER(G)$ isomorphic to $G_{\CKS}$ with vertices:
  \[ V(G^{C}_{\CKS}) = \bigcup_{\{\vec{x},\vec{y},\vec{z}\} \in \mathcal{B}_{\CKS}} \{(a,\vec{x}),(b,\vec{y}),(c,\vec{z})\} \]
  By Proposition~\ref{prop:peres-choice}, there exists orthonormal bases $\{\vec{u}_a,\vec{u}_b,\vec{u}_c\} \in \mathcal{B}_{\CKS}$ and $\{\vec{v}_a,\vec{v}_b,\vec{v}_c\} \subseteq \RAThree$ such that
  $f((e,\vec{u}_e)) = E_{\vec{v}_e}$ is a rank-$1$ projector for all $e \in C$.
  Thus, for every $C = \{a,b,c\} \in \Omega(G)$, there is
  \begin{itemize}
    \item a triangle $H(C) = \{(a,\vec{u}_a),(b,\vec{u}_b),(c,\vec{u}_c)\} \in \Omega(\CKER(G))$ with isomorphism $\mu_{C}\colon C \rightarrow H(C)$, and
    \item a rank-$1$ orthogonal assignment $\nu_{C}\colon C \rightarrow \Pj_1(\RAThree)$ defined as $\res{f}{H(C)} \circ \mu_{C}$, i.e.\ $\nu_{C}(e) = f(e,\vec{u}_e) = E_{\vec{v}_e}$ for all $e \in C$.
  \end{itemize}
  We now claim the family of local sections $\{\nu_{C}\}_{C \in \Omega(G_t)}$ agrees on intersections, i.e.\ if $w \in C \cap C'$, then $\nu_{C}(w) = \nu_{C'}(w)$.
  By step~\ref{item:synthetic-equality}, the elements $\mu_{C}(w) \in H(C)$ and $\mu_{C'}(w) \in H(C')$ are connected a consistency gadget $(\Cons,\mu_{C}(w),\mu_{C'}(w))$.
  Moreover, $f(\mu_{C}(w))$ and $f(\mu_{C'}(w))$ are rank-$1$ projectors.
  By Proposition~\ref{prop:cons}, $\nu_{C}(w) = f(\mu_{C}(w)) = f(\mu_{C'}(w)) = \nu_{C'}(w)$.
  Since the family of local sections $\{\nu_{C}\}_{C \in \Omega(G_t)}$ agrees on intersections, there exists a global section $\nu_{f}\colon G \rightarrow \Pj_1(\RAThree)$ with definition $\nu_{f}(w) = \nu_{C}(w)$ for any $w \in C \in \Omega(G)$.
  The function $\nu_{f} \colon G \rightarrow \Pj_1(\RAThree)$ is a rank-$1$ orthogonal assignment because each of the local sections $\nu_{C}$ is a rank-$1$ orthogonal assignment.
\end{proof}

Finally, we use the propositional formula $\theta_t(\vec{p}) = \varphi_{\CKER(G_t)}(\vec{p})$, i.e.\ the orthogonal assignment formula of graph $\CKER(G_t)$ to map an instance of $\XSAT(\RPTwo)$ to $\SAT{\Pj(\RAThree)}$.

\begin{proposition}
  \label{prop:xsat-reduction-r3}
  For every cross-product term $t$, the following are equivalent
  \begin{enumerate}[label=(\arabic*)]
    \item \label{item:xsat-reduction-r3-cross} $t(x_1,\dots,x_n) \in \XSAT(\RPTwo)$.
    \item \label{item:xsat-reduction-r3-graph} There exists an orthogonal assignment $f\colon \CKER(G_t) \rightarrow \Pj(\mathbb{R}^3)$.
    \item \label{item:xsat-reduction-r3-strongr} $\theta_{t}(\vec{p}) \in \strongSAT{\Pj(\RAThree)}$.
    \item \label{item:xsat-reduction-r3-weakr} $\theta_{t}(\vec{p}) \in \weakSAT{\Pj(\RAThree)}$.
    \end{enumerate}
\end{proposition}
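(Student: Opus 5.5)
I will prove the equivalences in the cycle \ref{item:xsat-reduction-r3-cross} $\Rightarrow$ \ref{item:xsat-reduction-r3-graph} $\Rightarrow$ \ref{item:xsat-reduction-r3-strongr} $\Rightarrow$ \ref{item:xsat-reduction-r3-weakr} $\Rightarrow$ \ref{item:xsat-reduction-r3-graph} $\Rightarrow$ \ref{item:xsat-reduction-r3-cross}.

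\textbf{Easy links.} The step \ref{item:xsat-reduction-r3-graph} $\Leftrightarrow$ \ref{item:xsat-reduction-r3-strongr} is exactly Proposition~\ref{prop:ortho-assignments} applied to $G = \CKER(G_t)$. The step \ref{item:xsat-reduction-r3-strongr} $\Rightarrow$ \ref{item:xsat-reduction-r3-weakr} is immediate since $1_Q \neq 0_Q$. For \ref{item:xsat-reduction-r3-weakr} $\Rightarrow$ \ref{item:xsat-reduction-r3-strongr}, I would avoid collapse arguments and argue directly. Suppose $\theta_t^Q(\alpha) \neq 0$. Write $\theta_t$ as a conjunction of commeasurable conjuncts, each $\neg(p_v \wedge p_{v'})$ or $\bigvee_{v\in C} p_v$. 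Their product is a nonzero projector $R$, and $R \le$ each conjunct. Because all relevant projectors commute with $R$ (they all lie in a common Boolean subalgebra of each conjunct), compressing to $\im(R)$ gives an orthogonal assignment into $\Pj(\im R)$, where $\dim \im R \in \{1,2,3\}$. This compression reduces the problem to \ref{item:xsat-reduction-r3-graph} only if $\dim \im R = 3$. For lower dimensions I would instead note that Proposition~\ref{prop:peres-choice} also fails (via the parity argument) for every orthogonal assignment into dimension $1$ or $2$ on $\CKER$-subgraphs: the rank-sum per triangle is $1$ or $2$, and taking $g = \rank \bmod 2$ when the dimension is $1$ yields a non-contextual colouring of $G_{\CKS}$, which is a contradiction. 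Dimension $2$ is the delicate case, and I expect this to be the real obstacle. There I would try to use the consistency gadgets and the parity of $\rank$ together with the fact that all vertices of a triangle commute, so that some triangle must contain a rank-$0$ vertex. Failing that, I would replace weak with strong via the paper's closure-under-collapse route if $\{\Pj(\RAThree)\}$ can be handled separately.

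\textbf{Direction \ref{item:xsat-reduction-r3-cross} $\Rightarrow$ \ref{item:xsat-reduction-r3-graph}.} Given a satisfying assignment of lines, assign to each node $s$ of $D'_t$ the projector $E_{\langle s\rangle}$. The root and $x_1$ receive the same value by satisfaction. Each edge of $G'_t$ joins $s = r\cross o$ with $r$ or $o$, which are orthogonal lines. For each completing vertex $v_e$ take the projector onto the orthogonal complement of the plane they span. This gives a rank-$1$ orthogonal assignment $g$ of $G_t$, which is a $3$-facet graph. Then define $f(w,\vec v) = g(w)$ if $\vec v$ is a fixed chosen vector $\vec v_0$, and $0$ otherwise. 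This fails clique sums, however, so instead I would define $f(w,\vec v) = U g(w) U^{\dagger}$ composed with the rank-$1$ projector onto $\vec v$... More cleanly: $f(w,\vec v) = g(w)$ if $\vec v$ lies in a chosen non-contextual-like selection. Since no colouring exists, I would instead use the tensor-like choice $f(w,\vec v)=g(w)$ for all $\vec v$. Then triangles $\{(a,\vec x),(b,\vec y),(c,\vec z)\}$ sum to $I$, edges are orthogonal, and each $\Cons$ gadget with $f(x)=f(x')$ chosen as the two complementary rank-$1$ projectors to $g(w)$ is satisfied. The remaining check is that the maximum cliques of $\CKER(G_t)$ are precisely these product triangles and the gadget triangles.

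\textbf{Direction \ref{item:xsat-reduction-r3-graph} $\Rightarrow$ \ref{item:xsat-reduction-r3-cross}.} Apply Lemma~\ref{lem:force-rank-1} to get a rank-$1$ orthogonal assignment $\nu_f$ of $G_t$. Read off lines $\mathbb{R}\vec u_s$ for each node. For $s = r\cross o$, $\vec u_s$ is orthogonal to $\vec u_r$ and $\vec u_o$. If $\vec u_r \neq \pm\vec u_o$, this forces $\mathbb{R}\vec u_s = \mathbb{R}\vec u_r \cross \mathbb{R}\vec u_o$. Showing the lines are distinct where needed (definedness of $\cross$) is the point to check; I would handle it by inspecting whether $r,o$ are adjacent or whether definedness can be argued from the structure. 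The identification of the root with $x_1$ then gives satisfaction.
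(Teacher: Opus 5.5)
The gap is in (4)$\Rightarrow$(3), at the point you flag yourself. For $\dim\im R=2$ you have no argument. Rank parity cannot help there: every triangle has rank sum $2$, so $\rank \bmod 2$ never sums to $1$ on a triangle, and Proposition~\ref{prop:bc-colouring} yields no contradiction. The collapse fallback also fails. It only gives strong satisfaction in the quotient $\Pj(\RAThree)\qs\{(R,1)\}$, not in $\Pj(\RAThree)$ itself, and Proposition~\ref{prop:strong-weak} needs closure under collapse, which the singleton class $\{\Pj(\RAThree)\}$ does not have.

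The missing idea, which is what the paper uses, is that $\Pj(\mathbb{R}^c)$ has a pBA morphism $h$ to $\Two$ whenever $c\le 2$. For $c=1$ it is $\Two$ itself. For $c=2$, send $0\mapsto 0$ and $I\mapsto 1$, and from each pair $\{L,L^{\perp}\}$ of orthogonal lines send exactly one projector to $1$. Since $h$ preserves $\wedge$ and $\vee$ on commuting pairs, it has two effects. It sends pairwise orthogonal projectors summing to $I$ to values with exactly one $1$. It also never sends two orthogonal projectors both to $1$. So $h\circ f$ is a non-contextual colouring, and restricting it to a copy of $G_{\CKS}$, as the paper does, contradicts the Kochen--Specker property. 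Even more directly, a non-contextual colouring $\chi$ of $\CKER(G_t)$ makes $v\mapsto\chi(v)I_3$ an orthogonal assignment into $\Pj(\RAThree)$, which is (2). This treats $c=1$ and $c=2$ uniformly.

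Separately, the step that is supposed to land you in dimension $\dim\im R$ is not justified as written. From $R\le$ each conjunct you only get that $R$ commutes with the conjunct values, such as $I-P_aP_b$ and $P_a\vee P_b\vee P_c$. It does not follow that $R$ commutes with the individual $P_v$. For example, a triangle assigned the three rank-1 projectors of an orthonormal basis contributes conjuncts all equal to $I$, so nothing relates those projectors to $R$. If $\im R$ is a plane in general position with respect to that basis, then $RP_aR$ is not a projector on $\im R$. You therefore still need an actual argument that a weak-but-not-strong witness yields an orthogonal assignment into some $\Pj(\mathbb{R}^c)$ with $c<3$. The paper asserts this by citing Proposition~\ref{prop:ortho-assignments}, which concerns strong satisfaction only, so it does not supply that detail either.

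Your other links agree with the paper.
\begin{itemize}
\item Your final choice $f(w,\vec v)=g(w)$ for (1)$\Rightarrow$(2) is the paper's construction. The two gadget vertices must receive the two \emph{distinct} rank-1 projectors completing $g(w)$ to a basis, not $f(x)=f(x')$, which would violate the edge $xx'$.
\item Your (2)$\Rightarrow$(1) is the paper's appeal to Lemma~\ref{lem:force-rank-1}. The definedness of $\cross$ that you flag there is a genuine point, and the paper's proof leaves it unaddressed too.
\end{itemize}
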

\begin{proof}
  Throughout the proof, we use the a bijection $b\colon \RPTwo \rightarrow \Pj_1(\RAThree)$ which maps a line $\mathbb{R}\vec{v}$ to its rank-$1$ projector $E_{\vec{v}}$.

  For the $\ref{item:xsat-reduction-r3-cross} \Rightarrow \ref{item:xsat-reduction-r3-graph}$ implication, suppose $t(x_1,\dots,x_n)$ is satisfied by the lines $\mathbb{R}\vec{v_1},\dots,\mathbb{R}\vec{v_n} \in \RPTwo$.
  This allows us to produce a rank-$1$ orthogonal assignment $g'\colon G'_t \rightarrow \Pj_1(\mathbb{R}^3)$ where $g(s) = b(\mathbb{R}\langle s \rangle)$ for a subterm $s$ of $t$.
  We can extend $g'\colon G'_t \rightarrow \Pj_1(\RAThree)$ to a rank-$1$ orthogonal assignment $g\colon G_t \rightarrow \Pj_1(\RAThree)$ for the triangle completion $G_t$ of $G'_t$ by letting $g(v_{(s,r)}) = \ocomp{g'(s) + g'(r)}$, i.e.\ projector onto the line orthogonal to the plane spanned by the lines $\mathbb{R}\langle s \rangle$ and $\mathbb{R} \langle r \rangle$ and $g(s) = g'(s)$ otherwise.
  Next, we extend the rank-$1$ orthogonal assignment $g\colon G_t \rightarrow \Pj_1(\RAThree)$ to a rank-$1$ orthogonal assignment $h\colon G_t \times G_{\CKS} \rightarrow \Pj_1(\RAThree)$ where $h((e,\vec{u})) = g(e)$.
  The final step is to extend the rank-$1$ orthogonal assignment $h\colon G_t \times G_{\CKS} \rightarrow \Pj_1(\RAThree)$ to an orthogonal assignment $f\colon \CKER(G_t) \rightarrow \Pj_1(\RAThree)$.
  For every $w$ and distinct vectors $\vec{u},\vec{v}$, there is a consistency gadget $(\Cons,(w,\vec{u}),(w,\vec{v}))$ in $\CKER(G_t)$ with two additional vertices $x,x'$.
  By construction, $h(w,\vec{u}) = h(w,\vec{v}) = g(w) = E_{\vec{w}}$, so we can set $f(x)$ and $f(x')$ in the gadget $(\Cons,(w,\vec{u}),(w,\vec{v}))$ to basis vectors of the plane orthogonal to $\vec{w}$.
  We let $f$ coincide with $h$ on the subgraph $G_t \times G_{\CKS}$ of $\CKER(G_t)$ which does not contain consistency gadgets.
  By construction, $f$ is an orthogonal assignment.

  For the $\ref{item:xsat-reduction-r3-cross} \Leftarrow \ref{item:xsat-reduction-r3-graph}$ implication, by Lemma~\ref{lem:force-rank-1}, there exists a rank-$1$ orthogonal assignment $\nu_{f}\colon G_t \rightarrow \Pj_1(\RAThree)$.
  For all $i \in \setn$, the lines $\mathbb{R}\vec{v_i} = b^{-1}(\nu_{f}([x_i]))$ in $\RPTwo$ define a satisfying instance of $t(x_1,\dots,x_n)$.

  The $\ref{item:xsat-reduction-r3-graph} \Leftrightarrow \ref{item:xsat-reduction-r3-strongr}$ bi-implication is an application of Proposition~\ref{prop:ortho-assignments}.

  For the $\ref{item:xsat-reduction-r3-strongr} \Leftrightarrow \ref{item:xsat-reduction-r3-weakr}$ bi-implication, the $\Rightarrow$ direction is obvious since $1_{Q} \not= 0_{Q}$ for $Q = \Pj(\mathbb{R}^3)$.
  Conversely, suppose for contradiction $\theta_t(\vec{p})$ is weakly satisfied, but not strongly satisfied.
  By Proposition~\ref{prop:ortho-assignments}, this induces an orthogonal assignment $f\colon \CKER(G_t) \rightarrow \Pj(\mathbb{R}^c)$ for some $c \in \{1,2\}$.
  The function $f$, by restriction to a copy of $G_{\CKS}$ in $\CKER(G_t)$ and composing with a $\pBA$ morphism $\Pj(\mathbb{R}^c) \rightarrow \Two$ (which exists by $c < 3$), induces a non-contextual colouring $g\colon G_{\CKS} \rightarrow \{0,1\}$.
  Contradiction, since $G_{\CKS}$ is Kochen-Specker proof, there can be no non-contextual colouring.
\end{proof}

The reduction exhibited in Proposition~\ref{prop:xsat-reduction-r3} establishes that $\strongSAT{\Pj(\RAThree)}$ and $\weakSAT{\Pj(\RAThree)}$ are $\ER$-hard.
In order to generalise these hardness results to dimension $d > 3$, we need a padding argument to obtain a reduction from $\XSAT(\RPTwo)$ to the satisfiability problems over $\Pj(\mathbb{R}^d)$.
This padding argument is encapsulated in Proposition~\ref{prop:dimension-padding} below.

For every propositional formula $\phi(p_1,\dots,p_n)$ and dimension $d$, we define a \emph{$d$-dimensional realisation} formula $\hat{\phi}_d(p_1,\dots,p_n,q_1,\dots,q_{d+1})$ defined as the conjunction of:
\begin{enumerate}[label=(D\arabic*)]
    \item \label{item:drealise-basis} $\basis_{d+1}(q_1,\dots,q_{d+1})$
    \item \label{item:drealise-subspace} $\bigwedge_{i \in \setn} \phi_{\leq}(p_i,\neg q_{d+1})$
    \item \label{item:drealise-onto} $\phi(p_1,\dots,p_n) \leftrightarrow \neg q_{d+1}$
\end{enumerate}
The purpose of this formula is to show that we can embed the witnesses to satisfiability of $\phi(p_1,\dots,p_n)$ in $\FK^d$ into a $d$-dimensional subspace of $\mathbb{K}^{d+1}$.
Suppose $\gamma \in \mdom{Q}{\hat{\phi}}$ is a meaningful substitution.
Item~\ref{item:drealise-basis}, asserts that if the projectors $\gamma(q_i)$ assigned to the $q$ variables are all non-zero, then they form a PVM of rank-$1$ projectors, i.e.\ pick out an orthonormal basis of $\FK^{d+1}$.
Item~\ref{item:drealise-subspace} asserts that each of the projectors $\gamma(p_j)$ assigned to the $p$-variables is a projector onto some subspace of the $d$-dimensional subspace $S$ orthogonal to the $1$-dimensional subspace $\gamma(q_{d+1})$, i.e.\ the image of $\sum_{i \in \setd} \gamma(q_d)$.
Item~\ref{item:drealise-onto} asserts that $\phi(p_1,\dots,p_n)$ is assigned to the projector on $S$.
\begin{proposition}
  \label{prop:dimension-padding}
  Let $\vec{p} = (p_1,\dots,p_n)$ and $\vec{q} = (q_1,\dots,q_{d+1})$.
  $\phi(\vec{p}) \in \strongSAT{\Pj(\FK^d)}$ if, and only if, $\hat{\phi}_d(\vec{p},\vec{q}) \in \strongSAT{\Pj(\FK^{d+1})}$.
\end{proposition}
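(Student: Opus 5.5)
We prove the two directions separately. Both rest on the identification of $\FK^d$ with the subspace $S$ of $\FK^{d+1}$ spanned by the first $d$ standard basis vectors, and of $\Pj(\FK^d)$ with the projectors whose image lies in $S$.

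\textbf{Forward direction.} Suppose $\alpha \in \mdom{\Pj(\FK^d)}{\phi(\vec{p})}$ with $\phi^{\Pj(\FK^d)}(\alpha) = I_d$.
\begin{enumerate}
  \item Consider the map $\iota\colon \Pj(\FK^d) \rightarrow \Pj(\FK^{d+1})$ given by $E \mapsto E \oplus 0$. It preserves commutativity, products, sums and $0$. It does not preserve $1$ or $\neg$: it sends $I_d$ to $P_S = I - E_{e_{d+1}}$, and $\neg E$ to $P_S - \iota(E)$ rather than $I - \iota(E)$.
  \item Because of this, we do not apply Proposition~\ref{prop:morphism-and-substitution}. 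Instead, induct on subformulas $\psi$ of $\phi$ and show that $\psi^{\Pj(\FK^{d+1})}(\iota\alpha)$ is defined and equals $\iota(\psi(\alpha)) + c_\psi E_{e_{d+1}}$ for some $c_\psi \in \{0,1\}$. The key point is that every value stays block diagonal with respect to $S \oplus S^\perp$, so all pairs remain commeasurable.
  \item This decoration is not enough by itself, since $\phi$ must evaluate to $P_S$. Choose $q_i = E_{e_i}$. Item~\ref{item:drealise-basis} then holds by Proposition~\ref{prop:basis-capture}.
  \item Item~\ref{item:drealise-subspace} holds because each $\iota(\alpha(p_i))$ is a subprojector of $P_S = \neg q_{d+1}$, using Proposition~\ref{prop:order-expressible}.
  \item For item~\ref{item:drealise-onto}, Proposition~\ref{prop:equality-bi-implication} requires $\phi^{\Pj(\FK^{d+1})}(\iota\alpha) = P_S$, which means $c_\phi = 0$. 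If instead $c_\phi = 1$, then $\phi$ evaluates to $I$ and the bi-implication fails.
  \item The fix is to note that the $S^\perp$-component is a $\Pj(\FK)\cong\Two$ valuation of $\phi$ with all variables set to $0$, so $c_\phi = \phi^{\Two}(0,\dots,0)$. This can equal $1$ (e.g.\ $\phi = \neg p$), and the formula as written would then fail. I would handle this by choosing the $S^\perp$-parts of the $p$'s freely only if item~\ref{item:drealise-subspace} allowed it, which it does not. So I would check the convention carefully. I expect the intended reading is that $\neg$ inside $\phi$ is interpreted relative to $S$ via the bi-implication, and I would verify $c_\phi$ directly by computation.
\end{enumerate}

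\textbf{Reverse direction.} Let $\gamma$ strongly satisfy $\hat\phi_d$ in $\Pj(\FK^{d+1})$.
\begin{enumerate}
  \item First ensure the $\gamma(q_i)$ are nonzero. If some $\gamma(q_i)$ were $0$, the basis clauses would still only force a PVM. Then set $P = \neg\gamma(q_{d+1})$.
  \item By item~\ref{item:drealise-subspace}, every $\gamma(p_i) \leq P$. Every subformula value is then block diagonal with respect to $\im(P) \oplus \im(P)^\perp$, by induction as in the forward direction.
  \item Item~\ref{item:drealise-onto} gives $\phi(\gamma) = P$. Restricting to $\im(P)$ gives a meaningful substitution in $\Pj(\im(P))$ evaluating $\phi$ to $1$. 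Compressing to $\im(P)$ is a map that preserves $\neg$ relative to $P$, and this block restriction is a genuine $\pBA$ morphism onto $\Pj(\im P)$.
  \item It remains to get $\rank P = d$. This follows if $\gamma(q_{d+1})$ is rank 1, which Proposition~\ref{prop:basis-capture} gives when all the $q$'s are nonzero.
  \item The case where some $q$'s are zero is the one I expect to need care. In that case $\rank P$ may be larger than $d$ but at most $d+1$, or smaller. If $\dim\im(P) = k \ne d$, I would embed or compress $\Pj(\FK^k)$ into $\Pj(\FK^d)$. For $k<d$, use the padding of the forward direction. For $k = d+1$, $P = I$ and the $q_{d+1}$ are zero, so $\phi$ is satisfied in $\Pj(\FK^{d+1})$, and this needs a separate argument.
  \item I would therefore try first to show that item~\ref{item:drealise-basis} forces nonzero $q$'s, and otherwise restrict attention to satisfying assignments with $k \le d$.
\end{enumerate}

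The main obstacles are the block-diagonal induction handling the relative complement in the forward direction, and the degenerate ranks in the reverse direction.
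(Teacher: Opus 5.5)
You located the two weak points correctly. In the forward direction it is the $S^\perp$-component $c_\phi=\phi^{\Two}(0,\dots,0)$; in the reverse direction it is degenerate $q$'s. These are exactly the steps the paper's own proof passes over: it simply sets $q_z\mapsto E_z$, and asserts that one ``can choose'' a witness in which \ref{item:drealise-basis} is realised by rank-$1$ projectors.

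Your proposal leaves both unresolved, and the reverse one cannot be resolved, because as stated the proposition is false. Item \ref{item:drealise-basis} only forces a PVM, so $\gamma(q_1)=I$, $\gamma(q_2)=\dots=\gamma(q_{d+1})=0$ is allowed. Then \ref{item:drealise-subspace} is vacuous and \ref{item:drealise-onto} says exactly $\phi^{\Pj(\FK^{d+1})}(\gamma)=I$. Hence every $\phi\in\strongSAT{\Pj(\FK^{d+1})}$ gives $\hat\phi_d\in\strongSAT{\Pj(\FK^{d+1})}$, and the ``if'' direction would yield $\strongSAT{\Pj(\FK^{d+1})}\subseteq\strongSAT{\Pj(\FK^{d})}$.

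\begin{itemize}
\item For $d=2$ this fails with $\phi=\varphi_{G_{\CKS}}$. It is strongly satisfied in $\Pj(\FK^3)$ by Proposition~\ref{prop:ortho-assignments}. It is not strongly satisfied in $\Pj(\FK^2)$, because that pBA has a $\pBA$-morphism to $\Two$ and $G_{\CKS}$ has no non-contextual colouring.
\item It also fails for $d=3$, which is in the range used in Theorem~\ref{thm:fixed-d-real-sat}. The magic-square formula $\mu$ has a real solution in dimension $4$ (the standard Pauli table is real, since $\sigma_y\otimes\sigma_y$ is). It has none in dimension $3$: the argument in Lemma~\ref{lem:magic-unique} produces anticommuting involutions $A_{11},A_{22}$, which forces $\mathrm{tr}\,A_{22}=0$, impossible in odd dimension.
\end{itemize}

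This sinks each of your proposed fixes for the reverse direction. Showing that \ref{item:drealise-basis} forces nonzero $q$'s cannot succeed. ``Restricting to witnesses with $k\le d$'' is vacuous when no such witness exists. The step ``for $k<d$, use the padding'' is also wrong: zero-padding sends $I_k$ to a proper projector whenever $c_\phi=0$. For instance, the magic square is solvable in dimension $4$ but not in dimension $5$.

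The forward direction, by contrast, is true. Your computation $\psi^{\Pj(\FK^{d+1})}(\iota\alpha)=\iota(\psi^{\Pj(\FK^d)}(\alpha))+c_\psi E_{e_{d+1}}$ is correct. The repair is not a ``relative'' reading of $\neg$, since the semantics of meaningful substitution is fixed.
\begin{itemize}
\item When $c_\phi=1$, keep $\iota\alpha$ on the $p$'s and use the degenerate frame above. Then $\phi$ evaluates to $P_S+E_{e_{d+1}}=I=\neg\gamma(q_{d+1})$. In fact, by your own block-diagonal induction, every witness of $\hat\phi_d$ must then have $\gamma(q_{d+1})=0$.
\item When $c_\phi=0$, your choice $q_i\mapsto E_{e_i}$ works.
\end{itemize}

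The missing ingredient is a strengthening of $\hat\phi_d$ that forces every $q_i$ to be nonzero, hence rank $1$ by Proposition~\ref{prop:basis-capture}. Only then does $\rank P=d$ hold, and only then does your block-compression argument for the reverse direction go through. Without it, the padding step of Theorem~\ref{thm:fixed-d-real-sat} is not justified by this proposition.
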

\begin{proof}
$\Rightarrow$ Fix an orthonormal basis $e_1,\dots,e_{d+1}$ of $\Hilb_{d+1}$ and let $E_1,\dots,E_{d+1}$ be their corresponding rank-$1$ projectors.
Let $S$ be the subspace of $\Hilb_{d+1}$ of dimension $d$ spanned by $e_1,\dots,e_{d}$.
By standard linear algebra, there exists an isomorphism $t\colon \Hilb_{d} \rightarrow S$ induced by a basis $f_1,\dots,f_{d}$ and such that $f_i \mapsto e_i$.
If $\Pj(\Hilb_d)$ satisfies $\varphi(p_1,\dots,p_n)$ using projectors $P_i$.
Express $P_i$ as a linear combination of projectors $F_1,\dots,F_d$ onto basis vectors $f_1,\dots,f_{d}$.
Using the the $t$ isomorphism, we obtain projectors onto subspaces of $S$.
Thus, assigning $p_i$ to the image of $P_i$ under $t$ and $q_z$ to $E_z$, we obtain that $\Pj(\Hilb_{d+1})$ satisfies $\hat{\phi}_d$.

$\Leftarrow$ Conversely, if $\Pj(\Hilb_{d+1})$ satisfies $\hat{\phi}_d$, then we can choose a satisfying assignment such that \ref{item:drealise-basis} is satisfied by the rank-$1$ projectors associated to an orthonormal basis $e_1,\dots e_{d+1}$ of $\Hilb_{d+1}$.
Item \ref{item:drealise-subspace} allows us to view each $p_i$ as assigned to projector onto a subspace $S_i$ of the subspace $S$ spanned by $e_1,\dots,e_d$, i.e.\ the orthogonal complement of the line spanned by $e_{d+1}$.
From here, the proof is the similar to the converse, but using the isomorphism $u\colon S \rightarrow \Hilb_{d}$ and item~\ref{item:drealise-onto}, to ensure that $\Pj(\Hilb_d)$ satisfies $\phi(p_1,\dots,p_n)$.
\end{proof}

\begin{theorem}
\label{thm:fixed-d-real-sat}
For every $d \geq 3$, $\weakSAT{\Pj(\mathbb{R}^d)}$ and $\strongSAT{\Pj(\mathbb{R}^d)}$ are $\ER$-complete.
\end{theorem}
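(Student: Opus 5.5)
Membership in $\ER$ for every $d$ is exactly Proposition~\ref{prop:realdsat-er}, so the plan only has to establish $\ER$-hardness of both $\strongSAT{\Pj(\mathbb{R}^d)}$ and $\weakSAT{\Pj(\mathbb{R}^d)}$ for $d \geq 3$. I will argue by induction on $d$, reducing from the $\ER$-complete problem $\XSAT(\RPTwo)$ of~\cite{crossProductER}.

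\textbf{Base case $d = 3$.} This is Proposition~\ref{prop:xsat-reduction-r3}. The map $t \mapsto \theta_t = \varphi_{\CKER(G_t)}$ is computable in polynomial time, because $G_t$ is linear in $|t|$ and $\CKER(G_t)$ multiplies its size only by a constant: $40$ vertices of $G_{\CKS}$ and $\binom{40}{2}$ consistency gadgets per vertex. By that proposition, $t \in \XSAT(\RPTwo)$ iff $\theta_t$ is strongly satisfiable in $\Pj(\RAThree)$, iff it is weakly satisfiable there. This gives hardness of both problems simultaneously.

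\textbf{Strong inductive step.} Proposition~\ref{prop:dimension-padding} gives the reduction $\phi \mapsto \hat{\phi}_d$ from $\strongSAT{\Pj(\mathbb{R}^d)}$ to $\strongSAT{\Pj(\mathbb{R}^{d+1})}$. The padding formula adds only $\basis_{d+1}$ (of size $O(d^2)$, a constant for fixed $d$), $n$ order constraints and one bi-implication. It is therefore polynomial in $|\phi|$, and iterating it a fixed number $d-3$ of times from $\theta_t$ stays polynomial.

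\textbf{Weak inductive step (main obstacle).} Proposition~\ref{prop:dimension-padding} is stated only for strong satisfiability, and a weak witness in higher dimension might send some $q_i$ to $0$ or to a higher-rank projector. I would avoid padding weak satisfiability directly. Instead I would compose the strong reduction with a strong-to-weak step: reduce to $\psi = \hat{\phi}_d$ restricted to inputs of the form $\theta_t$, padded up to dimension $d$. For the Kochen--Specker–based instances, the argument in the last part of Proposition~\ref{prop:xsat-reduction-r3} should adapt as follows.

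Suppose $\psi^{Q}(\gamma) = P \neq 0$ but $\psi$ is not strongly satisfied. Compress everything to the subalgebra of projectors below $P$. Since all the relevant elements commute with $\psi^{Q}(\gamma)$ along the parse tree, this yields a strong satisfaction in $\Pj(\mathbb{R}^{c})$ with $c = \rank(P) \leq d$.

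If this compression does not go through cleanly, the alternative is to add the Proposition~\ref{prop:sat-reduction}-style trick of asserting $\psi \wedge \neg\neg\psi$ style constraints. Failing both, the fallback is to note that the class $\{\Pj(\mathbb{R}^d)\}$ behaves like a collapse-closed class only up to rank. The iterated padded instance then forces a copy of $G_{\CKS}$ to be realised in the range of $P$, which requires $\rank(P) \geq 3$ and rules out $\rank(P) \leq 2$ by composing with a morphism to $\Two$. For $3 \leq \rank(P) < d$, the compressed witness yields an orthogonal assignment of $\CKER(G_t)$ in some $\Pj(\mathbb{R}^{c})$ with $c \geq 3$. I expect Lemma~\ref{lem:force-rank-1} and the cross-product argument to generalise to this setting, recovering $t \in \XSAT(\RPTwo)$.

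Checking this generalisation to higher-rank compressions is the step I expect to be hardest.
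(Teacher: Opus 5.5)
Your membership argument, base case and strong inductive step are the same as the paper's. For the weak case the paper says only that weak satisfiability in $\Pj(\mathbb{R}^d)$ amounts to strong satisfiability in some $\Pj(\mathbb{R}^{d'})$ with $d' \leq d$, and your compression is an attempt to prove exactly that. The justification you give is false.

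A meaningful substitution only requires the two arguments of each binary connective to commute, so the value of the whole formula need not commute with deeper subformulas. Take $(p \wedge q) \vee r$ with $p \mapsto E_{\vec{e}_1}$, $q \mapsto E_{\vec{e}_2}$, $r \mapsto E_{\vec{u}}$, where $\vec{u} = (\vec{e}_1 + \vec{e}_3)/\sqrt{2}$. The value is $E_{\vec{u}}$, which does not commute with $E_{\vec{e}_1}$, and $E_{\vec{u}}E_{\vec{e}_1}E_{\vec{u}} = \frac{1}{2}E_{\vec{u}}$ is not a projector. In $\theta_t$ the value lies below, hence commutes with, every conjunct. Nothing forces it to commute with the individual variables, so compressing does not yield a substitution into $\Pj(\mathbb{R}^{c})$. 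The weak direction of Proposition~\ref{prop:xsat-reduction-r3}, which your base case cites, uses this same compression without further argument.

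Your fallbacks do not close the gap. First, $\psi \wedge \neg\neg\psi$ is equivalent to $\psi$. Second, the trick of Proposition~\ref{prop:sat-reduction} forces all variables to commute, which confines every witness to a total Boolean subalgebra where the Kochen--Specker gadget is unsatisfiable. Third, even granting a compression, Lemma~\ref{lem:force-rank-1} does not extend to $c > 3$. The parity argument of Proposition~\ref{prop:peres-choice} pins ranks only because they sum to $3$ on each triangle. On $\mathbb{R}^{3k} \cong \mathbb{R}^3 \otimes \mathbb{R}^k$, the map $\vec{v} \mapsto E_{\vec{v}} \otimes I_k$ is an orthogonal assignment of $G_{\CKS}$ with no rank-one vertex when $k \geq 2$. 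So the case $3 \leq \rank(P) < d$, which you rightly flag and the paper does not address, stays open.

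Your worry that a witness may send some $q_i$ to $0$ is not specific to weak satisfiability. It already breaks the strong inductive step, which you share with the paper. The formula $\basis_{d+1}$ forces rank-one projectors only when every $q_i$ is non-zero, which is the hypothesis of Proposition~\ref{prop:basis-capture}. Take $q_{d+1} \mapsto I$, every other $q_i \mapsto 0$ and every $p_i \mapsto 0$. This satisfies (D1) and (D2), and (D3) becomes $\theta_t(0,\dots,0) \leftrightarrow 0$. That holds because each clique clause $\bigvee_{v \in C} p_v$ evaluates to $0$.

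Hence the padded formula is strongly satisfiable in $\Pj(\mathbb{R}^{d+1})$ for every $t$. In fact $\hat{\phi}_d$ is strongly satisfiable for every $\phi$: when $\phi(0,\dots,0) = 1$, use $q_1 \mapsto I$, $q_{d+1} \mapsto 0$ instead. So the $\Leftarrow$ direction of Proposition~\ref{prop:dimension-padding} fails and the padding reduces nothing.

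A correct argument for $d > 3$ needs two things the proposal does not supply. One is a padding gadget that excludes such degenerate measurements. The other is a real proof that weak witnesses of the padded instances yield solutions of $t$.
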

\begin{proof}
  Proposition~\ref{prop:realdsat-er} proves membership in $\ER$.
  For $\ER$-hardness of the base case $d = 3$, Proposition~\ref{prop:xsat-reduction-r3} yields reductions from the $\ER$-complete problem $\XSAT(\RPTwo)$ to $\weakSAT{\Pj(\mathbb{R}^3)}$ and $\strongSAT{\Pj(\mathbb{R}^3)}$.
  The size of $\CKER(G_t)$ is $O(|G_{\CKS}||G_t|) = O(40|t|)$ and thus $\theta_t(\vec{p})$ is polynomial in the size of $t(x_1,\dots,x_n)$.
  For $\ER$-hardness beyond the base case, we inductively apply Proposition~\ref{prop:dimension-padding} to obtain reduction from $\strongSAT{\Pj(\mathbb{R}^d)}$ to $\strongSAT{\Pj(\mathbb{R}^{d+1})}$.
  Finally, the $\weakSAT{\Pj(\mathbb{R}^{d})}$ problem is equivalent to $\strongSAT{\Pj(\mathbb{R}^{d'})}$ for some $d' \leq d$.
  The $d$-dimensional padding formula $\hat{\phi}_{d}(\vec{p},\vec{q})$ from Proposition~\ref{prop:dimension-padding} is linear in the size of $\phi(\vec{p})$.
\end{proof}

\subsection{Complex case}
We define the class $\exists \mathbb{C}$ as $\exists (L,M)$ where $L = \Lrings$ and $M$ is the field of complex numbers $(\mathbb{C},0,1,+,*)$.
It is easy to see that $\exists \mathbb{C} \subseteq \exists \mathbb{R}$ since there is an isomorphism of rings mapping a complex number $z = x + iy$ to the $2 \times 2$-real matrix $\left( {\begin{smallmatrix} x & y \\ -y & x \\ \end{smallmatrix}} \right)$.
We also define $\ECconj$ to be $\exists (L,M)$ where $L = \Lrings \cup \{u\}$ has an additional unary function symbol $u$ and $M$ is the complex field $(\mathbb{C},0,1,+,*,(\cdot)^{*})$ equipped with complex conjugation $(\cdot)^{*}$ as the interpretation of $u$.
As the subfield $\mathbb{R}$ can be defined as the numbers fixed by complex conjugation, we obtain the equality between the complexity classes $\ECconj = \ER$
\begin{proposition}
\label{prop:ec-to-er}
$\EC \subseteq \ER$ and $\ECconj = \ER$.
\end{proposition}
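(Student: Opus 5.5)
We need to prove $\EC \subseteq \ER$ and $\ECconj = \ER$; the plan is to give explicit polynomial-time translations of existential sentences in each direction. Since all classes are defined as closures of $E(L,M)$ under many-one polynomial-time reductions, it suffices to show three things. First, $E(\Lrings,\FC)$ reduces to $E(\Lrings\cup\{\leq\},\FR)$. Second, $E(\Lrings\cup\{u\},\FC)$ reduces to $E(\Lrings\cup\{\leq\},\FR)$. Third, the converse of the second. The first reduction is a special case of the second: a sentence in $\Lrings$ is also a sentence in $\Lrings\cup\{u\}$ that simply does not use $u$, and its truth value in $\FC$ is unchanged. Hence $\EC \subseteq \ECconj$, and it remains to prove $\ECconj = \ER$.

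For $\ECconj \subseteq \ER$, take $\exists X_1\dots X_n\,\Psi$ with $\Psi$ quantifier-free. First flatten every term into elementary steps $Y = Z_1 + Z_2$, $Y = Z_1 Z_2$ and $Y = u(Z)$, introducing a fresh existential variable for each subterm. This keeps the size linear and avoids the blow-up from expanding nested products. Next replace each complex variable $Y$ by a pair of real variables $(Y^r, Y^i)$ and translate each elementary equation coordinatewise:
\begin{itemize}
\item addition is componentwise;
\item multiplication becomes $Y^r = Z_1^rZ_2^r - Z_1^iZ_2^i$ and $Y^i = Z_1^rZ_2^i + Z_1^iZ_2^r$;
\item conjugation becomes $Y^r = Z^r$ and $Y^i = 0 - Z^i$;
\item the constants become $0 \mapsto (0,0)$ and $1 \mapsto (1,0)$.
\end{itemize}
An atomic equation $S = T$ between flattened variables becomes $S^r = T^r \wedge S^i = T^i$, and a negated equation becomes the disjunction of the two real disequations. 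The boolean structure of $\Psi$ is kept as is. Correctness is the identification $\FC \cong \FR^2$ as a ring with involution, so the reduction is polynomial.

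For $\ER \subseteq \ECconj$, take a sentence in the language of ordered rings and quantify over complex variables, adding for each variable $X$ the constraint $u(X) = X$; this forces $X \in \FR$. The main obstacle is the order relation $\leq$, which is not available in $\Lrings\cup\{u\}$. We eliminate it by the standard trick: $S \leq T$ holds in $\FR$ iff there is a real $W$ with $T - S = W\cdot W$. So each atom $S \leq T$ is replaced by $\exists W\,(u(W)=W \wedge T = S + W W)$, with a fresh $W$. To keep the sentence existential, we first put $\Psi$ in negation normal form, turning each $\neg(S\leq T)$ into $T \leq S \wedge \neg(T = S)$. After that, $\leq$ occurs only positively, so the new existential quantifiers can be pulled to the front. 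The quantifiers range over $\FC$ but the conjugation constraints confine them to $\FR$, so truth is preserved, and the size grows only linearly. Combining the two inclusions gives $\ECconj = \ER$, and with $\EC\subseteq\ECconj$ the proposition follows.
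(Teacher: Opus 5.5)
Your proposal is correct and follows the same route as the paper, which only observes that $\mathbb{C}$ embeds into real $2\times 2$ matrices (equivalently, the pairs of reals you use) and that $\mathbb{R}$ is definable in $\mathbb{C}$ as the fixed points of conjugation. Your flattening step and your elimination of $\leq$ via $T = S + W W$ with $u(W)=W$, after passing to negation normal form, fill in details the paper's one-line justification leaves implicit. One cosmetic point: $\Lrings$ has no subtraction, so write e.g.\ $Y^i + Z^i = 0$ instead of $Y^i = 0 - Z^i$.
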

To show that satisfiablity in $\Pj(\mathbb{C}^d)$ is in $\ER$, we use a similar proof to Proposition~\ref{prop:realdsat-er} to produce a sentence in $\ECconj$ from an input propositional formula.
The result then follows from Proposition~\ref{prop:ec-to-er}.
\begin{proposition}
  \label{prop:complexdsat-er}
  For every $d \geq 1$, $\strongSAT{\Pj(\mathbb{C}^d)}$ and $\weakSAT{\Pj(\mathbb{C}^d)}$ are in $\ER$.
\end{proposition}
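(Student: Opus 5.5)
We mirror the proof of Proposition~\ref{prop:realdsat-er}, replacing the target structure by $\ECconj$ and then invoking Proposition~\ref{prop:ec-to-er}. Given $\phi(p_1,\dots,p_n)$, we construct in polynomial time the same existential sentence $\Psi^{s}$ in $\Lrings \cup \{\dagger\}$, with variables $P_l$ for the propositional variables and $Z_\psi$ for the non-variable subformulas, and with the equations $P_l^2 = P_l$, $P_l = P_l^{\dagger}$, the defining equations for $\neg,\wedge,\vee$ together with the commutator conditions $[V_{\psi_1},V_{\psi_2}] = 0$, and finally $Z_\phi = 1$. The argument that $\phi$ is strongly satisfied in $\Pj(\mathbb{C}^d)$ iff $\Psi^{s}$ holds in $M_{\mathbb{C}}(d,d)$, with $\dagger$ read as conjugate transpose, is exactly as in the real case. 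It uses only Definition~\ref{def:projector-pba} and the observation that every $Z_\psi$ is then a self-adjoint idempotent: the negation, meet and join of commuting projectors are again projectors.

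Next we unfold $\Psi^{s}$ into a sentence $\Phi^{s}$ in the language $\Lrings \cup \{u\}$ interpreted in $(\mathbb{C},0,1,+,*,(\cdot)^{*})$. Each matrix variable becomes $d^2$ complex variables. Each matrix equation becomes $d^2$ polynomial equations in these entries. The only new point is the self-adjointness condition, which becomes the $d^2$ equations $P_l^{ij} = u(P_l^{ji})$; this is exactly where the conjugation symbol $u$ is needed, and it is why we land in $\ECconj$ rather than $\EC$. The size of $\Phi^{s}$ is $O(d^2|\phi|)$, which is linear for fixed $d$. This gives a polynomial-time reduction of $\strongSAT{\Pj(\mathbb{C}^d)}$ to $E(L,M)$ for $\ECconj$. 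For $\weakSAT{\Pj(\mathbb{C}^d)}$ we replace $Z_\phi = 1$ by $Z_\phi \neq 0$. At the matrix level this means some entry is nonzero, which we express as a disjunction $\bigvee_{i,j} Z_\phi^{ij} \neq 0$; this is allowed because the matrix of the sentence is an arbitrary quantifier-free formula.

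Finally, Proposition~\ref{prop:ec-to-er} gives $\ECconj = \ER$, so both problems lie in $\ER$. If we wanted to avoid relying on that proposition as a black box, we could instead translate directly: write each complex entry as $x + iy$ with real variables $x,y$. Conjugation is then $(x,y) \mapsto (x,-y)$, multiplication is expanded into its real and imaginary parts, and $z \neq 0$ becomes $x^2 + y^2 \neq 0$. This still gives a sentence of polynomial size over the ordered field $\mathbb{R}$.

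I expect no step to be a real obstacle. The only care needed is to check that the conjugate-transpose condition is expressed correctly, and that the commutator and product equations over $\mathbb{C}$ unfold into polynomially many field equations.
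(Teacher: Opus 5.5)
Your proposal is correct and follows the paper's proof essentially verbatim. You build the same matrix sentence $\Psi^{s}$ as in Proposition~\ref{prop:realdsat-er}, unfold it entrywise over $\mathbb{C}$ with self-adjointness expressed as $P_l^{ij} = u(P_l^{ji})$, place both problems in $\ECconj$, and conclude with Proposition~\ref{prop:ec-to-er}. Your extra details (the weak case as a disjunction of entrywise non-vanishing conditions, and the optional direct real/imaginary splitting into $\ER$) are sound refinements; one small correction is that unfolding a matrix product costs $O(d^3)$ rather than $O(d^2)$ symbols per equation, which is immaterial for fixed $d$.
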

\begin{proof}
Similarly to the proof of Proposition~\ref{prop:realdsat-er}, we first translate propositional satisfiability in $\Pj(\mathbb{C}^d)$ into the existence of a self-adjoint solution set of matrix equations in $M_{\mathbb{C}}(d,d)$.
We then translate the matrix equations in $M_{\mathbb{C}}(d,d)$ and the self-adjoint condition into equations over $\mathbb{C}$.
This process is via unfolding the definitions of matrix addition and multiplication in terms of matrix entries.
The salient difference is that in expressing the self-adjoint condition, we now use complex conjugation, i.e. $P_{ij} = u(P_{ji})$.
Therefore, $\strongSAT{\Pj(\mathbb{C}^d)}$ and $\weakSAT{\Pj(\mathbb{C}^d)}$ are in $\ECconj$, and by Proposition~\ref{prop:ec-to-er} in $\ER$.
\end{proof}

We now proceed to prove that  $\strongSAT{\Pj(\mathbb{C}^d)}$ for $d \geq 4$ is $\ER$-hard, and thus $\ER$-complete by Proposition~\ref{prop:complexdsat-er}.
The reduction from $\XSAT(\RPTwo)$ we employed in the real case (Proposition~\ref{prop:xsat-reduction-r3}) does not straightforwardly translate to the complex case.
This is because we need to produce orthogonal sets of real lines in $\mathbb{R}^3$ from a meaningful substitution which consists of complex projectors.
That is we need to translate data in the form of operators over $\mathbb{C}^d$ to lines in $\mathbb{R}^3$.
In order to accomplish this, we have to change the base case of our dimension from $d = 3$ to $d = 4$.
However, the advantage of this reduction is that we also obtain a reduction from $\XSAT(\RPTwo)$ to the exclusive-or satisfiability problem $\strongXOR{\Pj(\FC^d)}$ that we describe below.

The first ingredient in our reduction from $\XSAT(\RPTwo)$ to $\strongSAT{\Pj(\mathbb{C}^4)}$ is the Pauli vector isomorphism from $\mathbb{R}^3$ to the set $\Bnd_0(\FC^2)$ of traceless Hermitian matrices over $\mathbb{C}^2$.
This isomorphism is the map $\vec{\sigma}\colon \mathbb{R}^3 \rightarrow \Bnd_0(\FC^2)$ given by $(a_1,a_2,a_3) \mapsto (a_1 \sigma_x + a_2 \sigma_y + a_3 \sigma_z)$ where $\sigma_x,\sigma_y,\sigma_z$ are the standard Pauli matrices.
More explicitly,
\[ \vec{\sigma}(a_1,a_2,a_3) =
  \begin{pmatrix}
    a_3 & a_1 - \iu a_2 \\
    a_1 + \iu a_2 & -a_3
  \end{pmatrix}
\]
The second ingredient in our reduction is to change perspective by using the pBA $\Inv(\FC^d)$ consisting of self-adjoint involutions, i.e.\ $A^2 = I_d$ instead of $\Pj(\FC^d)$.
For every $d \in \mathbb{N}$, there is a bijection $b\colon \Pj(\mathbb{C}^d) \rightarrow \Inv(\mathbb{C}^d)$ given by $b(E) = I_d - 2E$ which preserves commutative pairs.
This bijection allows us to reverse engineer the definitions of $\vee,\wedge$ for $\Inv(\mathbb{C}^d)$ from their definitions in the pBA $\Pj(\mathbb{C}^d)$.
However, our main reason for shifting to $\Inv(\mathbb{C}^d)$ is that the exclusive-or operation $\xor$ on involutions $\Inv(\FC^d)$ is matrix multiplication.
Moreover, $\Inv(\FC^2) \subseteq \Bnd_0(\FC^2)$ and contains the standard Pauli matrices.
Note the in the classical case of $d = 1$, this amounts to moving the Boolean domain from $\{0,1\}$ to $\{+1,-1\}$.

The final ingredient in our reduction is the so-called Peres-Mermin Magic Square~\cite{peresKS}.
The Peres-Mermin magic square is a witness to Kochen-Specker contextuality through quantum entanglement.
The magic square $M$ is a set of equations over the group $J_2 = (\{-1,+1\},*) \cong (\mathbb{Z}_{2},+)$ typically presented as the following table:
\begin{equation}
\label{eq:magic-square-variables}
\begin{array}{|c|c|c:}
\hline
a & b & c \\ \hline
d & e & f\\ \hline
g & h & i \\ \hline
\end{array}
\end{equation}
where each cell represents a variable, every row and the first 2 columns represent a multiplicative equation equal to $1$ (e.g.\ $abc = 1$), and the last column represents the equation $cfi = -1$ (indicated by the dotted column seperator).
However, these equations are satisfiable in $\Inv(\mathbb{C}^4)$ via the assignment expressed in the table using Pauli matrices acting on two qubits:
\begin{equation}
\label{eq:magic-square-standard-assignment}
\begin{array}{|c|c|c:}
\hline
\sigma_x \otimes I_2 & I_2 \otimes \sigma_x & \sigma_x \otimes \sigma_x \\ \hline
I_2 \otimes \sigma_Z & \sigma_Z \otimes I_2 & \sigma_x \otimes \sigma_z \\ \hline
\sigma_x \otimes \sigma_z & \sigma_z \otimes \sigma_x & \sigma_y \otimes \sigma_y \\ \hline
\end{array}
\end{equation}
These yield a satisfying assignment since every row and the first 2 columns multiply to $I_{4}$ and the entries of the last column multiply to $-I_{4}$.
The magic square equations can be written as a propositional formula $\mu(\vec{a})$ involving the variables $[\vec{a}]$ listed in the table~\eqref{eq:magic-square-variables} and the exclusive or operation $\xor$.
Namely, with the exception of the last column, every row and column corresponds to a positive XOR-clause of the variables, e.g.\ the first row is $a \xor b \xor c$.
The last column corresponds as negative XOR-clause $\neg (c \xor f \xor i)$.
Since $M$ does not have solution over $\FZ_{2}$, we have that $\mu(\vec{a})$ is unsatisfiable in the two-element Boolean algebra $\Two$.
The table~\eqref{eq:magic-square-standard-assignment} demonstrates that $\mu(\vec{a})$ is strongly satisfiable in $\Inv(\mathbb{C}^4)$.
We call such strongly satisfying assignments $\alpha \in Q^{\mu(\vec{a})}$ \emph{solutions to the magic square $M$}.

In Lemma~\ref{lem:cross-to-magic}, we show that we can express the orthogonality of lines in $\RPTwo$ using solutions to the magic square $M$.
This is because an orthogonal set of lines in $\RAThree$ corresponds to a qubit basis in $\Bnd_0(\mathbb{C}^2)$ under the bijection $\vec{\sigma}$.
A set of involutions $\{X,Y,Z\}$ is a \emph{qubit basis} if the following equations are satisfied:
\begin{align*}
  [X,Y] = 2iZ &\quad \{X,Y\} = 0 \\
  [Z,X] = 2iY &\quad \{Z,X\} = 0 \\
  [Y,Z] = 2iX &\quad \{Y,Z\} = 0
\end{align*}
where $[X,Y] := XY - YX$ is the commutator operation and $\{X,Y\} := XY + YX$ is the anti-commutator operation for all $X,Y \in \Bnd(\Hilb)$.
\begin{proposition}
  \label{prop:involution-facts}
  Let $A,B,X,Y \in I(\Hilb)$ be involutions over $\FK$ Hilbert space $\Hilb$.
  \begin{enumerate}[label=(\arabic*)]
    \item \label{item:unit-commute} $(AB)^2 = I_{\Hilb}$ iff $AB = BA$, i.e.\ $A$ and $B$ commute.
    \item \label{item:unit-anti-commute} $(AB)^2 = -I_{\Hilb}$ iff $AB = -BA$, i.e.\ $A$ and $B$ anti-commute.
    \item \label{item:tensor-factor} If $A \otimes B = X \otimes Y$, then $A = \pm X$ and $B = \pm Y$.
  \end{enumerate}
\end{proposition}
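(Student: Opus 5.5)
For items (1) and (2) I use only the involution identities $A^2 = B^2 = I_{\Hilb}$, i.e.\ $A^{-1} = A$ and $B^{-1} = B$.

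For item (1), suppose $(AB)^2 = I_{\Hilb}$, so $ABAB = I_{\Hilb}$. Multiplying on the left by $A$ and on the right by $B$ gives
\[ BA = A(ABAB)B = AB. \]
Conversely, if $AB = BA$, then
\[ (AB)^2 = ABBA = A I_{\Hilb} A = I_{\Hilb}. \]

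Item (2) is the same computation with a sign. From $ABAB = -I_{\Hilb}$ we get $BA = -AB$ by the same left and right multiplications. If $AB = -BA$, then
\[ (AB)^2 = -ABBA = -I_{\Hilb}. \]
Self-adjointness is not used in either item, so both hold for arbitrary involutions.

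For item (3), I work with elementary tensors. Assume $A \otimes B = X \otimes Y$, where each factor is a nonzero operator (involutions are invertible, hence nonzero). The standard fact is that $A \otimes B = X \otimes Y$ with all factors nonzero forces $X = \lambda A$ and $Y = \lambda^{-1} B$ for some nonzero scalar $\lambda \in \FK$. I would prove this concretely in finite dimensions, the relevant case, using Kronecker block structure. The $(i,j)$ block of $A \otimes B$ is $A_{ij} B$, and that of $X \otimes Y$ is $X_{ij} Y$.

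Pick an index $(i,j)$ with $A_{ij} \neq 0$. Then $X_{ij} Y = A_{ij} B \neq 0$, so $X_{ij} \neq 0$ and hence $Y = (A_{ij}/X_{ij}) B$. Put $\mu = A_{ij}/X_{ij}$, so $Y = \mu B$. Substituting into the block equations gives $(A_{kl} - \mu X_{kl}) B = 0$ for every $(k,l)$. Since $B \neq 0$, this forces $A = \mu X$.

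Next I pin down the scalar using the involution condition:
\[ I_{\Hilb} = Y^2 = \mu^2 B^2 = \mu^2 I_{\Hilb}, \]
so $\mu^2 = 1$. Over both $\FR$ and $\FC$ this means $\mu = \pm 1$. Therefore $Y = \pm B$ and $X = \mu^{-1} A = \pm A$, with the same sign in both; equivalently $A = \pm X$ and $B = \pm Y$.

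The only real step is this scalar-ambiguity lemma for tensor products. The risk is the infinite-dimensional case; if needed I would handle it by applying both sides to product vectors $u \otimes v$, or by using a basis-free argument with rank-one functionals. The involution condition then removes everything except a sign.
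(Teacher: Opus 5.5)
Your proof is correct and follows the same route as the paper: (1) and (2) by direct manipulation using $A^2 = B^2 = I_{\Hilb}$, and (3) via the fact that $A \otimes B = X \otimes Y$ with nonzero factors forces $A = \lambda X$ and $B = \lambda^{-1} Y$, after which the involution condition gives $\lambda^2 = 1$. You go slightly further than the paper by writing out both directions of (1)--(2) and by giving a Kronecker-block proof of that scalar fact in finite dimensions, which the paper only cites.
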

\begin{proof}
  For \ref{item:unit-commute} and \ref{item:unit-anti-commute}, we have
  \[ (AB)^2 = ABAB = A(BA)B. \]
  In the case where they commute $AB = BA$, the above equation reduces $A^2B^2$ which is $I_{\Hilb}$ by $A,B$ being involutions.
  In the case where they anti-commute, $AB = -BA$, the above equation reduces to $-A^2B^2$ and thus $-I_{\Hilb}$.
  The converse for both cases follow from simple calculation.

  For \ref{item:tensor-factor}, recall that for general nonzero operators, $A \otimes B = X \otimes Y$ implies that
  $A = \lambda X$ and $B = \frac{1}{\lambda} Y$ for some nonzero $\lambda \in \FK$.
  Since all operators involved are involutions, $A^2 = \lambda^2 X = I$ implies $\lambda^2 = 1$.
  Thus, $\lambda = \frac{1}{\lambda} = \pm 1$.
\end{proof}

\begin{lemma}
  \label{lem:cross-to-magic}
  Suppose $\vec{x},\vec{y},\vec{z}$ are unit vectors in $\RAThree$, $d \geq 1$, and $Q = \Inv(\mathbb{C}^{4d})$ then the following are equivalent:
  \begin{enumerate}[label=(\arabic*)]
    \item \label{item:c2m-cross} $\{\vec{x},\vec{y},\vec{z}\}$ is an orthonormal basis of $\RAThree$.
    \item \label{item:c2m-commutator} $\{\vec{\sigma}(\vec{x}),\vec{\sigma}(\vec{y}),\vec{\sigma}(\vec{z})\}$ is a qubit basis.
    \item \label{item:c2m-magic} There exists solution $\alpha \in Q^{\mu(\vec{a})}$ to the magic square expressed as:
  \[
  \begin{array}{|c|c|c:}
    \hline
    \vec{\sigma}(\vec{x}) \otimes I \otimes I_d & I \otimes \vec{\sigma}(\vec{x}) \otimes I_d & \vec{\sigma}(\vec{x}) \otimes \vec{\sigma}(\vec{x}) \otimes I_d\\ \hline
    I \otimes \vec{\sigma}(\vec{z}) \otimes I_d & \vec{\sigma}(\vec{z}) \otimes I \otimes I_d & \vec{\sigma}(\vec{z}) \otimes \vec{\sigma}(\vec{z}) \otimes I_d\\ \hline
    \vec{\sigma}(\vec{x}) \otimes \vec{\sigma}(\vec{z}) \otimes I_d & \vec{\sigma}(\vec{z}) \otimes \vec{\sigma}(\vec{x}) \otimes I_d & \vec{\sigma}(\vec{y}) \otimes \vec{\sigma}(\vec{y}) \otimes I_d\\ \hline
    \end{array}
    \]
  \end{enumerate}
\end{lemma}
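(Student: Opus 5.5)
The plan is to prove the cycle \ref{item:c2m-cross} $\Rightarrow$ \ref{item:c2m-commutator} $\Rightarrow$ \ref{item:c2m-magic} $\Rightarrow$ \ref{item:c2m-cross}. Everything rests on the Pauli product identity $\vec{\sigma}(\vec{a})\vec{\sigma}(\vec{b}) = (\vec{a}\cdot\vec{b}) I_2 + \iu\,\vec{\sigma}(\vec{a}\cross\vec{b})$ for $\vec{a},\vec{b}\in\RAThree$, which I will verify once from $\sigma_x\sigma_y = \iu\sigma_z$ and the cyclic relations. Two consequences follow. First, for a unit vector $\vec{a}$ we get $\vec{\sigma}(\vec{a})^2 = I_2$, so $\vec{\sigma}(\vec{a}) \in \Inv(\FC^2)$. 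Second, $\{\vec{\sigma}(\vec{a}),\vec{\sigma}(\vec{b})\} = 2(\vec{a}\cdot\vec{b})I_2$ and $[\vec{\sigma}(\vec{a}),\vec{\sigma}(\vec{b})] = 2\iu\,\vec{\sigma}(\vec{a}\cross\vec{b})$. I will assume the orthonormal basis in \ref{item:c2m-cross} is right-handed ($\vec{x}\cross\vec{y}=\vec{z}$), since the qubit basis relations fix an orientation.

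For \ref{item:c2m-cross} $\Rightarrow$ \ref{item:c2m-commutator}, orthogonality makes every anti-commutator vanish. Right-handedness gives $\vec{x}\cross\vec{y}=\vec{z}$, $\vec{z}\cross\vec{x}=\vec{y}$ and $\vec{y}\cross\vec{z}=\vec{x}$, which are exactly the three commutator equations. For \ref{item:c2m-commutator} $\Rightarrow$ \ref{item:c2m-magic}, I will check the displayed table entry by entry. Each entry is a tensor product of involutions with $I_d$, so it is a self-adjoint involution in $\Inv(\FC^{4d})$.

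Within each row and column, the entries pairwise commute. The only nontrivial pairs are $\vec{\sigma}(\vec{x})\otimes\vec{\sigma}(\vec{z})$ against $\vec{\sigma}(\vec{z})\otimes\vec{\sigma}(\vec{x})$, and entries of the last column. In each such pair both tensor factors anti-commute, and the two signs cancel; Proposition~\ref{item:unit-commute} can be used here. So every XOR-clause is meaningful. The row and column products are then computed factorwise:
\begin{itemize}
  \item rows 1 and 2 and columns 1 and 2 give $I$ immediately from $A^2=I$;
  \item row 3 gives $\vec{\sigma}(\vec{x})\vec{\sigma}(\vec{z})\vec{\sigma}(\vec{y}) \otimes \vec{\sigma}(\vec{z})\vec{\sigma}(\vec{x})\vec{\sigma}(\vec{y})$;
  \item column 3 gives $\vec{\sigma}(\vec{x})\vec{\sigma}(\vec{z})\vec{\sigma}(\vec{y}) \otimes \vec{\sigma}(\vec{x})\vec{\sigma}(\vec{z})\vec{\sigma}(\vec{y})$.
\end{itemize}
From the qubit basis relations, $XY=\iu Z$, $ZX=\iu Y$ and $YZ=\iu X$. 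Hence $XZY = -\iu I$ and $ZXY = \iu I$. Row 3 is then $(-\iu)(\iu)I = I$ and column 3 is $(-\iu)^2 I = -I$, matching the magic square signs. Translating products of involutions back to $\xor$ through the bijection $b$ shows that $\mu$ evaluates to $1_Q$.

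For \ref{item:c2m-magic} $\Rightarrow$ \ref{item:c2m-cross}, the table being a solution forces commutation inside each line of the square. In particular $\vec{\sigma}(\vec{x})\otimes\vec{\sigma}(\vec{z})\otimes I_d$ commutes with $\vec{\sigma}(\vec{z})\otimes\vec{\sigma}(\vec{x})\otimes I_d$. It also forces the row 3 product to be $I$ and the column 3 product to be $-I$. Write $P = \vec{\sigma}(\vec{x})\vec{\sigma}(\vec{z})$, which equals $cI + \iu\vec{\sigma}(\vec{w})$ with $c=\vec{x}\cdot\vec{z}$ and $\vec{w}=\vec{x}\cross\vec{z}$. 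The commutation condition becomes $P\otimes P^{\dagger} = P^{\dagger}\otimes P$. The factorisation argument of Proposition~\ref{item:tensor-factor}, which holds for general nonzero operators, then gives $P^{\dagger} = \lambda P$. Here $P$ is unitary with eigenvalues $c \pm \iu|\vec{w}|$, so this forces either $c=0$ or $\vec{w}=0$. The case $\vec{w}=0$ means $\vec{z}=\pm\vec{x}$, and I will exclude it using the column 3 sign: if $\vec{z}=\pm\vec{x}$, then $XZ=\pm I$, and column 3 becomes $(\pm X Y)\otimes(\pm X Y) = XY\otimes XY$ with $(XY)\otimes(XY)$ commuting with its reverse. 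A short computation of row 3 versus column 3 then yields a contradiction. Hence $\vec{x}\perp\vec{z}$.

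Next, row 3 and column 3 give $Y\otimes Y$ as $(XZ)^{-1}\otimes(ZX)^{-1}$ and as $-(XZ)^{-1}\otimes(XZ)^{-1}$ respectively. Since $ZX = -XZ$ once $\vec{x}\perp\vec{z}$, both expressions agree, and $Y\otimes Y = \pm(ZX)\otimes(XZ)$. Applying Proposition~\ref{item:tensor-factor} gives $\vec{\sigma}(\vec{y}) = \pm \iu^{-1}ZX = \pm\vec{\sigma}(\vec{z}\cross\vec{x})$. Hence $\vec{y} = \pm\vec{z}\cross\vec{x}$, so $\{\vec{x},\vec{y},\vec{z}\}$ is orthonormal.

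I expect this last direction to be the main obstacle. The difficulties are extracting orthogonality from commutation of tensor products, handling the sign and scalar ambiguities from Proposition~\ref{item:tensor-factor}, and the degenerate case $\vec{z}=\pm\vec{x}$. The $I_d$ factor is harmless, since it can be cancelled by the same tensor factorisation.
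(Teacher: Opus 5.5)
Your (1)$\Rightarrow$(2) and (2)$\Rightarrow$(3) are the paper's arguments: the commutator and anti-commutator identities for $\vec{\sigma}$, then the row and column products computed factorwise from $XY=\iu Z$, $ZX=\iu Y$, $YZ=\iu X$. You are right that (1)$\Rightarrow$(2) needs $\vec{x}\cross\vec{y}=\vec{z}$, which the paper's proof passes over.

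But restricting (1) to right-handed bases does not close your cycle. Your (3)$\Rightarrow$(1) only yields $\vec{y}=\pm\vec{z}\cross\vec{x}$, and this cannot be sharpened. The table involves $\vec{y}$ only through $\vec{\sigma}(\vec{y})\otimes\vec{\sigma}(\vec{y})$, which is unchanged under $\vec{y}\mapsto-\vec{y}$. So a left-handed orthonormal basis satisfies (3) but not (2), and the statement is false under either reading of (1).

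What is true, and what your pieces prove once two easy steps are added, is this: (1)$\Leftrightarrow$(3) for all orthonormal bases, and (2)$\Leftrightarrow$[(1) and $\vec{x}\cross\vec{y}=\vec{z}$]. The two additions are:
\begin{enumerate}
  \item for a left-handed basis, obtain (3) by applying your right-handed chain to $(\vec{x},-\vec{y},\vec{z})$;
  \item derive (2)$\Rightarrow$(1) together with right-handedness directly from $\{X,Y\}=2(\vec{x}\cdot\vec{y})I$ and $[X,Y]=2\iu\,\vec{\sigma}(\vec{x}\cross\vec{y})$.
\end{enumerate}
The paper's own (3)$\Rightarrow$(2) has the same defect: its factorisation leaves $\vec{\sigma}(\vec{x})\vec{\sigma}(\vec{z})=\pm\iu\,\vec{\sigma}(\vec{y})$, and that undetermined sign is exactly the orientation. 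Proposition~\ref{prop:xsat-reduction-c4} only invokes (1)$\Leftrightarrow$(3), so the application survives.

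For the return direction your route differs from the paper's. The paper tensor-factors the column-3 product to get $XZY=\pm\iu I$ and combines this with row 3. You instead use commutation of the two row-3 entries to get $P^{\dagger}=\lambda P$ for $P=XZ$, then the Pauli product formula and an eigenvalue case split, and use the signs only to exclude $\vec{z}=\pm\vec{x}$. This is valid but longer than needed. Put $A=XZY$ and $B=ZXY$. After cancelling $I_d$, row 3 and column 3 say $A\otimes B=I$ and $A\otimes A=-I$. Hence $I\otimes BA^{-1}=(A\otimes B)(A\otimes A)^{-1}=-I$, so $ZX=-XZ$ and $\vec{x}\perp\vec{z}$ at once, with no degenerate case to handle.

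Two smaller corrections:
\begin{itemize}
  \item Your degenerate-case sentence is garbled. If $\vec{z}=\pm\vec{x}$ then $XZ=ZX=\pm I$, and both products equal $Y\otimes Y$, not $XY\otimes XY$; they cannot be $I$ and $-I$ simultaneously.
  \item Proposition~\ref{prop:involution-facts}\ref{item:tensor-factor} does not apply verbatim to $Y\otimes Y=ZX\otimes XZ$, since $ZX$ is not an involution. First rewrite the right side as $W\otimes W$ with $W=\vec{\sigma}(\vec{z}\cross\vec{x})$.
\end{itemize}
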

\begin{proof}
  For the $\ref{item:c2m-cross} \Leftrightarrow \ref{item:c2m-commutator}$ implication, consider $\vec{u} = (u_1,u_2,u_3), \vec{v} = (v_1,v_2,v_3) \in \RAThree$.
  Applying the map $\vec{\sigma}$ to these vectors we obtain:
    \[
      \vec{\sigma}(\vec{u}) =
    \begin{pmatrix}
      u_3 & u_1 - i u_2\\
      u_1 + i u_2 & -u_3
    \end{pmatrix}
    , \quad
    \vec{\sigma}(\vec{v}) =
    \begin{pmatrix}
      v_3 & v_1 - i v_2\\
      v_1 + i v_2 & -v_3
    \end{pmatrix}
    \]
Multiplying in one order yields
    \[
      \vec{\sigma}(\vec{u})\vec{\sigma}(\vec{v}) =
      \begin{pmatrix}
        u_3 v_3 + (u_1 - i u_2)(v_1 + i v_2)
        &
        u_3 (v_1 - i v_2) - (u_1 - i u_2) v_3
        \\[6pt]
        (u_1 + i u_2) v_3 - u_3 (v_1 + i v_2)
        &
        (u_1 + i u_2)(v_1 - i v_2) + u_3 v_3
      \end{pmatrix}.
    \]
and the other order yields
    \[
      \vec{\sigma}(\vec{v})\vec{\sigma}(\vec{u}) =
      \begin{pmatrix}
      v_3 u_3 + (v_1 - i v_2)(u_1 + i u_2)
      &
      v_3 (u_1 - i u_2) - (v_1 - i v_2) u_3
      \\[6pt]
      (v_1 + i v_2) u_3 - v_3 (u_1 + i u_2)
      &
      (v_1 + i v_2)(u_1 - i u_2) + v_3 u_3
      \end{pmatrix}.
    \]
allowing us to compute the commutator as:
    \begin{align}
      [\vec{\sigma}(\vec{u}), \vec{\sigma}(\vec{v})]
      &= 2i
      \begin{pmatrix}
        (u_1 v_2 - u_2 v_1) &
        (u_2 v_3 - u_3 v_2) - i(u_3 v_1 - u_1 v_3) \\[6pt]
        (u_2 v_3 - u_3 v_2) + i(u_3 v_1 - u_1 v_3) &
        -(u_1 v_2 - u_2 v_1)
      \end{pmatrix} \notag \\
      &= 2i\vec{\sigma}(\vec{u} \cross \vec{v}) \label{eq:pauli-commutator}
    \end{align}
and the anti-commutator as:
    \[
      \{\vec{\sigma}(\vec{u}), \vec{\sigma}(\vec{v})\}
      = 2
      \begin{pmatrix}
        u_3v_3 + u_1v_1 + u_2v_2 &
        0 \\[6pt]
        0 &
        u_3v_3 + u_1v_1 + u_2v_2
      \end{pmatrix}
      = 2(\vec{u} \cdot \vec{v})I
    \]
    Thus, the cross and dot product of the vectors are transferred by $\vec{\sigma}$ to the commutator and anti-commutator operations.
    From the hypotheses of \ref{item:c2m-cross} and substituting unit vectors $\vec{x},\vec{y},\vec{z}$ for $\vec{u}$ or $\vec{v}$ above, we can conclude the equations for a qubit basis are satisfied by $\{\vec{\sigma}(\vec{x}),\vec{\sigma}(\vec{y}),\vec{\sigma}(\vec{z})\}$.
    Conversely, from the fact that $\{\vec{\sigma}(\vec{x}),\vec{\sigma}(\vec{y}),\vec{\sigma}(\vec{z})\}$ is a qubit basis, we can use equation~\eqref{eq:pauli-commutator} to conclude that the vectors $\{\vec{x},\vec{y},\vec{z}\}$ are orthogonal.

    For the $\ref{item:c2m-commutator} \Rightarrow \ref{item:c2m-magic}$ implication,
    For every $X,Y \in \Bnd_0(\mathbb{C}^2)$, we have the equations
    \[ XY = \frac{1}{2}(\{X,Y\} + [X,Y]), \]
    \[ YX = \frac{1}{2}(\{X,Y\} - [X,Y]). \]
    We also have the mixed-product rule stating
    \[ (X_1 \otimes Y_1)(X_2 \otimes Y_2) = (X_1 X_2) \otimes (Y_1 Y_2). \]
    Applying these equations and the qubit basis equations for the sets $\{\vec{\sigma}(\vec{x_1}),\vec{\sigma}(\vec{y_1}),\vec{\sigma}(\vec{z})\}$ and $\{\vec{\sigma}(\vec{x_2}),\vec{\sigma}(\vec{y_2}),\vec{\sigma}(\vec{z})\}$, allow us to verify that the entries in the table reflect the magic square equations.

    For the $\ref{item:c2m-commutator} \Leftarrow \ref{item:c2m-magic}$ implication, we ignore the extra $d$ tensor factor.
    Consider the last column equation:
    \begin{align*}
      -I_2 \otimes I_2 &= (\vec{\sigma}(\vec{x_1}) \otimes \vec{\sigma}(\vec{x_2}))(\vec{\sigma}(\vec{z}) \otimes \vec{\sigma}(\vec{z}))(\vec{\sigma}(\vec{y_1}) \otimes \vec{\sigma}(\vec{y_2})) \\
                       &= \vec{\sigma}(\vec{x_1})\vec{\sigma}(\vec{z})\vec{\sigma}(\vec{y_1})\otimes  \vec{\sigma}(\vec{x_2})\vec{\sigma}(\vec{z})\vec{\sigma}(\vec{y_2})
    \end{align*}
    Thus, from Proposition~\ref{prop:involution-facts}~\ref{item:tensor-factor}, the tensor factors separate yielding equations:
    \begin{align*}
        \vec{\sigma}(\vec{x_1})\vec{\sigma}(\vec{z})\vec{\sigma}(\vec{y_1}) = \pm i I_2 \Rightarrow \vec{\sigma}(\vec{x_1})\vec{\sigma}(\vec{z}) = \pm i \vec{\sigma}(\vec{y_1})\\
        \vec{\sigma}(\vec{x_2})\vec{\sigma}(\vec{z})\vec{\sigma}(\vec{y_2}) = \pm i I_2 \Rightarrow \vec{\sigma}(\vec{z})\vec{\sigma}(\vec{x_2}) = \pm i \vec{\sigma}(\vec{y_2})
    \end{align*}
    Combining with similar equations resulting from the last row equation allows us to compute the {(anti)-commutators} $[\vec{\sigma}(\vec{x_i}),\vec{\sigma}(\vec{z})] = 2 i \vec{\sigma}(\vec{y_i})$, $\{\vec{\sigma}(\vec{x_i}),\vec{\sigma}(\vec{z})\} = 0$, $\{\vec{\sigma}(\vec{y_i}),\vec{\sigma}(\vec{z})\}$ for $i \in \{1,2\}$.
    Similar manipulations of the last row and column equations allows us to compute the other (anti)-commutator equations necessary to verify the sets $\{\vec{\sigma}(\vec{x_1}),\vec{\sigma}(\vec{y_1}),\vec{\sigma}(\vec{z})\}$ and $\{\vec{\sigma}(\vec{x_2}),\vec{\sigma}(\vec{y_2}),\vec{\sigma}(\vec{z})\}$ are qubit bases.
\end{proof}
In addition to showing we can express cross product in terms of the magic square, we also need to show that any solution to the magic square can be forced into the form of Lemma~\ref{lem:cross-to-magic}~\ref{item:c2m-magic}.
The following lemma is proved by first by showing that any solution to the magic square is unique up to changing the basis of each qubit, i.e.\ up to unitary maps applied to each qubit.
Therefore, we can diagonalise to the basis of one qubit.
This lemma was inspired by similar uniqueness results in the self-testing literature~\cite{selfTesting,unitaryUnique} though none were precisely suitable for our application.
\begin{lemma}
  \label{lem:magic-unique}
  Let $Q = \Inv(\FC^{4d})$ for $d \geq 1$.
  For every solution $\alpha \in Q^{\mu(\vec{a})}$ to the magic square, there exists a unitary $U\colon \mathbb{C}^2 \rightarrow \mathbb{C}^{2}$ and solution $\alpha_{U} \in Q^{\mu(\vec{a})}$ expressed as:
  \[
    \begin{array}{|c|c|c:}
    \hline
    V(\sigma_x \otimes I_2)V^{-1} \otimes I_d & V(I \otimes \sigma_x)V^{-1} \otimes I_d & V(\sigma_x \otimes \sigma_x)V^{-1}  \otimes I_d\\ \hline
    V(I_2 \otimes \sigma_z )V^{-1} \otimes I_d & V(\sigma_z \otimes I)V^{-1} \otimes I_d & V(\sigma_z \otimes \sigma_z)V^{-1} \otimes I_d \\ \hline
    V(\sigma_x \otimes \sigma_z)V^{-1} \otimes I_d & V(\sigma_z \otimes \sigma_x)V^{-1} \otimes I_d & V(\sigma_y \otimes \sigma_y)V^{-1} \otimes I_d \\ \hline
    \end{array}
  \]
  where $V = U \otimes U$ and $\sigma_x,\sigma_y,\sigma_z$ are the standard Pauli matrices.
\end{lemma}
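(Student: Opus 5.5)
The displayed table is a solution for every unitary $U$. This follows from conjugation invariance: the map $X \mapsto (V\otimes I_d)X(V\otimes I_d)^{-1}$ with $V = U\otimes U$ is unitary conjugation. It sends self-adjoint involutions to self-adjoint involutions, preserves commutation, and preserves products. It therefore maps the standard solution~\eqref{eq:magic-square-standard-assignment}, tensored with $I_d$, to another solution of $\mu(\vec{a})$ in $Q$.

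The statement as literally written is therefore already satisfied, for instance with $U = I_2$. The content intended for the reduction is stronger: $\alpha$ itself agrees with some $\alpha_U$ after a change of basis of $\FC^{4d}$ into $\FC^2\otimes\FC^2\otimes\FC^d$. So the bulk of the plan is a rigidity argument for the magic square.

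\textbf{Step 1: extract commutation relations.} Write $A,\dots,I$ for the involutions $\alpha(a),\dots,\alpha(i)$. Meaningful substitution of the XOR clauses forces all entries in a common row or column to pairwise commute. In the $\Inv$ picture, each row and column equation reads as a product equal to $\pm I$. From these I would derive that $A$ and $E$ anticommute, and likewise $B$ and $D$. Take $A$ and $E$: $A$ commutes with $B$, $C$, $D$, $G$, and $E$ commutes with $B$, $D$, $F$, $H$. Expressing $C = AB$ and $F = DE$ and substituting into the last-column equation $CFI = -I$, together with the third-row relation $I = GH$ and $G = AD$, $H = BE$, gives $ABDEADBE = -I$. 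Cancelling with the known commutations should then yield $(AE)^2 = -I$ and $(BD)^2 = -I$. By Proposition~\ref{prop:involution-facts}\ref{item:unit-anti-commute}, $A$ and $E$ anticommute, and likewise $B$ and $D$; the remaining cross pairs commute by \ref{item:unit-commute}.

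\textbf{Step 2: representation theory.} The involutions $A, E$ (one anticommuting pair) and $B, D$ (another), with the two pairs mutually commuting, define a representation of the two-qubit Pauli group in which the central element $-1$ acts as $-I$. Its unique irreducible representation with this central character is the $4$-dimensional one given by $\sigma_x\otimes I,\ \sigma_z\otimes I,\ I\otimes\sigma_x,\ I\otimes\sigma_z$. Hence there is a unitary $W\colon \FC^{4d}\to\FC^2\otimes\FC^2\otimes\FC^d$ carrying $A,E,B,D$ to the standard entries tensored with $I_d$. Concretely, I would pick $W$ from a joint eigenbasis of $E$ and $D$ and use $A$ and $B$ to move between eigenspaces.

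\textbf{Step 3: remaining entries.} The other entries are products of $A,B,D,E$ fixed by the equations: $C = AB$, $F = DE$, $G = AD$, $H = BE$. The last entry $I$ is pinned down by the third-row equation, and the column equation $CFI = -I$ then checks the sign, since $\sigma_x\sigma_z = -i\sigma_y$ makes $\sigma_y\otimes\sigma_y$ come out with the right sign. This recovers the standard table up to $W$, and absorbing $W$ as a change of basis gives $\alpha_U$ with $U = I_2$. Any further freedom (for diagonalising $\vec{\sigma}$ of a chosen qubit direction in Lemma~\ref{lem:cross-to-magic}) is exactly conjugation by $U\otimes U$, as in the first paragraph.

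I expect Step 1 to be the main obstacle. Only the equations and the commutations forced by meaningful substitution are available, and the sign bookkeeping must produce anticommutation and not merely $(AE)^2 = \pm I$. Step 2 is standard Stone--von Neumann/Pauli-group rigidity, which I would cite rather than reprove.
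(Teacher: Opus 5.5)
Your proposal is correct. Its rigidity part has the same skeleton as the paper's proof. The differences are in what you actually prove and how carefully the tensor decomposition is justified.

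\textbf{The literal statement.} Your opening observation is right, and the paper's proof misses it. As stated, the lemma imposes no relation between $\alpha$ and $\alpha_U$. It is therefore witnessed by conjugating the $U=I_2$ table by $(U\otimes U)\otimes I_d$. The paper instead tries to build $\alpha_U$ out of $\alpha$, and its final formula $\alpha_U(a)=I_2\otimes UV_2^{\dagger}\alpha(a)$ is not well-typed. Two caveats on the reference table you cite:
\begin{itemize}
\item Use the lemma's own table at $U=I_2$ rather than \eqref{eq:magic-square-standard-assignment}, whose cell $(2,3)$ should read $\sigma_z\otimes\sigma_z$.
\item You, like the paper, assume rows and the first two columns multiply to $+I$ and the last column to $-I$. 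With $E\mapsto I-2E$ taken literally, $1_Q=-I$ and all six signs flip. This is fixed by negating the entries at $a,e,i$ of the reference table.
\end{itemize}

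\textbf{Step 1.} This is the paper's (P1)--(P2). Your word $ABDEADBE$ is the paper's $P$, and the reduction $A(BD)E\,A(DB)E=(AE)^2(BD)(DB)=(AE)^2$ is its computation. That single word yields only $(AE)^2=-I$. To get $(BD)^2=-I$, expand the same product $CFI$ with the other orderings: $C=BA$, $F=ED$, and the entry $I=(BE)(AD)$. Then $CFI=B(AE)D\,B(EA)D=(BD)^2(AE)(EA)=(BD)^2$. This is what the paper's ``similar proof'' amounts to, namely the symmetry swapping the first two columns. It also gives $(AE)^2=(BD)^2$, which you need for a well-defined central character in Step 2.

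\textbf{Step 2.} The paper only asserts that the two anticommuting pairs live in ``different tensor factors''. It then builds $V_1\otimes V_2$ from eigenvectors $\vec{z}^{\pm}$ of $Z_1,Z_2$, even though these operators have $2d$-dimensional eigenspaces on $\FC^{4d}$. Your route fills this in properly. You use the uniqueness of the $4$-dimensional irreducible representation of the two-qubit Pauli group with $-1\mapsto -I$; concretely, take a joint eigenbasis of $E,D$ and move it around with $A,B$. This correctly produces the multiplicity space $\FC^d$.

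What your version buys is the clean statement $\alpha(x)=W^{\dagger}\alpha_{I_2}(x)W$ for a single unitary $W$ on $\FC^{4d}$. It also makes clear that the paper's factorised form $V=U\otimes U$ carries no extra information once an arbitrary change of basis of $\FC^{4d}$ is allowed.
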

\begin{proof}
  Given an arbitrary solution $\nu\colon \Var(M) \rightarrow \Inv(\mathbb{C}^{4d})$, we label the resulting operators in the table $A_{ij}$ for the cell in the $i$-th row and $j$-th table.
  Define $X_1 := A_{11}$, $Z_{1} := A_{22}$, $X_2 = A_{12}$, and $Z_{2} := A_{21}$.
  Pairing together $\{X_1,Z_1\}$ and $\{X_2,Z_2\}$, we can make the following observations
  \begin{enumerate}[label=(P\arabic*)]
    \item \label{item:pauli-commute} Every element of $\{X_1,Z_1\}$ commutes with every element of $\{X_2,Z_2\}$.
    \item \label{item:pauli-anti-commute} Within each pair the elements anti-commute, i.e.\ $X_1Z_1 = -Z_1X_1$ and $X_2Z_2 = -Z_2X_2$.
  \end{enumerate}
  Property~\ref{item:pauli-commute} is obvious because each row and column is a pairwise commuting context.
  To prove property~\ref{item:pauli-anti-commute} from the row equations we have that for all $i \in \{1,2,3\}$, $A_{i1}A_{i2}A_{i3} = I$ which implies
  \begin{equation}
    \label{eq:row-equality}
    A_{i3} = A_{i1}A_{i2}
  \end{equation}
  From the last column equation, we have that $A_{13}A_{23}A_{33} = -I$.
  Subsituting the row equation~\eqref{eq:row-equality} into this equation yields:
  \begin{equation}
    \label{eq:last-column-with3}
    (A_{11}A_{12})(A_{21}A_{22})(A_{31}A_{32}) = -I
  \end{equation}
  Subsituting the first two column equations $A_{3j} = A_{1j}A_{2j}$ into~\eqref{eq:last-column-with3} yields:
  \begin{equation}
    \label{eq:last-column-p}
    P = (A_{11}A_{12})(A_{21}A_{22})(A_{11}A_{21})(A_{12}A_{22}) = -I
  \end{equation}
  From this equation, we can conclude that $(A_{11}A_{22})^2 = -I$:
  \begin{align*}
    P &= (A_{11}A_{12})(A_{21}A_{22})(A_{11}A_{21})(A_{12}A_{22}) & \eqref{eq:last-column-p} \\
      &= (A_{11}((A_{12}A_{21})A_{22}))(A_{11}((A_{21}A_{12})A_{22})) \\
      &= (A_{11}A_{22}(A_{12}A_{21}))(A_{11}A_{22}(A_{21}A_{12})) & A_{22} \text{ commutes with both } A_{12},A_{21} \\
      &= (A_{11}A_{22})(A_{12}A_{21})(A_{11}A_{22})(A_{21}A_{12}) \\
      &= (A_{11}A_{22})[(A_{12}A_{21})(A_{11}A_{22})](A_{21}A_{12}) \\
      &= (A_{11}A_{22})(A_{11}A_{22})(A_{12}A_{21})(A_{21}A_{12}) & A_{11}A_{22} \text{ commutes with } A_{12}A_{21} \\
      &= (A_{11}A_{22})^2(A_{12}A_{21})(A_{21}A_{12}) \\
      &= (A_{11}A_{22})^2A_{12}(A_{21}A_{21})A_{12} \\
      &= (A_{11}A_{22})^2 A_{12}(A_{21})^2 A_{12} & A_{21}^2 = I \text{ involution }\\
      &= (A_{11}A_{22})^2 (A_{12})^2 & A_{12}^2 = I \text{ involution }\\
      &= (A_{11}A_{22})^2 \\
    -I &= (A_{11}A_{22})^2 & \eqref{eq:last-column-p}\\
  \end{align*}
  Thus, by Proposition~\ref{prop:involution-facts}\ref{item:unit-anti-commute}, $X_1 = A_{11}$ and $Z_1 = A_{22}$ anti-commute.
  A similar proof demonstrates that $X_2 = A_{12}$ and $Z_2 = A_{21}$ anti-commute establishing property~\ref{item:pauli-anti-commute}.
  From property~\ref{item:pauli-anti-commute}, we can conclude for $j \in \{1,2\}$, the involutions $\{X_j,Z_j,Y_j\}$ where $Y_j := \iu X_j Z_j$ form a qubit basis.
  By property~\ref{item:pauli-commute}, the copies $\langle X_1,Z_1,Y_1 \rangle$ and $\langle X_1,Z_1,Y_1 \rangle$ of the single qubit algebra are represented in different tensor factors over $\mathbb{C}^{4d}$.
  Thus, up to a permutation of tensor factors, we can construct a unitary $V  = V_1 \otimes V_2$ which maps these generators to the standard generators for these two-qubits.
  For the case of $V_1$, we note that the involution $Z_1$ has eigenvectors $\vec{z^+}$ and $\vec{z^-}$ which correspond to the eigenvalues $+1$ and $-1$, respectively.
  The unitary map $V_1\colon \mathbb{C}^2 \rightarrow \mathbb{C}^{2}$ arises from the $2 \times 2$ unitary matrix $M_1$ which maps $\vec{z_+}$ and $\vec{z_-}$ to the standard basis vectors $e_1$ and $e_2$ of $\mathbb{C}^2$.
  We can similarly define the unitary map $V_2$ by considering the eigenvectors of the involution of $Z_2$ to produce matrix $M_2$.
  By construction, for $i \in \{1,2\}$, $V_i$ maps the the eigenvectors of $Z_i$ to $\sigma_z$ via conjugation.
  From this, we obtain that the solution $\alpha$ to $M$ is the table:
  \[
  \begin{array}{|c|c|c|c|}
    \hline
    W(\sigma_x \otimes I_2)W^{-1} \otimes I_d & W(I \otimes \sigma_x)W^{-1} \otimes I_d & W(\sigma_x \otimes \sigma_x)W^{-1}  \otimes I_d\\ \hline
    W(I_2 \otimes \sigma_z )W^{-1} \otimes I_d & W(\sigma_z \otimes I)W^{-1} \otimes I_d & W(\sigma_z \otimes \sigma_z)W^{-1} \otimes I_d \\ \hline
    W(\sigma_x \otimes \sigma_z)W^{-1} \otimes I_d & W(\sigma_z \otimes \sigma_x)W^{-1} \otimes I_d & W(\sigma_y \otimes \sigma_y)W^{-1} \otimes I_d \\ \hline
    \end{array}
  \]
  where $W = V_{1} \otimes V_{2}$.
  Define $U = V_{1}$ and for every $a \in [\vec{a}]$, $\alpha_{U}(a) = I_{2} \otimes UV^{\dagger}_{2}\alpha(a)$ and we get the desired assignment as expressed in the table of the statement.
\end{proof}

Let $t$ be a cross-product term.
For every non-variable subterm $s = r \cross q$, a satisfying assignment of $t$ yields two orthonormal bases.
One basis $B_{s,r}$ containing the pair $\langle s \rangle$,$\langle r \rangle$, and the other containing the pair $\langle s \rangle$,$\langle q \rangle$.
For each such basis $B_{\rho}$ associated to the pair $\rho = (s,r)$, we create a copy $M_{\rho}$ of the magic square:
\begin{equation}
    \label{eq:magic-copy-variable}
    \begin{array}{|c|c|c:}
      \hline
        p_{r,1} & p_{1,r}  & p_{r,r} \\ \hline
        p_{1,s} & p_{s,1} & p_{s,s} \\ \hline
        p_{r,s} & p_{s,r} & p_{r',r'} \\ \hline
    \end{array}
\end{equation}
The indexing of the variables in the copy of $M_{\rho}$ suggest the intended interpetation.
That is, if $t$ is satisfied such that the subterm $s$ resolves to $\mathbb{R}\langle s \rangle = \mathbb{R}\langle r \rangle \cross \mathbb{R}\langle q \rangle$, then e.g.\ $p_{r,s}$ assigned to $\vec{\sigma}(\langle r \rangle) \otimes \vec{\sigma}(\langle s \rangle)$, $p_{s,s}$ is assigned to $\vec{\sigma}(\langle s \rangle) \otimes \vec{\sigma}(\langle s \rangle)$, etc.
Let $\mu_{\rho}(\vec{p_{\rho}})$ be the conjunction of the XOR formulas expressing the equations in $M_{\rho}$ where $[\vec{p_{\rho}}]$ is the set of variables appearing in the table~\eqref{eq:magic-copy-variable}.
Define $\vartheta_{t}(\vec{p})$ to the conjunction of the formulas $\mu_{\rho}(\vec{p_{\rho}})$ over the set of all pairs $\rho = (s,r)$ where $s$ is a non-variable subterm of $t$ and $r$ is an immediate child of $s$.
Since the term satisfiablity problem decides if $\langle t(\mathbb{R}\vec{x_1},\dots,\mathbb{R}\vec{x_1}) \rangle = \mathbb{R}\vec{x_1}$ we identify variables $p_{x_1,1}$, $p_{x_1,x_1}$, $p_{1,x_1}$ with the variables $p_{t,1}$, $p_{t,t}$, $p_{1,t}$, respectively.
\begin{proposition}
\label{prop:xsat-reduction-c4}
Let $g \geq 1$. The following are equivalent:
\begin{enumerate}
  \item $t(x_1,\dots,x_n) \in \XSAT(\RPTwo)$.
  \item $\vartheta_{t}(\vec{p})$ is  in $\strongSAT{\Inv(\mathbb{C}^{4g}})$.
\end{enumerate}
\end{proposition}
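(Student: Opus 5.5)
The plan is to prove the two directions separately. The forward direction is a direct construction using Lemma~\ref{lem:cross-to-magic}. The backward direction extracts real lines from an arbitrary solution using Lemma~\ref{lem:magic-unique} and the Pauli isomorphism $\vec{\sigma}$.

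For $(1)\Rightarrow(2)$, suppose $t$ is satisfied by lines $\mathbb{R}\vec{x_1},\dots,\mathbb{R}\vec{x_n}$. For every subterm $a$ of $t$, fix once and for all a unit representative $\vec{v}_a$ of $\langle a\rangle$. Use $\vec{v}_{x_1}$ for the root $t$, which is consistent with the identification of $p_{t,\cdot}$ with $p_{x_1,\cdot}$.

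For each pair $\rho=(s,r)$, the vectors $\vec{v}_s,\vec{v}_r$ are orthogonal because $\langle s\rangle$ is a cross product involving $\langle r\rangle$. Complete them by $\vec{y}_\rho=\pm\vec{v}_r\cross\vec{v}_s$, with the sign chosen so that Lemma~\ref{lem:cross-to-magic} applies with $\vec{x}=\vec{v}_r$ and $\vec{z}=\vec{v}_s$. Then assign
\[ p_{a,1}\mapsto\vec{\sigma}(\vec{v}_a)\otimes I\otimes I_g,\qquad p_{1,a}\mapsto I\otimes\vec{\sigma}(\vec{v}_a)\otimes I_g,\qquad p_{a,b}\mapsto\vec{\sigma}(\vec{v}_a)\otimes\vec{\sigma}(\vec{v}_b)\otimes I_g, \]
and $p_{r',r'}\mapsto\vec{\sigma}(\vec{y}_\rho)\otimes\vec{\sigma}(\vec{y}_\rho)\otimes I_g$.

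Variables shared between different copies $M_\rho$ are indexed by subterms only. Hence the assignment is globally well defined, which is the reason for fixing representatives per subterm. By Lemma~\ref{lem:cross-to-magic}, each $\mu_\rho$ is strongly satisfied. Since every XOR clause is evaluated inside a single commuting row or column, the conjunction $\vartheta_t$ is meaningfully and strongly satisfied in $\Inv(\mathbb{C}^{4g})$.

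For $(2)\Rightarrow(1)$, take a solution and write $X_a$ for the involution assigned to $p_{a,1}$ and $Y_a$ for the one assigned to $p_{1,a}$. From the argument in the proof of Lemma~\ref{lem:magic-unique}, applied to each copy $M_{(s,r)}$, three relations hold:
\begin{itemize}
\item $X_r$ anticommutes with $X_s$;
\item $Y_r$ anticommutes with $Y_s$;
\item each $X$ commutes with each $Y$ in the same copy.
\end{itemize}
The next step is to fix a single two-qubit frame. Apply Lemma~\ref{lem:magic-unique} to the copy containing the root, and conjugate the whole assignment by the resulting unitary.

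The goal is then to show, by induction over the DAG $D_t$ (which is connected through shared subterms), that every $X_a$ has the form $A_a\otimes I\otimes I_g$ with $A_a\in\Inv(\mathbb{C}^2)$ traceless. The intended mechanism is the following. In copy $(s,r)$, the third column and row force $X_rX_s=\pm\iu\,(\cdot)$, which by Proposition~\ref{prop:involution-facts}\ref{item:tensor-factor} pins $X_s$ into the same tensor factor as $X_r$. Any traceless Hermitian $2\times2$ involution equals $\vec{\sigma}(\vec{v})$ for a real unit vector $\vec{v}$, so we can set $\mathbb{R}\vec{v}_a:=\vec{\sigma}^{-1}(A_a)$.

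Anticommutation of $A_r$ and $A_s$ gives $\vec{v}_r\cdot\vec{v}_s=0$ via the anticommutator identity in the proof of Lemma~\ref{lem:cross-to-magic}. Hence, for $s=r\cross q$, the line $\mathbb{R}\vec{v}_s$ is orthogonal to both $\vec{v}_r$ and $\vec{v}_q$. Provided $\mathbb{R}\vec{v}_r\neq\mathbb{R}\vec{v}_q$, this forces $\mathbb{R}\vec{v}_s=\mathbb{R}\vec{v}_r\cross\mathbb{R}\vec{v}_q$.

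I expect two points to need care.
\begin{itemize}
\item Distinctness of $\mathbb{R}\vec{v}_r$ and $\mathbb{R}\vec{v}_q$, which is needed for $\cross$ to be defined. If it fails, I would try to use the degrees of freedom of the orthogonal direction to perturb, or argue directly from the two copies $M_{(s,r)}$ and $M_{(s,q)}$.
\item The identification of the root with $x_1$, which yields $\langle t\rangle=\mathbb{R}\vec{v}_{x_1}$ and so a witness for $t\in\XSAT(\RPTwo)$.
\end{itemize}

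The main obstacle is the frame-gluing step. Lemma~\ref{lem:magic-unique} normalises each copy only up to its own unitary $U\otimes U$, so it must be shown that the single-qubit factors of different copies coincide, i.e.\ that all $X_a$ live in one common $\mathbb{C}^2$ tensor factor. My first attempt would be to propagate the decomposition along edges of $D_t$. The algebra generated by an anticommuting pair $X_r,X_s$ is a full $M_2(\mathbb{C})$ factor, and its commutant contains the $Y$'s. Two copies sharing $X_s$ and $Y_s$ should then induce compatible factorisations, with the residual freedom being a unitary on the first factor that commutes with $A_s$. That freedom only rotates the vectors, and I plan to show it can be absorbed consistently, because it acts as a global rotation of $\mathbb{R}^3$ and cross products are rotation-equivariant.
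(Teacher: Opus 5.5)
Your forward direction is correct and is the paper's argument. Fixing one unit representative per subterm is exactly what makes the shared variables consistent, and either sign of $\vec y_\rho$ works. Your backward direction also follows the paper's outline (Lemma~\ref{lem:magic-unique} copy by copy, then Lemma~\ref{lem:cross-to-magic}). The two points you flag are precisely where the paper's proof is silent. However, neither of your proposed repairs can work, and in fact the statement as written is false.

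\textbf{Frame gluing.} $\vartheta_t$ encodes only the relations you list: for each parent--child pair $(s,r)$, $\{X_s,X_r\}=\{Y_s,Y_r\}=0$ and $[X_s,Y_r]=[X_r,Y_s]=[X_s,Y_s]=[X_r,Y_r]=0$. Conversely, any involutions with these relations solve every copy: fill $M_{(s,r)}$ row by row with $X_r,Y_r,X_rY_r$; $Y_s,X_s,Y_sX_s$; $X_rY_s,Y_rX_s,X_rY_sY_rX_s$, which has the same signs as the standard Pauli table. These relations say nothing about siblings and do not fix a tensor factorisation. Two copies $M_{(s,r)},M_{(s,q)}$ share only the commuting pair $X_s,Y_s$, so their frames can differ by an entangling unitary, not merely by a unitary on the first factor commuting with $A_s$. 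Already for $g=1$, take $X_s=\sigma_z\otimes I$, $Y_s=I\otimes\sigma_z$, $X_r=\sigma_x\otimes I$, $Y_r=I\otimes\sigma_x$, $X_q=\sigma_x\otimes\sigma_z$, $Y_q=\sigma_z\otimes\sigma_x$; the two frames are related by controlled-$Z$. All relations hold, yet $X_q$ is not of the form $A\otimes I$, so your inductive claim fails. For larger $g$ even orthogonality cannot be extracted. Take $t=((x_1\cross x_2)\cross x_2)\cross x_2$ with $v=x_1\cross x_2$ and $u=v\cross x_2$. After identifying the root with $x_1$, the parent--child graph is $K_4$ on $x_1,x_2,v,u$. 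Whenever $\langle t\rangle$ is defined it equals $\langle v\rangle\perp\mathbb{R}\vec{x}_1$, so $t\notin\XSAT(\RPTwo)$. Now take four pairwise anticommuting Hermitian involutions $\gamma_a$ on $\FC^4$, e.g.\ $\sigma_x\otimes I,\sigma_y\otimes I,\sigma_z\otimes\sigma_x,\sigma_z\otimes\sigma_y$. Then $X_a=\gamma_a\otimes I_4$, $Y_a=I_4\otimes\gamma_a$ (with $X_t=X_{x_1}$, $Y_t=Y_{x_1}$) solves $\vartheta_t$ in $\Inv(\FC^{16})$, i.e.\ for $g=4$.

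\textbf{Distinctness.} Non-parallelism of siblings is not encoded at all, so neither perturbation nor comparing $M_{(s,r)}$ with $M_{(s,q)}$ can produce it. Take $t=(x_1\cross x_2)\cross(x_2\cross x_1)$. The two arguments of the outer $\cross$ always denote the same line, so $\langle t\rangle$ is never defined and $t\notin\XSAT(\RPTwo)$. Yet, with $e_1,e_2,e_3$ the standard basis of $\mathbb{R}^3$, set $x_1\mapsto e_1$, $x_2\mapsto e_2$, both inner subterms $\mapsto e_3$, and the root $\mapsto e_1$. This satisfies every parent--child orthogonality. Your own forward construction uses only parent--child orthogonality, so it then strongly satisfies $\vartheta_t$ in $\Inv(\FC^{4g})$ for every $g\geq 1$.

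So the equivalence fails for every $g$ because of the second example, and for $g=4$ even setting definedness aside. The paper's backward direction has the same two holes, unacknowledged. A correct reduction needs an extra ingredient: a mechanism forcing non-parallel siblings, plus an argument that genuinely depends on the dimension, as the $K_4$ example shows. It cannot come from completing your gluing step.
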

\begin{proof}
    Throughout the proof let $Q = \Inv(\mathbb{C}^{4d})$.
    $\Rightarrow$ Suppose $t(x_1,\dots,x_n)$ is satisfied in $\RPTwo$, such that a subterm $s$ of $t$ resolves to $\mathbb{R}\vec{s}$.
    In particular, if $s = r \cross q$, then we know there exist lines such that $\mathbb{R}\vec{s} = \mathbb{R}\vec{r} \cross \mathbb{R}\vec{q}$ in $\RPTwo$.
    We can then express the lines orthogonal to $\vec{r},\vec{s}$ and $\vec{q},\vec{s}$ as $\mathbb{R}\vec{r'} = \mathbb{R}\vec{s} \cross \mathbb{R}\vec{r}$ and $\mathbb{R}\vec{q'} = \mathbb{R}\vec{s} \cross \mathbb{R}\vec{q}$, respectively.
    Hence, by Lemma~\ref{lem:cross-to-magic}, we obtain a satisfying assigment of the equations $M_{\rho}$ for the pair $\rho = (s,r)$ which we render as this table:
    \[
    \begin{array}{|c|c|c:}
      \hline
      \vec{\sigma}(\vec{r}) \otimes I \otimes I_d & I \otimes \vec{\sigma}(\vec{r}) \otimes I_d & \vec{\sigma}(\vec{r}) \otimes \vec{\sigma}(\vec{r}) \otimes I_d\\ \hline
      I \otimes \vec{\sigma}(\vec{s}) \otimes I_d & \vec{\sigma}(\vec{s}) \otimes I \otimes I_d & \vec{\sigma}(\vec{s}) \otimes \vec{\sigma}(\vec{s}) \otimes I_d\\ \hline
    \vec{\sigma}(\vec{r}) \otimes \vec{\sigma}(\vec{s}) \otimes I_d & \vec{\sigma}(\vec{s}) \otimes \vec{\sigma}(\vec{r}) \otimes I_d & \vec{\sigma}(\vec{r'}) \otimes \vec{\sigma}(\vec{r'}) \otimes I_d\\ \hline
    \end{array}
    \]
     and similarly for the pair $\rho' = (s,q)$.
     This yields a meaningful substituion $\alpha_{\rho} \in A^{\mu_{\rho}(\vec{p_{\rho}})}$ such that $\mu_{\rho}^{Q}(\alpha_{\rho}) = 1_{Q}$.
     Observe that if $M_{\rho}$ and $M_{\rho'}$ share a variable $p_z \in [\vec{p_{\rho}}] \cup [\vec{p_{\rho'}}]$, then by construction $\alpha_{\rho}(p_z) = \alpha_{\rho'}(p_z)$.
    Thus, we can obtain a global meaningful subsitution $\alpha \in A^{\vartheta_t(\vec{p})}$ such that $\vartheta_t^{Q}(\alpha) = 1_{Q}$.

    $\Leftarrow$ Conversely, suppose $\vartheta_{t}(\vec{p})$ is strongly satisfied in $Q = \Inv(\mathbb{C}^4)$ via the meaningful substituion $\alpha \in Q^{\vartheta_t(\vec{p})}$.
    In particular, for every subterm $s = r \cross q$ of $t$, we obtain a meaningful substitution $\alpha_{\rho} \in Q^{\mu_{\rho}(\vec{p_{\rho}})}$ for the pair $\rho = (s,r)$ by restriction of $\alpha$.
    Thus, we know the copy $M_\rho$ is satisfied by the involutions in the image of $\alpha_{\rho}$ and by Lemma~\ref{lem:magic-unique}, there is another assignment $\alpha_{\rho,U}$ that is related to the standard assignment up to a unitary $U$.
    Therefore, by Lemma~\ref{lem:cross-to-magic}, we can conclude there exist orthogonal lines $\mathbb{R}\vec{s},\mathbb{R}\vec{r},\mathbb{R}\vec{r'}$ in $\RPTwo$.
    Similarly for the pair $\rho' = (s,q)$, we obtain orthogonal lines $\mathbb{R}\vec{s},\mathbb{R}\vec{q},\mathbb{R}\vec{q'}$.
    Hence, for all subterms $s$ of $t$ we can construct witnesses which satisfy the requisite orthogonality relationships.
    Therefore, $t$ is satisfied in $\RPTwo$.
\end{proof}

\begin{theorem}
\label{thm:fixed-d-complex-sat}
For every $d \geq 4$, $\strongSAT{\Pj(\mathbb{C}^d)}$ is $\ER$-complete.
\end{theorem}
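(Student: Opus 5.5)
Membership in $\ER$ is immediate from Proposition~\ref{prop:complexdsat-er}, so the work is $\ER$-hardness. I would follow the shape of the proof of Theorem~\ref{thm:fixed-d-real-sat}: first settle the base case $d = 4$ by a reduction from $\XSAT(\RPTwo)$, and then handle every $d > 4$ with the padding of Proposition~\ref{prop:dimension-padding}. That proposition is stated for a general field $\FK$, so applying it inductively gives polynomial (indeed linear-size) reductions from $\strongSAT{\Pj(\FC^d)}$ to $\strongSAT{\Pj(\FC^{d+1})}$.

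For the base case I would reuse Proposition~\ref{prop:xsat-reduction-c4} with $g = 1$. It shows that $t \in \XSAT(\RPTwo)$ if and only if $\vartheta_t(\vec{p})$ is strongly satisfiable in $\Inv(\FC^4)$.

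It remains to move from $\Inv(\FC^4)$ to $\Pj(\FC^4)$. The map $b(E) = I - 2E$ is a bijection $\Pj(\FC^4) \to \Inv(\FC^4)$ that preserves commutation, so it induces a pBA isomorphism once $\Inv(\FC^4)$ carries the transported operations. Under this isomorphism, XOR on involutions, namely matrix multiplication on commuting involutions, corresponds to the propositional $\xor$ computed in $\Pj$ on commuting projectors.

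Concretely, $\vartheta_t$ is a conjunction of XOR-clauses $a \xor b \xor c$ and $\neg(c \xor f \xor i)$, with $\xor$ abbreviating $(x \wedge \neg y) \vee (\neg x \wedge y)$. I would check two things about this translation:
\begin{itemize}
\item The meaningful domain of each clause in $\Pj(\FC^4)$ forces exactly pairwise commutation of the three variables. This matches the commuting contexts used in Lemma~\ref{lem:magic-unique}, property~\ref{item:pauli-commute}.
\item A clause evaluating to $1_Q$ in $\Pj$ corresponds to the product of the corresponding involutions being $\pm I$, with the appropriate sign convention. Here $b$ sends $0 \mapsto I$ and $1 \mapsto -I$, and this is where the sign of the last column enters.
\end{itemize}
By Proposition~\ref{prop:morphism-and-substitution} applied to the isomorphism and its inverse, $\vartheta_t \in \strongSAT{\Inv(\FC^4)}$ if and only if $\vartheta_t \in \strongSAT{\Pj(\FC^4)}$. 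The size of $\vartheta_t$ is linear in $|t|$, since there are two magic squares per non-variable subterm, each of constant size. Hence this is a polynomial-time reduction from $\XSAT(\RPTwo)$ to $\strongSAT{\Pj(\FC^4)}$.

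I expect the main obstacle to be this sign and commeasurability bookkeeping in the passage from $\Inv$ to $\Pj$. A clause like $a \xor b \xor c$ written with binary connectives only demands commutation of the intermediate terms, for example of $a \xor b$ with $c$, and not of every pair. If needed, I would conjoin the commutation-forcing tautologies $\psi_{\CNF}$ from Theorem~\ref{thm:cnf-csat-np-hard} on each row and column so that every context is genuinely pairwise commeasurable, as Lemma~\ref{lem:magic-unique} requires. I would also fix the convention for $b$ so that the row equations $=I$ correspond to positive clauses and the column equation $=-I$ corresponds to the negated clause; choosing the other convention would flip them and require negating the other clauses.
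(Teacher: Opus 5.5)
Your outline is the paper's proof: membership from Proposition~\ref{prop:complexdsat-er}, the base case $d=4$ from Proposition~\ref{prop:xsat-reduction-c4}, then iterating Proposition~\ref{prop:dimension-padding}. The gap is in that last step, which you and the paper both use as a black box: the padding does not work.

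Proposition~\ref{prop:basis-capture} describes solutions of $\basis_{d+1}$ only by \emph{non-zero} projectors, and nothing in $\hat{\phi}_d$ forces the $q_i$ to be non-zero. Assign $q_{d+1}\mapsto I$, $q_1,\dots,q_d\mapsto 0$ and every $p_i\mapsto 0$. Then \ref{item:drealise-basis} and \ref{item:drealise-subspace} evaluate to $I$. Conjunct \ref{item:drealise-onto} becomes $\phi(0,\dots,0)\bimpl 0$, which is $I$ whenever $\phi$ is classically false at the all-false assignment. All values lie in $\{0,I\}$, so commeasurability is automatic. Every positive clause $a\xor b\xor c$ of $\vartheta_t$ vanishes at $\vec{0}$. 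Hence the padding of $\vartheta_t$ is strongly satisfied in $\Pj(\FC^5)$, indeed classically satisfiable, for \emph{every} term $t$, so the step $4\to 5$ outputs only yes-instances. More simply, $p\wedge\neg p$ holds in no non-trivial pBA, yet its padding is satisfiable. The same assignment also defeats the padding of $\theta_t$ in the real case, whose clique disjunctions vanish at $\vec{0}$. Letting the rank of the projector assigned to $q_{d+1}$ range over $0,\dots,d+1$ shows that $\hat{\phi}_d$ does not pin down the dimension on which $\phi$ is realised at all.

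What is missing is a gadget forcing $q_{d+1}$ to be rank-one in $\FC^{d+1}$; the rank-parity trick of Proposition~\ref{prop:peres-choice} is specific to dimension $3$. Even then, \ref{item:drealise-onto} is correct only for formulas false at $\vec{0}$, since on the complementary line $\phi$ takes its classical value at $\vec{0}$; this is harmless for $\vartheta_t$. As it stands, and granting Proposition~\ref{prop:xsat-reduction-c4}, your argument proves hardness for $d=4$. Applying that proposition directly with $g=d/4$ also gives every $d$ divisible by $4$. It does not cover the remaining $d\geq 5$.

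Your explicit transfer from $\Inv(\FC^4)$ to $\Pj(\FC^4)$ is sound and fills a step the paper leaves implicit, but two points need correcting.
\begin{itemize}
\item The meaningful domain of $(a\xor b)\xor c$ alone does not force pairwise commutation: if $a=b$ then $a\xor b=0$ commutes with every $c$. Strong satisfaction does force it. $(a\xor b)\xor c=I$ gives $c=\ocomp{a\xor b}$, and $\neg(c\xor f\xor i)=I$ gives $i=c\xor f$. So in each clause one variable lies in the Boolean subalgebra generated by the other two, and no extra commutation-forcing conjuncts are needed.
\item For the signs, $b(E)=I-2E$ satisfies $b(E\xor F)=b(E)b(F)$ and $b(1)=-I$. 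With this $b$, $\mu$ encodes the sign-flipped square: rows and first two columns $=-I$, last column $=I$. You can use $E\mapsto 2E-I$ instead, under which a three-fold XOR is exactly the product and $1\mapsto I$. Alternatively, negating the three diagonal entries is a bijection between the solutions of the two squares.
\end{itemize}
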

\begin{proof}
  Proposition~\ref{prop:complexdsat-er} proves membership in $\ER$.
  For $\ER$-hardness of the base case $d = 3$, Proposition~\ref{prop:xsat-reduction-c4} yields reductions from the $\ER$-complete problem $\XSAT(\RPTwo)$ to $\strongSAT{\Pj(\FC^4)}$.
  For $\ER$-hardness beyond the base case, we inductively apply Proposition~\ref{prop:dimension-padding} to obtain a reduction from $\strongSAT{\Pj(\mathbb{C}^d)}$ to $\strongSAT{\Pj(\mathbb{C}^{d+1})}$.
\end{proof}
Another consequence of Proposition~\ref{prop:xsat-reduction-c4}, is that we obtain a reduction from $\XSAT(\RPTwo)$ to a restricted version of the satisfiability problem to conjunctions of $\xor$-clauses, i.e.\ XOR-formulas.
The XOR fragment of propositional logic is sometimes called the linear fragment as classically it encodes systems of linear equations over $\FZ_2$.
For a class of pBAs $\Cls$, let $\strongXOR{\Cls}$ denote the decision problem which takes an input an XOR-formula $\varphi(\vec{p})$ and decides if $\varphi(\vec{p})$ is strongly satisfied in some $A \in \Cls$.
Observe that since the formula $\vartheta_t(\vec{p})$ constructed in Proposition~\ref{prop:xsat-reduction-c4} is an XOR-formula, we obtain the following $\ER$-completeness result.

\begin{theorem}
 \label{thm:fixed-d-complex-xorsat}
  For every $g \geq 1$, $\strongXOR{\Pj(\FC^{4g})}$ is $\ER$-complete.
\end{theorem}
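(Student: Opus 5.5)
Membership in $\ER$ and $\ER$-hardness are handled separately.

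\textbf{Membership.} The plan is to reuse Proposition~\ref{prop:complexdsat-er} verbatim. An XOR-formula is in particular a propositional formula: each clause $p_1 \xor \dots \xor p_k$ abbreviates a formula in $\neg,\vee,\wedge$. Expanding $\xor$ via $(a\wedge\neg b)\vee(\neg a\wedge b)$ blows up exponentially when nested, so I would avoid it. Instead I would introduce fresh intermediate matrix variables for each partial sum $p_1\xor\dots\xor p_j$, exactly as the $Z_\psi$ variables are introduced in the proof of Proposition~\ref{prop:realdsat-er}. Each binary $\xor$ step is then a constant-size system of matrix equations with commutator side conditions. The resulting sentence is in $\ECconj$ of linear size, hence in $\ER$ by Proposition~\ref{prop:ec-to-er}.

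\textbf{Hardness.} The plan is to reduce $\XSAT(\RPTwo)$ via $t\mapsto\vartheta_t(\vec{p})$, which is by construction a conjunction of XOR-clauses (the magic-square row and column clauses, one negated) and has size linear in $t$. Proposition~\ref{prop:xsat-reduction-c4} states that $t\in\XSAT(\RPTwo)$ iff $\vartheta_t$ is strongly satisfied in $\Inv(\FC^{4g})$. The remaining step is to transfer this to $\Pj(\FC^{4g})$ along the bijection $b(E)=I-2E$. Since $b$ preserves commutation, it carries meaningful substitutions to meaningful substitutions and sends $1_{\Pj}=I$ to $-I$. I would check that the $\Inv$ operations are by definition transported along $b$, so that $b$ is a $\pBA$ isomorphism up to the sign convention on units. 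Under this identification, XOR of commuting projectors $E\xor F=E+F-2EF$ corresponds to the product of the involutions $(I-2E)(I-2F)$. Strong satisfaction of an XOR-formula in $\Pj(\FC^{4g})$ (value $I$) therefore corresponds to the magic-square products equal to the appropriate $\pm I$ in $\Inv(\FC^{4g})$. Because the convention flips the sign of the unit, a positive clause of three variables multiplying to $+I$ in the involution picture corresponds to a parity condition in the projector picture. The translation of each clause, positive or negated, may thus need a sign adjustment. I would fix this by re-reading which clauses $\mu$ negates, so that the encoding in Proposition~\ref{prop:xsat-reduction-c4} lines up with projector semantics; if needed, I would replace $\mu$ by its appropriately negated variant. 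Since each magic-square constraint is a single XOR-clause, this amounts to toggling negations and remains an XOR-formula.

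\textbf{Main obstacle.} I expect this sign and convention bookkeeping to be the real difficulty. There are two concerns. First, the translation must not introduce extra commeasurability requirements beyond those the magic-square rows and columns already impose. Each row and column is a commuting context, so the intermediate $\xor$ terms are defined. Second, the dimension must be exactly $4g$. This is why the statement is restricted to multiples of $4$, with no padding via Proposition~\ref{prop:dimension-padding}: padding would add non-XOR clauses, since $\basis_{d+1}$ and $\phi_\leq$ are not XOR-formulas.
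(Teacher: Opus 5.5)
Your proposal follows the paper's own argument: membership comes from Proposition~\ref{prop:complexdsat-er}, where your fresh-variable handling of $\xor$ is a reasonable refinement, and hardness comes from the observation that $\vartheta_t$ in Proposition~\ref{prop:xsat-reduction-c4} is already an XOR-formula. The sign bookkeeping you flag as the main obstacle is not a real obstacle, and $\mu$ needs no modification, for two reasons: $\Inv(\FC^{d})$ is defined by transporting the structure of $\Pj(\FC^{d})$ along $b$, so $b$ is a genuine pBA isomorphism with $1_{\Inv}=-I$ and strong satisfiability transfers verbatim; and if you prefer the convention $E\mapsto 2E-I=-b(E)$, this negates every variable, which exactly compensates the flipped unit because each magic-square clause has three variables.
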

Theorem~\ref{thm:fixed-d-complex-xorsat} exhibits an interesting contrast to the classical case.
Classically, $\bothXOR{\Total}$ is in $\mathbf{PTIME}$ and $\SAT{\Total}$ is $\NP$-complete.
By contrast, for the case of dimensions $d$ divisible by $4$, both $\strongXOR{\Pj(\FC^d)}$ and $\strongSAT{\Pj(\FC^d)}$ are both $\ER$-complete.
\begin{remark}
  An inspection of the proofs in~\cite{quantumIsoUndecidable} reveals a dimension preserving reduction from quantum isomorphism of graphs to $\strongXOR{\Pj(\FC^d)}$.
  Thus, a corollary of Theorem~\ref{thm:fixed-d-complex-xorsat} is that for every $d$ divisible by $4$, deciding if two input graphs have a quantum isomorphism witnessed by $d$-dimensional projectors is $\ER$-complete
\end{remark}
\section{Quantum Homomorphisms}
\label{sec:quantum-sat}
We begin this section by introducing a notion of quantum homomorphism between relational structures from~\cite{quantumMonad}.
Quantum homomorphisms correspond to operator solutions of constraint satisfaction problems from~\cite{quantumSchaefer} analogous to the correspondence between ordinary solutions and ordinary homomorphisms between relational structures.
With this notion, we consider the corresponding decision problem $\QHOM(\Hilb)$ and exhibit reductions to/from the satisfiability problem $\SAT{\Pj(\Hilb)}$.
Since reductions preserve the Hilbert space and thus the dimension, we are able to obtain the following corollaries:
\begin{itemize}
  \item $\SAT{\Pj(\FK^{< \omega})}$ is undecidable, where $\Pj(\FK^{< \omega}) = \{\Pj(\FK^{d}) \mid d \in \mathbb{N}\}$ is the class of pBAs arising from finite-dimensional $\FK$-Hilbert spaces.
  \item $\SAT{\Pj(\FK^{\infty})}$ is undecidable,  where $\Pj(\FK^{\infty})$ is the class of pBAs arising from all $\FK$-Hilbert spaces.
  \item For every fixed $d \geq b(\FK)$, $\QHOM(\FK^d)$ is $\ER$-complete where $b(\FR) = 3$ and $b(\FC) = 4$.
\end{itemize}
In the following, we say that a \emph{signature} $\sg$ is a finite set of relationl symbols $R \in \sg$ which each have an associated positive arity $r > 0$.
Given a signature $\sg$, a \emph{$\sg$-structure} $\Ms$ is a set $M$ paired with interpretations $R^{\Ms} \subseteq A^{r}$ for every $r$-ary relational symbol $R \in \sg$.
Given two $\sg$-structures, a \emph{$\sg$-homomorphism} $f\colon \Ms \rightarrow \Ns$ is a set function $M \rightarrow N$ which preserves the interpretations of relations.
The following definition from~\cite{quantumMonad} generalises the notion of $\sg$-homomorphism to the quantum setting.
\begin{definition}
\label{def:ddim-quantum-homomorphism}
Let $\Hilb$ be a real or complex Hilbert space.
A \emph{$\Hilb$-quantum $\sigma$-homomorphism} between two $\sg$-structures $\Ms$ and $\Ns$, denoted $\mathbf{F}\colon \Ms \xrightarrow{\Hilb} \Ns$, is family of projectors $\mathbf{F} = \{F_{m,n}\}_{m \in M,n \in N} \subseteq \Pj(\Hilb)$ satisfying the following conditions:
\begin{enumerate}[label=(QH\arabic*)]
    \item \label{item:normalised} For all $m \in M$, $\sum_{n \in N} F_{m,n} = I_{\Hilb}$.
    \item \label{item:well-defined} If $m,m'$ appear in some relational tuple of $\Ms$ and $n,n' \in \Ns$, then $F_{m,n} \odot_{\Pj(\Hilb)} F_{m',n'}$.
    \item \label{item:non-edge-0} For every $r$-ary symbol $R \in \sg$, $(m_1,\dots,m_r) \in R^{\Ms}$ and $(n_1,\dots,m_r) \not\in R^{\Ns}$, then $F_{m_1,n_1}\dots F_{m_r,n_r} = 0_{\Hilb}$.
\end{enumerate}
\end{definition}
\begin{remark}
A similar notion of quantum homomorphism for undirected simple graphs was orignally introduced in~\cite{quantumHomomorphism}.
Though every quantum homomorphism between graphs in our sense results in a quantum homomorphism in the sense of~\cite{quantumHomomorphism}, the converse does not hold, see e.g.\ \cite{karamlou2025} for details.
\end{remark}
The associated decision problem $\QHOM(\Hilb)$ is the set of pairs $(\Ms,\Ns)$ of finite $\sigma$-structures such that there exists a $\Hilb$-quantum homomorphism $\mathbf{F}\colon \Ms \xrightarrow{\Hilb} \Ns$.
\subsection{Homomorphism to Satisfiability}
In the section, we demonstrate that the standard reduction from $\sigma$-homomorphism to $\SAT{\Total}$ generalises to the quantum setting.
For brevity of presentation, we assume our signature $\sigma$ has a single $r$-ary relation symbol $R$.
For every pair of $\sg$-structures $\Ms,\Ns$, we can rewrite the conditions in Definition~\ref{def:ddim-quantum-homomorphism} as a propositional formula $\phi_{\Ms,\Ns}(\vec{p})$ with variables amongst $[\vec{p}] = \{p_{m,n} \mid m \in M, n \in N\}$.
The formula $\phi_{A,B}(\vec{p})$ is defined as the conjunction $\varphi_{\mathsf{func}}(\vec{p}) \wedge \varphi_{\mathsf{rel}}(\vec{p})$ where
\begin{align*}
  \varphi_{\mathsf{func}}(\vec{p}) &:= \bigwedge_{m \in M} \bigvee_{n \in N} \left( p_{m,n} \wedge \bigwedge_{n \not= n' \in N} \neg p_{m,n'} \right), \\
  \varphi_{\mathsf{rel}}(\vec{p}) &:= \bigwedge_{(m_1,\dots,m_r) \in R^{\Ms}} \bigwedge_{(n_1,\dots,n_r) \not\in R^{\Ns}} \neg (p_{m_1,n_1} \wedge \dots \wedge p_{m_r,n_r}).
\end{align*}
The propositional formula $\phi_{G,H}(\vec{p})$ yields the desired reduction from $\QHOM(\Hilb)$ to $\strongSAT{\Pj(\Hilb)}$ for some Hilbert space $\Hilb$.
\begin{proposition}
\label{prop:qhom-to-sat}
Let $\Ms,\Ns$ be two $\sg$-structures, then $\phi_{\Ms,\Ns}(\vec{p})$ is strongly satisfied in $\Pj(\Hilb)$ iff there exists a $\Hilb$-quantum homomorphism $\mathbf{F}\colon \Ms \xrightarrow{\Hilb} \Ns$.
\end{proposition}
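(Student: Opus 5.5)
The plan is to prove both directions by matching the conjuncts of $\phi_{\Ms,\Ns}$ with conditions \ref{item:normalised}--\ref{item:non-edge-0}. The tools are Proposition~\ref{prop:equality-bi-implication} and the fact that, by \ref{cond:ext}, every pairwise commuting family of projectors lies in a Boolean subalgebra of $\Pj(\Hilb)$. Inside such a subalgebra the usual Boolean identities hold; in particular, for orthogonal projectors the join is their sum.

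\textbf{($\Leftarrow$)} Given $\mathbf{F}$, I take the substitution $\alpha(p_{m,n}) = F_{m,n}$. The first task is to check that $\alpha$ is meaningful.
\begin{itemize}
\item Within $\varphi_{\mathsf{func}}$, only variables sharing the same $m$ are combined. Since $\sum_n F_{m,n} = I_{\Hilb}$ is a sum of projectors equal to a projector, these projectors are pairwise orthogonal (take traces of $F_{m,n}F_{m,n'}F_{m,n}\geq 0$, or use positivity). Hence they commute, and every subterm is defined.
\item Within $\varphi_{\mathsf{rel}}$, only variables whose $m$'s occur in a common tuple are combined, so \ref{item:well-defined} gives commutation.
\item The outer conjunctions combine values equal to $I_{\Hilb}$ (computed below), and $I_{\Hilb}$ commutes with everything.
\end{itemize}
Evaluating inside the Boolean subalgebra generated by $\{F_{m,n}\}_n$, I get $F_{m,n}\wedge\bigwedge_{n'\neq n}\neg F_{m,n'} = F_{m,n}$ by orthogonality, and the disjunction then equals $\sum_n F_{m,n} = I_{\Hilb}$. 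For $\varphi_{\mathsf{rel}}$, the product $F_{m_1,n_1}\cdots F_{m_r,n_r}$ of pairwise commuting projectors is their meet, which is $0_{\Hilb}$ by \ref{item:non-edge-0}, so its negation is $I_{\Hilb}$.

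\textbf{($\Rightarrow$)} Given a meaningful $\alpha$ with $\phi^{\Pj(\Hilb)}_{\Ms,\Ns}(\alpha) = I_{\Hilb}$, I set $F_{m,n} := \alpha(p_{m,n})$. Since a meet of commuting projectors is $\leq$ each of them, each conjunct must equal $I_{\Hilb}$.
\begin{itemize}
\item Meaningfulness of the subformula $p_{m_1,n_1}\wedge\dots\wedge p_{m_r,n_r}$, for every tuple of $R^{\Ms}$ and every $(n_i)$ (tuples outside $R^{\Ns}$ realise all needed pairs), yields \ref{item:well-defined}. One caveat: if $R^{\Ns}$ contains every tuple, that conjunct is empty. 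In that case I would add the Proposition~\ref{prop:sat-reduction}-style commeasurability tautologies to the formula, or observe that the paper's intent tolerates this.
\item The conjunct $\neg(\cdots) = I_{\Hilb}$ gives product $0_{\Hilb}$, which is \ref{item:non-edge-0}.
\item For \ref{item:normalised}, fix $m$. All $F_{m,n}$ are pairwise commeasurable, because each disjunct contains every $p_{m,n'}$ and the disjuncts are joined. So they lie in a common Boolean subalgebra $B$. There each disjunct $D_n = F_{m,n}\wedge\bigwedge_{n'\neq n}\neg F_{m,n'}$ satisfies $D_nD_{n'} = 0$ for $n \neq n'$, and $\bigvee_n D_n = I_{\Hilb}$. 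Since $D_n\leq F_{m,n}$ and the $D_n$ partition unity, every $x \in B$ satisfies $x = \bigvee_n (x\wedge D_n)$. Applying this to $x = F_{m,n}$: the term $F_{m,n}\wedge D_{n'}$ vanishes for $n' \neq n$ because $D_{n'}\leq\neg F_{m,n}$, so $F_{m,n} = D_n$. Thus the $F_{m,n}$ are pairwise orthogonal and their join, which is their sum, equals $I_{\Hilb}$.
\end{itemize}

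I expect the main obstacle to be the careful bookkeeping of meaningfulness: showing exactly which pairs are forced to commute in the $\Rightarrow$ direction (including the degenerate case where the second conjunction of $\varphi_{\mathsf{rel}}$ is empty), and verifying in the $\Leftarrow$ direction that \ref{item:normalised} already forces orthogonality and hence commutation within each row.
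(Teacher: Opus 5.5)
Your ($\Leftarrow$) direction is sound, and your route is the paper's: match the conjuncts of $\phi_{\Ms,\Ns}$ to \ref{item:normalised}--\ref{item:non-edge-0}. The ($\Rightarrow$) derivation of \ref{item:well-defined}, however, has a genuine gap.

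Meaningfulness only forces commeasurability of the values of the two immediate subformulas at each binary node. It does not make all variables occurring inside a meaningful subformula pairwise commeasurable. This fails in two ways in $\varphi_{\mathsf{rel}}$, writing $F_i = \alpha(p_{m_i,n_i})$:
\begin{itemize}
\item $p_{m,n}$ and $p_{m',n'}$ are combined only when $(n,n')$ sits in matching positions of some tuple outside $R^{\Ns}$.
\item Even then, a nested conjunction such as $p_1\wedge(p_2\wedge p_3)$ forces only $F_2\odot F_3$ and $F_1\odot F_2F_3$, not $F_1\odot F_2$.
\end{itemize}
A concrete case: take binary $R$, $|N|=3$, $R^{\Ns}=N^2\setminus\{(n_0,n_0)\}$, and a single tuple $(m,m')\in R^{\Ms}$. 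The only coupling between the two PVMs is then $F_{m,n_0}F_{m',n_0}=0$. Use the bases $e_0=(1,0,0)$, $e_{1,2}=(0,1,\pm1)/\sqrt2$ and $f_0=(0,1,0)$, $f_{1,2}=(1,0,\pm1)/\sqrt2$ of $\FR^3$. This strongly satisfies $\phi_{\Ms,\Ns}$, yet $e_1e_1^{T}$ and $f_1f_1^{T}$ do not commute.

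So ``tuples outside $R^{\Ns}$ realise all needed pairs'' is false well beyond the degenerate case you flagged, and $\mathbf{F}:=\alpha$ is not a $\Hilb$-quantum homomorphism. Since ($\Rightarrow$) only asks for some $\mathbf{F}$, you would need a genuinely new construction, and there is reason to doubt one exists.
\begin{itemize}
\item For graphs, $F_{g,h}F_{g',h'}=0$ already implies commutation, so $\phi_{G,H}$ encodes exactly the conditions of~\cite{quantumHomomorphism}.
\item The Remark after Definition~\ref{def:ddim-quantum-homomorphism} says those conditions do not imply a quantum homomorphism with \ref{item:well-defined}.
\end{itemize}
The paper's one-line proof makes the same unjustified identification (``\ref{item:well-defined} corresponds to $\alpha$ being meaningful'').

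The clean repair is the one you mention only for the degenerate case. For every pair $p_{m,n},p_{m',n'}$ with $m,m'$ in a common $R^{\Ms}$-tuple, conjoin the commeasurability tautology $(p\wedge q)\vee(\neg p\wedge q)\vee(p\wedge\neg q)\vee(\neg p\wedge\neg q)$ from Proposition~\ref{prop:sat-reduction}. It evaluates to $I_{\Hilb}$ under \ref{item:well-defined}, so ($\Leftarrow$) survives. It yields \ref{item:well-defined} directly in ($\Rightarrow$), and the formula stays polynomial size.

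The same fallacy appears in a smaller form in your \ref{item:normalised} argument. Meaningfulness of the $D_n$ and of $\bigvee_n D_n$ does not make the $F_{m,n}$ pairwise commeasurable, so you cannot place them in a common Boolean subalgebra $B$ at the outset. Here the conclusion is recoverable without that step:
\begin{itemize}
\item A meaningful meet lies below both arguments in the projector order, so $D_n\leq F_{m,n}$ and $D_n\leq I_{\Hilb}-F_{m,n'}$ for $n'\neq n$ hold outright.
\item Hence the $D_n$ are pairwise orthogonal, and their nested join is $\sum_n D_n=I_{\Hilb}$.
\item Then $F_{m,n}=F_{m,n}\sum_{n'}D_{n'}=D_n$, which gives \ref{item:normalised}.
\end{itemize}
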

\begin{proof}
  A quantum homomorphism $\mathbf{F} = \{F_{m,n}\}$ corresponds to the assignment $\alpha \in A^{\vec{p}}$ such that $\alpha(p_{m,n}) = F_{m,n}$.
  Unraveling the definitions of $\vee_Q$, $\wedge_Q$, and $\neg_Q$ for $Q = \Pj(\Hilb)$ from Definition~\ref{def:projector-pba}, we note that \ref{item:normalised} is equivalent to $\varphi^{Q}_{\mathsf{func}}(\alpha) = 1_{Q}$, \ref{item:well-defined} corresponds to the fact $\alpha$ is a meaningful substitution, and \ref{item:non-edge-0} is equivalent to $\varphi^{Q}_{\mathsf{rel}}(\alpha) = 1_Q$.
\end{proof}

Let $\QHOM(\FK^{\omega})$ (resp. $\QHOM(\FK^{\infty})$) denote the decision problem consisting of pairs $(\Ms,\Ns)$ for which there exists a $\Hilb$-quantum homomorphism for any finite-dimensional (resp. any) $\FK$-Hilbert space $\Hilb$.
From various results in the literature, we can conclude that $\QHOM(\FK^{< \omega})$ and $\QHOM(\FK^{\infty})$ are undecidable~\cite{quantumSchaefer,quantumIsoUndecidable}, and Proposition~\ref{prop:qhom-to-sat} exhibits a computable reduction from these problems to $\SAT{\Pj(\FK^{< \omega})}$ and $\SAT{\Pj(\FK^{\infty})}$ respectively.
Thus, we obtain the following theorem.
\begin{theorem}
\label{thm:full-dim-sat}
For $\FK \in \{\mathbb{R},\mathbb{C}\}$, $\SAT{\Pj(\FK^{< \omega})}$ and $\SAT{\Pj(\FK^{\infty})}$ are undecidable.
\end{theorem}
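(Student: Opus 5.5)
The plan is a computable many-one reduction from a known undecidable problem to each satisfiability problem. Proposition~\ref{prop:qhom-to-sat} supplies the reduction. For each pair of finite $\sg$-structures $(\Ms,\Ns)$, the formula $\phi_{\Ms,\Ns}(\vec{p})$ is computable from $(\Ms,\Ns)$, with size polynomial in $|M|\cdot|N|$ and the number of tuples. It is strongly satisfied in $\Pj(\Hilb)$ exactly when a $\Hilb$-quantum homomorphism $\Ms \xrightarrow{\Hilb} \Ns$ exists. Quantifying over $\Hilb$ ranging over finite-dimensional $\FK$-Hilbert spaces, respectively over all $\FK$-Hilbert spaces, gives
\[ (\Ms,\Ns) \in \QHOM(\FK^{<\omega}) \iff \phi_{\Ms,\Ns} \in \strongSAT{\Pj(\FK^{<\omega})}, \]
and the analogous equivalence for $\FK^{\infty}$.

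First, I would check that the strong and weak notions coincide for these classes, so that writing $\SAT{\cdot}$ is justified. Neither class contains $\One$. Take a weak witness $\phi^{Q}(\alpha) = E \neq 0$. Compress to the subspace $\im(E)$, which is a nonzero Hilbert space, finite-dimensional whenever $\Hilb$ is. The map $F \mapsto F|_{\im E}$ is a $\pBA$ morphism on the subalgebra commuting with $E$. Every $\alpha(p)$ and every subformula value commutes with $E$, since $E$ lies in the Boolean algebra they generate. The compressed assignment is therefore meaningful, and it sends $\phi$ to the identity. Even without this step, the reduction lands in the strong version, so undecidability of $\strongSAT{\cdot}$ is what is being established.

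Second, I would invoke the undecidability of $\QHOM(\FK^{<\omega})$ and $\QHOM(\FK^{\infty})$ from \cite{quantumSchaefer,quantumIsoUndecidable}. This is the main obstacle. These results are stated for operator solutions of Boolean CSPs, or for quantum isomorphism in the complex case. I would need to check that they coincide with Definition~\ref{def:ddim-quantum-homomorphism}.

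\begin{itemize}
\item Involutions $I-2F$ correspond bijectively to projectors $F$.
\item Conditions \ref{item:normalised}--\ref{item:non-edge-0} are exactly the PVM, context-commutation and constraint-annihilation conditions of a commuting-operator or finite-dimensional CSP solution.
\end{itemize}

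For the real case, I would argue that a complex solution yields a real one via the realification $\FC^d \hookrightarrow \FR^{2d}$, and conversely by complexification. The undecidable instances, such as the linear-system games of \cite{quantumSchaefer}, therefore transfer between $\FR$ and $\FC$ in both the finite-dimensional and the arbitrary setting.

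Composing the transfer with the computable map $(\Ms,\Ns) \mapsto \phi_{\Ms,\Ns}$ gives the theorem.
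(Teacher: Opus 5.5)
Your main argument is the paper's proof: undecidability of $\QHOM(\FK^{<\omega})$ and $\QHOM(\FK^{\infty})$ from \cite{quantumSchaefer,quantumIsoUndecidable}, composed with the computable map $(\Ms,\Ns)\mapsto\phi_{\Ms,\Ns}$ of Proposition~\ref{prop:qhom-to-sat}. Your realification/complexification remark is a correct way to transfer undecidability between $\FR$ and $\FC$, a point the paper leaves to the literature.

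The auxiliary compression step, however, is wrong as stated. A meaningful substitution only requires each subformula value to commute with its sibling in the parse tree. So there is no single Boolean algebra generated by all the values, and $E=\phi^{Q}(\alpha)$ need not commute with the variable values.

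For instance, take $\phi=(p\wedge q)\vee r$ in $\Pj(\FR^2)$:
\begin{itemize}
\item let $\alpha(p)$ and $\alpha(q)$ be the projectors onto $e_1$ and $e_2$;
\item let $\alpha(r)$ be the projector onto $\FR(e_1+e_2)$.
\end{itemize}
Then $\alpha(p)\alpha(q)=0$ commutes with $\alpha(r)$, so $\alpha$ is meaningful and $E=\alpha(r)\neq 0$. But $E$ does not commute with $\alpha(p)$, so $F\mapsto F|_{\im E}$ is not defined on $\alpha(p)$.

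Hence you have not shown that weak and strong satisfiability coincide for these classes. Your fallback is the right reading: drop that step and conclude undecidability of $\strongSAT{\Pj(\FK^{<\omega})}$ and $\strongSAT{\Pj(\FK^{\infty})}$. That is exactly what the paper's reduction delivers as well; the paper does not verify the weak/strong coincidence for these classes either.
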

In fact, it follows from the proof of Theorem \ref{thm:full-dim-sat}, Proposition~\ref{prop:realdsat-er}, and Proposition~\ref{prop:complexdsat-er} that $\SAT{\Pj(\FK^{< \omega})}$ is complete for the class of recursively enumerable languages.

\newcommand{\X}{\mathcal{X}}
\begin{remark}
  \label{rem:connection-quantum-schaefer}
  A result related to Theorem~\ref{thm:full-dim-sat} was proved in~\cite{quantumSchaefer}.
  Here, they did not use the formulation of partial Boolean algebras, but instead operator assignments for constraint satisfaction problems with Boolean domain $\{-1,1\}$.
  In this formulation, a propositional formula $\phi(p_1,\dots,p_n)$ in CNF form is turned into a constraint satisfaction problem $\X_{\phi} = (X_{\phi},C_{\phi})$ with variables $X = \{x_1,\dots,x_n\}$, and constraints
  \[C_{\phi} = \{(Z_{c},R_{c}) \mid c \text{ clause in } \phi \}, \]
  where each clause $c$ of $\phi$ with variables $p_{i_1},\dots,p_{i_r}$ gives rise to a constraint with scope $Z_{c} = \{x_{i_1},\dots,x_{i_r}\}$ and constraint $R_{c} \in \{-1,+1\}^{r}$ consisting of the set of satisfying tuples of $c$.
  An assignment $f\colon X_{\phi} \rightarrow \{-1,+1\}$ is satisfying of the CSP $\X_{\phi}$ if $(f(x_{i_1}),\dots,f(x_{i_n}) \in R_{c}$ for all clauses $c$.
  This satisfying assignment is in fact a substitution $\phi^{\Two}\colon \Two^{\vec{p}}  \rightarrow \Two$.
  To generalise this to operator assignments, they represent a constraint $(Z,R)$ with scope $Z = \{x_{i_1},\dots,x_{i_r}\}$ as a characteristic multilinear polynomial $P_{R}(x_{i_1},\dots,x_{i_r})$ with the property that $P_{R}(a_{i_1},\dots,a_{i_m}) = -1$ if $(a_{i_1},\dots,a_{i_r}) \in R_{c}$ for $a_{i_z} \in \{-1,1\}$ and $+1$ otherwise.
  An operator assignment for $\Hilb$ is just an assignment $f\colon X_{\phi} \rightarrow \Inv(\Hilb)$ of involutions $\Inv(\Hilb)$ on $\Hilb$ to each element of $X_{\phi}$.
  From such operator assignments, we construct meaningful substitutions that witness the strong satisfiability of $\phi$ in $\Inv(\Hilb)$.
\end{remark}

\subsection{Satisfiability to Homomorphism}
In this section, we show that the standard reduction from $\SAT{\Total}$ to deciding if there exists a $\sg$-homomorphism generalises to the quantum setting.
By Proposition~\ref{prop:cnf-pba}, we can restrict our reduction to the case of propositional formulas in 3CNF form.
Given that the formula $\phi(\vec{p})$ is in CNF form, recall the classical reduction constructs two $\sg$-structures $\Vs_{\phi}$ and $\Ts_{\phi}$ in a signature $\sg_{\phi}$.
The signature $\sg_{\phi}$ has a relational symbol $R_{c}$ for every clause $c$ in $\phi(\vec{p})$ with arity equal to the number of literals, i.e.\ $\leq 3$, in $c$.
The universe of $\Vs_{\phi}$ is the set of variables $[\vec{p}] = \{p_1,\dots,p_n\}$ and for the clause $c = l_{i_1} \vee l_{i_2} \vee l_{i_3}$ where $l_{i_j}$ is a literal in $p_{i_j}$, the interpretation of $R^{\Vs_\phi}_{c}$ is the singleton $\{(p_{i_1},p_{i_2},p_{i_3})\}$.
The universe of $\Ts_{\phi}$ is $\{0,1\}$ and for the clause $c$, $R^{\Ts_\phi}_{c} = \{0,1\}^3 \backslash \{(\partial(l_{i_1}),\partial(l_{i_2}),\partial(l_{i_3}))\}$ where $\delta(l_{i_j}) = 0$ if $l_{i_j}$ is a positive literal and $\delta(l_{i_j}) = 1$ if $l_{i_j}$ is a negative literal.

In order to generalise this construction the quantum case, we simply have to replace $\Ts$ with the structure $\Qd(\Ts)$ where $\Qd$ is a functorial (in fact monadic) construction from~\cite{quantumMonad}.
We recall the $\Qd$ construction here.
For any $\sg$-structure $\Ms$, the universe of $\Qd(\Ms)$ is the set of $M$-labelled PVMs:
\[ \Qd(M) = \{h\colon M \rightarrow \Pj(\FK^d) \mid \sum_{m \in M} h(m) = I_d, \textsf{supp}(h) \text{ is finite }\}  \]
where $\textsf{supp}(h)$ is the set of $m \in M$ such that $h(m) \not= 0_d$.
Elements $h \in \Qd(\Ms)$ can be expressed as formal sums, i.e.\ $\sum_{m \in M} h(m). m$.
For every relational symbol $R \in \sigma$ of arity $r$, $(h_1,\dots,h_r) \in R^{\Qd(\Ms)}$ if
\begin{enumerate}[label=(QR\arabic*)]
    \item \label{item:qr-commute} for all $i,j \in \setr$ and $m,m' \in M$, $h_i(a)\odot_{\Pj(\FK^d)} h_j(m')$.
    \item \label{item:qr-comp} if $(m_1,\dots,m_r) \not\in R^{\Ms}$, then $\prod_{i \in \setr} h_i(m_i) = 0_d$.
\end{enumerate}
The relevant property of $\Qd$ we use is \cite[Proposition 8]{quantumMonad}: there exists an $\FK^d$-quantum $\sg$-homomorphism $\Ms \xrightarrow{\FK^d} \Ns$ if, and only if, there exists an ordinary $\sg$-homomorphism $\Ms \rightarrow \Qd(\Ns)$.

From a 3CNF propositional formula $\phi(\vec{p})$, we construct a signature $\sigma_{\phi}$ and $\sigma_{\phi}$-structures $\Vs_{\phi},\Ts_{\phi}$ satisfying the following proposition.
\begin{proposition}
\label{prop:sat-to-qhom}
Let $\phi(\vec{p})$ be a 3CNF-formula, $\FK \in \{\FR,\FC\}$, $d \geq 1$, and $\Hilb = \FK^d$.
$\Pj(\Hilb)$ satisfies $\phi(\vec{p})$ if, and only if, there is a quantum homomorphism $\mathbf{F}\colon \Vs_{\phi} \xrightarrow{\Hilb} \Ts_{\phi}$.
\end{proposition}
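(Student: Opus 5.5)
We prove this by routing through \cite[Proposition 8]{quantumMonad}: a quantum homomorphism $\Vs_{\phi} \xrightarrow{\Hilb} \Ts_{\phi}$ exists iff there is an ordinary $\sg_\phi$-homomorphism $\Vs_\phi \rightarrow \Qd(\Ts_\phi)$. However, it is cleaner to work directly with Definition~\ref{def:ddim-quantum-homomorphism}. Since $T = \{0,1\}$, a family $\mathbf{F} = \{F_{p,b}\}$ satisfying \ref{item:normalised} is the same thing as a single projector $E_p := F_{p,1}$ for each variable $p$, with $F_{p,0} = \ocomp{E_p}$. So the plan is to identify assignments $\alpha(p) = E_p$ with such families and check that the remaining conditions coincide.

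For ($\Leftarrow$), suppose $\mathbf{F}$ is a quantum homomorphism and set $\alpha(p_i) = F_{p_i,1}$. Every clause $c$ has its variables in the tuple $R_c^{\Vs_\phi}$, so \ref{item:well-defined} makes $F_{p,b}$ and $F_{p',b'}$ commute for all variables $p,p'$ of $c$. Hence $\alpha$ restricted to each clause is meaningful, and the literals of $c$ lie in one Boolean subalgebra by \ref{cond:ext}. Condition \ref{item:non-edge-0} says that the product of the ``falsifying'' projectors vanishes: $F_{p_{i_1},\delta(l_{i_1})}\cdots F_{p_{i_r},\delta(l_{i_r})} = 0$. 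Since $F_{p,\delta(l)}$ is exactly $\neg_Q$ of the value of the literal $l$, this reads $\neg_Q(l_1 \vee l_2 \vee l_3)^Q = 0$ by De Morgan inside that subalgebra, so each clause takes value $1_Q$.

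The main obstacle is the top-level conjunction of clauses. For $\phi = \bigwedge_c c$ to be a meaningful substitution, the clause values must be pairwise commeasurable, and variables in different clauses need not commute. This is resolved by noting that every clause value equals $1_Q$, which commutes with everything and satisfies $1_Q \wedge 1_Q = 1_Q$. Inductively over the conjunction tree, every intermediate conjunct is therefore $1_Q$ and defined, so $\phi^Q(\alpha) = 1_Q$.

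For ($\Rightarrow$), take $\alpha \in \mdom{Q}{\phi}$ with $\phi^Q(\alpha) = 1_Q$ and define $F_{p,1} = \alpha(p)$ and $F_{p,0} = \ocomp{\alpha(p)}$. Condition \ref{item:normalised} is immediate. For \ref{item:well-defined}, elements co-occurring in a relational tuple of $\Vs_\phi$ are variables of a common clause; meaningfulness of the disjunctions in that clause forces their literals, and hence the underlying projectors and their complements, to commute. By the standard Boolean fact that $a \wedge b = 1$ implies $a = b = 1$, applied along the conjunction tree inside the total subalgebras given by \ref{cond:ext}, each clause has value $1_Q$. Running the De Morgan computation above in reverse then yields \ref{item:non-edge-0} for the unique falsifying tuple of each clause.

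The remaining bookkeeping concerns clauses with fewer than 3 literals or a repeated variable. These are handled by the arity convention of $R_c$; for a repeated variable, $F_{p,0}F_{p,1} = 0$ already holds.
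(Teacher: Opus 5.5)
Your ($\Leftarrow$) direction is sound, and your handling of the top-level conjunction via clause values $1_Q$ is more careful than the paper's. The ($\Rightarrow$) direction, however, fails at \ref{item:well-defined}, and the failure cannot be repaired because the forward implication is false as stated. A $3$-clause is a nested binary disjunction, say $(l_1 \vee l_2) \vee l_3$. By Definition~\ref{def:substitution}, its meaningfulness only yields $l_1 \odot_Q l_2$ and $(l_1 \vee_Q l_2) \odot_Q l_3$; it does not yield $l_1 \odot_Q l_3$ or $l_2 \odot_Q l_3$. If $l_1 \vee_Q l_2 = 1_Q$, then $l_3$ is entirely unconstrained, whereas \ref{item:well-defined} requires the projectors of all variables in the clause's tuple to commute. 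The paper's own proof makes the same claim (that $\alpha(p_{i_z})$ commutes with $\alpha(p_{i_w})$) and shares this gap. Your closing remark on repeated variables is exactly where the problem surfaces.

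Here is a counterexample. Let $\kappa$ be the 3CNF formula with clauses $\neg p_u \vee \neg p_v$ for each edge of $G_{\CKS}$ and $p_a \vee p_b \vee p_c$ for each basis $\{a,b,c\} \in \mathcal{B}_{\CKS}$. Put $\phi = \kappa \wedge \bigwedge_{u \neq v} \big((p_u \vee \neg p_u) \vee p_v\big)$.

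\emph{$\phi$ is strongly satisfied in $\Pj(\FR^3)$.} Send each $p_v$ to the rank-$1$ projector $E_v$ onto $v \in \CKS$. Each added clause evaluates to $1_Q \vee_Q E_v = 1_Q$, whether or not $E_u$ and $E_v$ commute.

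\emph{No $\FR^3$-quantum homomorphism $\Vs_\phi \to \Ts_\phi$ exists.} Condition \ref{item:well-defined} on the tuples $(p_u,p_u,p_v)$ forces all $F_{p_v,1}$ to commute pairwise, so they lie in one Boolean subalgebra. Composing with a homomorphism from that subalgebra to $\Two$, condition \ref{item:non-edge-0} on the clauses of $\kappa$ produces a non-contextual colouring of $G_{\CKS}$, a contradiction.

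Repeated variables can be avoided by replacing $\neg p_u$ with a fresh $y_u$ constrained by $(p_u \vee y_u) \wedge (\neg p_u \vee \neg y_u)$. So the statement needs an extra hypothesis forcing variables that share a clause to commute. One example is that every pair of variables in a $3$-clause also occurs together in a $2$-clause, as in $\varphi_G$ when the maximum cliques of $G$ are triangles. Alternatively, preprocess by replacing $l_1 \vee l_2 \vee l_3$ with $((l_1 \vee l_2) \vee \neg s) \wedge (s \vee \neg l_1) \wedge (s \vee \neg l_2) \wedge (s \vee l_3)$ for a fresh $s$. This forces $s = l_1 \vee l_2$ and makes every tuple pairwise commeasurable.
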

\begin{proof}
  By \cite[Proposition 8]{quantumMonad}, there exists an $\FK^d$-quantum $\sg$-homomorphism $\Ms \xrightarrow{\FK^d} \Ns$ if, and only if, there exists an ordinary $\sg$-homomorphism $\Ms \rightarrow \Qd(\Ns)$.
  Therefore, we proceed by showing that $\Pj(\FK^d)$ satisfies $\phi(\vec{p})$ iff there exists a $\sg$-homomorphism $\sg$-homomorphism $\Ms \rightarrow \Qd(\Ns)$
  $\Rightarrow$ Suppose $Q = \Pj(\FK^d)$ satisfies $\phi(\vec{p})$, then there exists a meaningful substitution $\alpha \in A^{\phi(\vec{p})}$ such that $\phi^{Q}(\alpha) = I_d$.
From $\alpha$, we can define a homomorphism $f_{\alpha}\colon \Vs_\phi \rightarrow \Qd(\Ts_\phi)$ where $f_{\alpha}(p_i)$ is the PVM expressed by the formal sum $\alpha(p_i) . 1 + (I_d - \alpha(p_i)) . 0$.
We must show that $f_v(p_i)$ is homomorphism.
Suppose $(p_{i_1},\dots,p_{i_r}) \in R^{\Vs_\phi}_{c}$ and we aim to show that $(f_{v}(p_{i_1}),\dots,f_{v}(p_{i_r})) \in R^{\Qd(\Ts_\phi)}_{c}$.
Since $(p_{i_1},\dots,p_{i_r}) \in R^{\Vs_\phi}_{c}$, by construction, $c = l_{i_1} \vee \dots \vee l_{i_r}$ is a clause of $\phi(p_1,\dots,p_n)$ where $l_{i_z}$ is literal in variable $p_{i_z}$.
By $Q^{\phi(\vec{p})}$ being a meaningful domain and $Q^{\phi(\vec{p})} \subseteq Q^{c(\vec{p})}$, $\alpha(p_{i_z})$ commutes with $\alpha(p_{i_w})$ for all $w,z \in \setr$, thus $(f_{v}(p_{i_1}),\dots,f_{v}(p_{i_r}))$ satisfies condition~\ref{item:qr-commute}.
To show $(f_{v}(p_{i_1}),\dots,f_{v}(p_{i_r}))$ satisfies condition~\ref{item:qr-comp}, we note that only tuple that is not in $R^{\Ts_{\phi}}_{c}$ is $ (\partial(l_{i_1}),\dots,\partial(l_{i_r}))$ where $\partial(l_{i_z}) = 0$ if $l_{i_z}$ is a positive literal and $\partial(l_{i_z}) = 1$ if $l_{i_z}$ is a negative literal.
Thus, $f_{v}(p_{i_z})(\partial(l_{i_z}))$ is $I - \alpha(p_{i_z})$ if $l_{i_z}$ is positive and $f_{v}(p_{i_z})(\partial(l_{i_z}))$ is $\alpha(p_{i_z})$ if $l_{i_z}$ is negative.
Since $c = l_{i_1} \vee \dots \vee l_{i_r}$ is satisfied, $m(\neg c) = m( \neg l_{i_1}  \wedge \dots \wedge \neg l_{i_r}) = 0_d$, so it must be the case that $\prod_{z \in \setr} f_{v}(p_{i_z})(\partial(l_{i_z})) = 0_{d}$.

$\Leftarrow$ Given a $\sigma_{\phi}$-homomorphism $f\colon V_{\phi} \rightarrow \Qd(\Ts_{\phi})$, we can define a meaningful substitution $\alpha \in Q^{\phi(\vec{p})}$ where $\alpha(p_i) = f(p_i)(1)$.
Condition~\ref{item:qr-commute}, and $f$ being a $\sigma$-morphism ensures that $v \in Q^{c(\vec{p})}$ for all clauses $c \in \phi$.
Condition~\ref{item:qr-comp}, applied to $R^{\Ts_{\phi}}_{c}$, ensures that $(\neg c)^{Q}(\alpha) = 0_d$ for every clause $c$ of $\phi$ and thus $\phi^{Q}(\alpha) = I_d$.
\end{proof}

\begin{theorem}
  Let $b(\FR) = 3$ and $b(\FC) = 4$.
  For every $d \geq b(\FK)$, $\QHOM(\FK^d)$ is $\ER$-complete.
\end{theorem}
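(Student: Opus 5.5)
We prove membership and hardness separately, in both cases by passing through the satisfiability problem $\strongSAT{\Pj(\FK^d)}$ with the dimension held fixed. Both translations, Proposition~\ref{prop:qhom-to-sat} and Proposition~\ref{prop:sat-to-qhom}, keep the Hilbert space $\FK^d$ the same, so the $\ER$ bounds from Section~\ref{sec:fixed-quantum-sat} transfer directly.

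\textbf{Membership.} Given finite $\sg$-structures $\Ms,\Ns$, we build $\phi_{\Ms,\Ns}(\vec{p})$ as in Proposition~\ref{prop:qhom-to-sat}. By that proposition, $(\Ms,\Ns)\in\QHOM(\FK^d)$ if and only if $\phi_{\Ms,\Ns}\in\strongSAT{\Pj(\FK^d)}$. We must check the reduction is polynomial-time. The formula $\varphi_{\mathsf{func}}$ has size $O(|M||N|^2)$. The formula $\varphi_{\mathsf{rel}}$ has one conjunct for each pair of a tuple in $R^{\Ms}$ and a tuple outside $R^{\Ns}$, giving at most $|R^{\Ms}|\cdot|N|^r$ conjuncts. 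Since the signature is finite and fixed, $r$ is bounded, so this is polynomial in the input. If one insists on unbounded arities, one instead writes the constraints directly as matrix equations in the existential sentence. Membership then follows from Proposition~\ref{prop:realdsat-er} for $\FK=\FR$, and from Proposition~\ref{prop:complexdsat-er} together with Proposition~\ref{prop:ec-to-er} for $\FK=\FC$.

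\textbf{Hardness.} We reduce $\strongSAT{\Pj(\FK^d)}$, which is $\ER$-hard for $d\ge b(\FK)$ by Theorem~\ref{thm:fixed-d-real-sat} and Theorem~\ref{thm:fixed-d-complex-sat}, to $\QHOM(\FK^d)$.
\begin{enumerate}
  \item Apply the Tseitin transformation. By Proposition~\ref{prop:cnf-pba}, strong satisfiability in $\Pj(\FK^d)$ is preserved.
  \item Convert the resulting CNF to 3CNF. Tseitin's clauses already have at most 3 literals, apart from the singleton clause $q_\phi$.
  \item Build $\Vs_\phi$ and $\Ts_\phi$ in polynomial time, and apply Proposition~\ref{prop:sat-to-qhom}.
\end{enumerate}

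One subtlety is that the signature $\sg_\phi$ depends on the input. To obtain a single fixed signature, we bundle clauses by polarity pattern: at most $8$ relation symbols of arity $3$, plus lower-arity ones. This is sound because conditions \ref{item:qr-commute} and \ref{item:qr-comp} are checked tuple by tuple. A second subtlety arises when a clause repeats a variable. The tuple then has repeated entries, so it is cleaner to pad or deduplicate literals before building the structures.

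\textbf{Main obstacle.} We expect the main difficulty to lie in verifying the $\Leftarrow$ direction of Proposition~\ref{prop:sat-to-qhom}, namely that the meaningful domain is respected. The commutation conditions certified by \ref{item:qr-commute} relate only variables that share a clause. But $\phi$, being a conjunction of clauses, requires commeasurability between whole clause values. One must argue that each clause value evaluates to $I_d$, which commutes with everything, so the top-level conjunctions remain meaningful. We would prove this explicitly by induction over the conjunction, using that $1_Q$ commutes with everything.
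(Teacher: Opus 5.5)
Your proposal is correct and follows the paper's proof. Membership comes from composing Proposition~\ref{prop:qhom-to-sat} with Propositions~\ref{prop:realdsat-er} and~\ref{prop:complexdsat-er}, and hardness from combining Theorems~\ref{thm:fixed-d-real-sat} and~\ref{thm:fixed-d-complex-sat} with the Tseitin step (Proposition~\ref{prop:cnf-pba}) and Proposition~\ref{prop:sat-to-qhom}, all at fixed dimension $d$. Your extra checks are sound refinements of points the paper leaves implicit: that $\varphi_{\mathsf{rel}}$ has polynomial size only when arity is bounded, the bundling of clauses by polarity into a fixed signature, and the observation that each clause evaluates to $I_d$, so the outer conjunctions are meaningful.
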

\begin{proof}
  Theorem~\ref{thm:fixed-d-real-sat} and Theorem~\ref{thm:fixed-d-complex-sat} demonstrate that $\strongSAT{\Pj(\FK^d)}$ is an $\ER$-complete problem.
  Proposition~\ref{prop:qhom-to-sat} is a polynomial-time reduction from $\QHOM(\FK^d)$ to $\strongSAT{\Pj(\FK^d)}$.
  Proposition~\ref{prop:sat-to-qhom} is a polynomial-time reduction from $\strongCNF{\Pj(\FK^d)}$ and $\strongSAT{\Pj(\FK^d)}$ (by Proposition~\ref{prop:cnf-pba}) to $\QHOM(\FK^d)$.
\end{proof}

\section{Conclusion}
The Cook-Levin theorem is a classical result of computational complexity that tells us that the problem of deciding whether a propositional formula is satisfiable is $\NP$-complete.  Or, dually, the class of propositional tautologies is $\cNP$-complete.   What the Kochen-Specker theorem tells us is that if we interpret the propositions not as classical truth values but as measurement outcomes in a quantum system, then not every classical tautology is always true.  Equivalently, there are propositional formulas that are satisfiable in such systems that are not classically satisfiable.  What is then the complexity of the class of satisfiable formulas?  This is the question that we set out to address.

When we interpret satisfiability to mean satisfiable in some non-trivial partial Boolean algebra, the problem is again $\NP$-complete.  The hardness is a direct consequence of the $\NP$-hardness of classical satisfiability but the upper bound is non-trivial to establish as the natural witness to satisfiability can be of doubly exponential size.  Our main contribution here, in Theorem~\ref{thm:allsat}, is to construct a suitable polynomial-size witness that can be efficiently verified.

If we restrict ourselves to the partial Boolean algebras that motivated the question, namely those of projectors on a Hilbert space, the picture is more complicated.  We show that for any finite dimension $d \geq 4$, the problem of deciding satisfiability in such a projector space of dimension $d$ is complete for the existential theory of the reals.  However, the problem of determining whether a formula is satisfiable in the algebra of projectors of some finite-dimensional Hilbert space is undecidable.
The $\ER$-completeness result situates this problem in an interesting place in the complexity landscape and relates it to a number of natural problems in computational geometry.  This opens up avenues of further research relating quantum complexity classes to such problems.

\printbibliography

\end{document}